\let\C\relax
\newcommand{\epsi}{\varepsilon}
\newcommand{\eps}{\varepsilon}
\newcommand{\E}{{\mathrm{e}}}
\newcommand{\I}{\mathrm{i}}
 \newcommand{\R}{ \mathbb{R} }
  \newcommand{\Sph}{ \mathbb{S} }
\renewcommand{\S}{\mathbb{S}}
\newcommand{\C}{ \mathbb{C} }
\newcommand{\N}{ \mathbb{N} }
\newcommand{\D}{\mathrm{d}}
\newcommand{\mcE}{\mathcal{E}}
\newcommand{\mfF}{\mathfrak{F}}
\newcommand{\mcV}{\mathcal{V}}
\newcommand{\fBCS}{\mathsf{f}_{\textnormal{BCS}}}
 \newcommand{\norm}[1]{\left\Vert #1 \right\Vert} 
 \newcommand{\abs}[1]{\left\vert #1 \right\vert}
\newcommand{\longip}[3]{\left\langle #1 \middle\vert #2 \middle\vert #3 \right\rangle}
\DeclareMathOperator{\sgn}{sgn}
\DeclareMathOperator{\spec}{spec}
\DeclareMathOperator{\mathspan}{span}
\newcommand{\ud}{\,\textnormal{d}}
\newcommand{\ee}{\mathrm{e}}
\theoremstyle{plain}
\newtheorem{thm}{Theorem}[section]
\newtheorem{lem}[thm]{Lemma}
\newtheorem{prop}[thm]{Proposition}
\newtheorem{lemma}[thm]{Lemma}
\theoremstyle{definition}
\newtheorem{rmk}[thm]{Remark}
\newtheorem{assumption}[thm]{Assumption}
\newtheorem{remark}[thm]{Remark}
\crefname{thm}{theorem}{theorems}
\crefname{problem}{problem}{problems}
\crefname{lemma}{lemma}{lemmas}
\crefname{lem}{lemma}{lemmas}
\crefname{cor}{corollary}{corollaries}
\crefname{prop}{proposition}{propositions}
\crefname{conj}{conjecture}{conjectures}
\crefname{defn}{definition}{definitions}
\crefname{defi}{definition}{definitions}
\crefname{note}{note}{notes}
\crefname{ex}{example}{examples}
\crefname{remark}{remark}{remarks}
\crefname{rmk}{remark}{remarks}
\crefname{notation}{notation}{notations}
\crefname{assumption}{assumption}{assumptions}
\crefname{claim}{claim}{claims}
\crefname{claim*}{claim}{claims}
\title{Universal behavior of the BCS energy gap}
\author{Joscha Henheik\footnote{\href{mailto:joscha.henheik@ist.ac.at}{joscha.henheik@ist.ac.at}}~~and Asbjørn Bækgaard Lauritsen\footnote{\href{mailto:alaurits@ist.ac.at}{alaurits@ist.ac.at}} \\ IST Austria, Am Campus 1, 3400 Klosterneuburg, Austria}
\begin{document}
\maketitle
\begin{abstract}
	We consider the BCS 
 energy gap $\Xi(T)$ (essentially given by $\Xi(T) \approx \Delta(T, \sqrt\mu)$, the BCS order parameter) 
  at all temperatures $0 \le T \le T_c$ up to the critical one, $T_c$, 
  and show that, in the limit of weak coupling, the ratio 
   $\Xi(T)/T_c$ 
  is given by a universal function of the relative temperature $T/T_c$. On the one hand, this recovers a recent result by Langmann and Triola 
  (Phys. Rev. B 108.10 (2023))
  on three-dimensional $s$-wave superconductors for temperatures bounded uniformly away from $T_c$. On the other hand, our result lifts these restrictions, as we consider arbitrary 
	spatial dimensions $d \in \{1,2,3\}$, discuss superconductors with non-zero angular momentum (primarily in two dimensions), and treat the perhaps physically most interesting (due to the occurrence of the superconducting phase transition) regime of temperatures close to $T_c$.  
~\\~ \\
{
	\bfseries
	Keywords:
}
BCS theory, Ginzburg--Landau theory, energy gap, universality
\\ 
{
	\bfseries
	Mathematics subject classification: 
}
81Q10, 46N50, 82D55
\end{abstract}

\section{Introduction} \label{sec:intro}
The Bardeen--Cooper--Schrieffer (BCS) \cite{bcs.original} theory of superconductivity exhibits many interesting features.
Notably it predicts, for $s$-wave superconductors (i.e.~those with angular momentum $\ell = 0$ and a radially symmetric gap function), 
that the superconducting energy gap $\Xi$ {(essentially given by $\Xi \approx \Delta(\sqrt\mu)$, the BCS order parameter) }is proportional to the critical temperature $T_c$
with a universal proportionality constant independent of the microscopic details of the electronic interactions, i.e.~the specific superconductor. 
At zero temperature, the claimed universality is the (approximate) formula $\Xi/T_c \approx \pi e^{-\gamma} \approx 1.76$ with $\gamma \approx 0.57$ the 
Euler--Mascheroni constant, a property which is well-known in the physics literature \cite{nozieres.schmitt-rink,bcs.original}. More recently, based on the variational formulation of BCS theory, mostly developed by Hainzl and Seiringer with others \cite{hainzl.hamza.seiringer.solovej, frank.hainzl.naboko.seiringer, hainzl.seiringer.review.08, hainzl.seiringer.16}, it has been put on rigorous grounds in various (physically quite different) limiting regimes 
\cite{frank.hainzl.naboko.seiringer,Hainzl.Seiringer.2008,hainzl.seiringer.scat.length,lauritsen.energy.gap.2021,Henheik.2022,Henheik.Lauritsen.2022,Henheik.Lauritsen.ea.2023} (see Section \ref{subsubsec:T=0} for details). 
The general picture in all these works is that the universal behavior appears in a limit where ``superconductivity is weak'', meaning that $T_c$ is small.

The predicted universality at \emph{positive} temperature is notably less studied. 
It is expected that the ratio $\Xi(T)/T_c$ is given by some universal function of the relative temperature $T/T_c$ \cite{leggett2006quantum,Langmann.Triola.2023},
see \Cref{fig:1}.
For three-dimensional superconductors,\footnote{In \cite{Langmann.Triola.2023}, only the three-dimensional case is considered explicitly. However, their arguments seem to be easily extendable to handle also the cases of one and two-dimensional superconductors.} this has recently been shown in \cite{Langmann.Triola.2023} (building on ideas of \cite{langmann.et.al.2019}) 
for temperatures uniformly in an interval $[0,(1-\eps)T_c]$ for any $\eps > 0$
in an appropriate limit where $T_c$ is small. 
The perhaps most interesting regime of temperatures, however, are those close to the critical temperature, due to the phase transition occurring there. 
For such temperatures one expects\footnote{Historically, the first article suggesting the square root behavior near $T_c$ is by Buckingham \cite{Buckingham}. In \cite[Eq.~(3.31)]{bcs.original}, BCS verified this in their original model, however, with the numerical constant given by $3.2$ instead of $C_{\rm univ} \approx 3.06$. } the behavior \cite[Eq.~(3.54)]{tinkham2004introduction}
\begin{equation}\label{eqn.univ.sqrt.intro}
\Xi(T)/T_c \approx C_{\textnormal{univ}} \sqrt{1 - T/T_c},
\qquad 
C_{\textnormal{univ}} = \sqrt{\frac{8\pi^2}{7\zeta(3)}} \approx 3.06.
\end{equation}
Notably, the critical exponent $1/2$ (i.e.~the order parameter $\Delta(\sqrt\mu)\approx \Xi$ vanishing as a square root) agrees with the prediction from the phenomenological Landau theory \cite{Landau:1937obd} for second order phase transitions (not to be confused with \emph{Ginzburg}-Landau theory of superconductivity \cite{Ginzburg:1950sr, Gorkov.1959, de2018superconductivity}) in mean-field systems.

\begin{figure}[htb]
	\centering
    \includegraphics[width=0.8\textwidth]{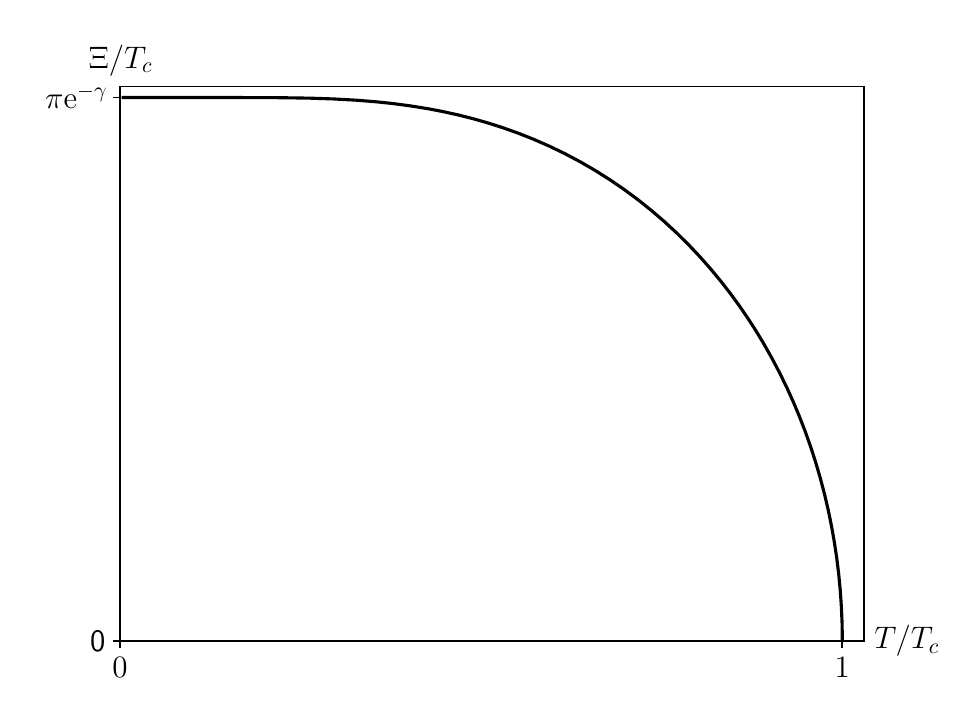}
	\caption[]{The ratio of the BCS energy gap and the critical temperature, $\Xi/T_c$, is well approximated by a \emph{universal function} of the relative temperature $T/T_c$, which is given by $\fBCS(\sqrt{1-T/T_c})$ with $\fBCS$ defined in \eqref{eq:fBCS} below. At $T=0$, it approaches the well-known constant $\pi \mathrm{e}^{-\gamma} \approx 1.76$, with $\gamma \approx 0.57$ being the Euler-Mascheroni constant.}
	\label{fig:1}
\end{figure}

In this paper we extend the previously shown universality in three important directions: Firstly, we consider all spatial dimensions $d \in \{1,2,3\}$. Secondly, we treat the full range of temperatures $0\leq T \leq T_c$. Thirdly, we extend the result to the case of non-zero angular momentum in two dimensions, in particular proving the formula in \eqref{eqn.univ.sqrt.intro}.
Interestingly the case of non-zero angular momentum in two dimensions has the exact same universal behavior as $s$-wave superconductors in any dimensions: Independently of the angular momentum we find the same universal function describing the ratio $\Xi(T)/T_c$.
This is substantially different from the three-dimensional case, where one still expects some sort of universal behavior to occur, 
only the universal function strongly depends on the angular momentum, see, e.g., \cite{Poole.Farach.ea.2007} and Remark \ref{rmk:3dangmom} below.

One of the central ideas in the analysis of temperatures close to the critical temperature is the use of Ginzburg-Landau (GL) theory.
In the physics literature it is well-known that for temperatures close to the critical 
BCS theory is well-approximated by GL theory \cite{Gorkov.1959}.
This correspondence has been studied, and put on rigorous grounds, quite recently in the mathematical physics literature
\cite{Frank.Hainzl.ea.2012,Frank.Lemm.2016, deuchert2023microscopic, deuchert2023microscopic2}. See Section \ref{subsubsec:GLtheory} for more details.

\subsection{Mathematical formulation of BCS theory}
We consider a gas of fermions in $\R^d$ for $d = 1,2,3$ at temperature $T > 0$ and chemical potential $\mu > 0$. The interaction is described by a two-body, real-valued and reflection-symmetric potential $V \in L^1(\R^d)$, for which we assume the following. 
\begin{assumption} \label{ass:basic}
	We have that $V \in L^{p_V}(\R^d)$ for $p_V =1$ if $d = 1$, $p_V \in (1, \infty)$ if $d = 2$, or $p_V = 3/2$ if $d = 3$. 
\end{assumption}
A BCS state $\Gamma$ is given by a pair of functions $(\gamma, \alpha)$ and can be conveniently represented as a $2 \times 2$ matrix valued Fourier multiplier on $L^2(\R^d) \oplus L^2(\R^d)$ of the form
\begin{equation} \label{eq:Gamma}
	\hat{\Gamma}(p) = \begin{pmatrix}
		\hat{\gamma}(p) & \hat{\alpha}(p)\\ \overline{\hat{\alpha}(p)} & 1- \hat{\gamma}(p)
	\end{pmatrix}
\end{equation}
for all $p \in \R^d$. Here, $\hat{\gamma}(p)$ denotes the Fourier transform of the one particle density matrix and $\hat{\alpha}(p)$ is the Fourier transform of the Cooper pair wave function. We require reflection symmetry of $\hat{\alpha}$, i.e.~$\hat{\alpha}(-p) = \hat{\alpha}(p)$, as well as $0 \le \hat{\Gamma}(p) \le 1$ as a matrix. 
Recall the definition of the BCS free energy functional 
\cite{hainzl.seiringer.review.08,hainzl.hamza.seiringer.solovej,hainzl.seiringer.16,Hainzl.Seiringer.2008,hainzl.seiringer.scat.length,Henheik.Lauritsen.2022,Henheik.2022,lauritsen.energy.gap.2021}, which is given by
\begin{equation} \label{eq:functional}
	\mathcal{F}_T[\Gamma] := \int_{\R^d}(p^2 - \mu) \hat{\gamma}(p) \D p - T S[\Gamma] + \int_{\R^d} V(x) |\alpha(x)|^2 \D x\,,
\end{equation}
where the entropy per unit volume is defined as 
\begin{equation*}
	S[\Gamma] = - \int_{\R^d} \mathrm{Tr}_{\C^2} \left[\hat{\Gamma}(p) \log \hat{\Gamma}(p)\right] \D p\,.
\end{equation*}
The variational problem associated with the BCS functional is studied on 
\begin{equation*}
	\mathcal{D} := \left\{   \Gamma \text{ as in } \eqref{eq:Gamma} : 0 \le \hat{\Gamma} \le 1\,, \ \hat{\gamma} \in L^1(\R^d, (1 + p^2) \D p)\,, \ \alpha \in H^1_{\rm sym}(\R^d)  \right\}\,.
\end{equation*}
The following proposition provides the foundation for studying this problem. 
\begin{prop}[{\cite{hainzl.hamza.seiringer.solovej}, see also \cite{hainzl.seiringer.16}}] \label{prop:exofmin}
	Under Assumption \ref{ass:basic} on $V$, the BCS free energy is bounded below on $\mathcal{D}$ and attains its minimum. 
\end{prop}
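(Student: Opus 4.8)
The plan is to use the direct method of the calculus of variations, following \cite{hainzl.hamza.seiringer.solovej}: the two ingredients are a coercivity estimate that controls the \emph{a priori} sign-indefinite interaction term, and weak lower semicontinuity of $\F_T$. To set up the coercivity, fix the normal state $\Gamma_0 \in \mathcal{D}$ with $\hat\alpha_0 \equiv 0$ and $\hat\gamma_0(p) = \big(1 + \ee^{(p^2-\mu)/T}\big)^{-1}$ the Fermi--Dirac profile, so that $\F_T[\Gamma_0] = -T\int_{\R^d}\log\big(1 + \ee^{-(p^2-\mu)/T}\big)\D p$ is a finite constant. Diagonalizing $\hat\Gamma(p)$ (its two eigenvalues sum to $1$) and using that $\hat\gamma_0$ minimizes the non-interacting free energy density pointwise in $p$, one obtains the identity
\begin{equation*}
	\F_T[\Gamma] - \F_T[\Gamma_0] = T\,\mathcal{H}(\Gamma,\Gamma_0) + \int_{\R^d} V(x)\,\abs{\alpha(x)}^2\,\D x \,,
\end{equation*}
where $\mathcal{H}(\Gamma,\Gamma_0) \ge 0$ is a relative entropy, i.e.\ a pointwise-in-$p$ sum of jointly convex terms vanishing exactly for $\hat\Gamma = \hat\Gamma_0$. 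A closer look (see \cite{hainzl.hamza.seiringer.solovej}) shows that, with $K_T(p) := (p^2-\mu)\coth\tfrac{p^2-\mu}{2T} \gtrsim 1 + p^2$, the relative entropy obeys $T\,\mathcal{H}(\Gamma,\Gamma_0) \ge c_1\langle\alpha, K_T\alpha\rangle - c_2\norm{\alpha}_{L^2(\R^d)}^2$ for constants depending only on $T,\mu$, and that it also controls $\norm{\hat\gamma}_{L^1(\R^d,(1+p^2)\D p)}$.

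For the interaction term, Hölder's inequality gives $\big|\int_{\R^d} V\abs{\alpha}^2\big| \le \norm{V}_{L^{p_V}}\norm{\alpha}_{L^{2p_V'}}^2$ (with $p_V'$ the conjugate exponent), and Assumption \ref{ass:basic} is precisely the condition under which the Sobolev embedding $H^1(\R^d)\hookrightarrow L^{2p_V'}(\R^d)$ holds ($L^\infty$ for $d=1$ and any finite exponent for $d=2$). Splitting $V = V_1 + V_2$ with $\norm{V_1}_{L^{p_V}}$ as small as desired and $V_2 \in L^\infty(\R^d)$, and using that $0 \le \hat\Gamma(p)$ forces $\abs{\hat\alpha(p)}^2 \le \hat\gamma(p)(1-\hat\gamma(p))$, hence $\norm{\alpha}_{L^2}^2 \le \norm{\hat\gamma}_{L^1}$ by Parseval, one bounds $\int V\abs{\alpha}^2$ below by $-(\text{small})\cdot\langle\alpha,K_T\alpha\rangle - C\big(1 + \norm{\hat\gamma}_{L^1(\R^d,(1+p^2)\D p)}\big)$; inserting this into the identity above and absorbing the negative parts into $T\,\mathcal{H}(\Gamma,\Gamma_0)$ yields $\F_T[\Gamma] \ge \F_T[\Gamma_0] - C'$ for all $\Gamma \in \mathcal{D}$, and along a minimizing sequence $(\Gamma_n)_n$ the uniform bounds $\sup_n\big(\norm{\alpha_n}_{H^1(\R^d)} + \norm{\hat\gamma_n}_{L^1(\R^d,(1+p^2)\D p)}\big) < \infty$. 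I expect this coercivity step to be the main obstacle: the interaction is genuinely not bounded below on its own (the regime $T < T_c$ is exactly the one in which $K_T + V$ fails to be non-negative), so its negativity has to be extracted from the entropy, and this is where Assumption \ref{ass:basic}, the Sobolev embedding, and the infinitesimal $K_T$-form-boundedness of $V$ all enter; the borderline embedding in $d=3$ forces the splitting of $V$ rather than a single interpolation inequality with a small constant.

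Finally, passing to a subsequence we may assume $\alpha_n \rightharpoonup \alpha$ weakly in $H^1(\R^d)$ and, using $0 \le \hat\gamma_n \le 1$ together with the weighted $L^1$-bound for tightness at infinity, $\hat\gamma_n \to \hat\gamma$ in a suitable weak sense; the constraints $0 \le \hat\Gamma \le 1$ and $\hat\alpha(-p) = \hat\alpha(p)$ survive the limit, so $\Gamma = (\gamma,\alpha) \in \mathcal{D}$. The kinetic term converges (dominated convergence on $\{p^2 \le \mu\}$, where the integrand is bounded, and Fatou on $\{p^2 > \mu\}$, where it is non-negative), $-S$ is convex and hence weakly lower semicontinuous, and the interaction term is even weakly continuous: $\alpha_n \to \alpha$ strongly in $L^2_{\mathrm{loc}}(\R^d)$, hence — interpolating with the uniform $L^{2p_V'}$-bound — strongly in $L^{2p_V'}_{\mathrm{loc}}(\R^d)$, while the contribution of $\{\abs{x} > R\}$ is bounded uniformly by $\norm{V \, \mathbf 1_{\{\abs{x}>R\}}}_{L^{p_V}}\sup_n\norm{\alpha_n}_{L^{2p_V'}}^2 \to 0$ as $R \to \infty$. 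Hence $\F_T[\Gamma] \le \liminf_{n\to\infty}\F_T[\Gamma_n] = \inf_{\mathcal{D}}\F_T$, i.e.\ $\Gamma$ is a minimizer.
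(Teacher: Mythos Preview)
The paper does not give its own proof of this proposition; it is quoted as a result from \cite{hainzl.hamza.seiringer.solovej,hainzl.seiringer.16} and only the coercivity estimate \cite[Eqn.~3.12]{hainzl.seiringer.16} is later invoked (in the weak a priori bound on $\Delta$). Your sketch is precisely the direct-method argument of those references---relative entropy lower bound, infinitesimal $K_T$-form-boundedness of $V$ via Sobolev and the $L^{p_V}$-splitting, then weak lower semicontinuity---so there is nothing to contrast and the outline is correct.
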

However, in general, the minimizer is not necessarily unique. This potential non-uniqueness shall not bother us at this stage but will be of importance later on 
(see \Cref{subsubsec:TcXiintro,sec.2d.ell.ne.0}). The Euler--Lagrange equation for $\alpha$ associated with the minimization problem is the celebrated \emph{BCS gap equation}
\begin{equation} \label{eq:gapeq}
	\Delta(p) = - \frac{1}{(2\pi)^{d/2}} \int_{\R^d} \hat{V}(p-q) \frac{\Delta(q)}{K_T^\Delta(q)} \D q\,,
\end{equation}
satisfied by $\Delta(p) = - 2\,  (2 \pi)^{-d/2} (\hat{V}\star \hat{\alpha})(p)$, 
where $\alpha$ is the off--diagonal entry of a minimizing $\Gamma \in \mathcal{D}$ of \eqref{eq:functional}, see \cite{hainzl.hamza.seiringer.solovej,hainzl.seiringer.16}. 
Here, $\hat{V}(p) = (2 \pi)^{-d/2} \int_{\R^d} V(x) \E^{-\I px} \D x$ denotes the Fourier transform of $V$, and we have introduced the notation 
\begin{equation*}
	K_T^\Delta(p) = \frac{E^\Delta(p)}{\tanh\left(\frac{E_\Delta(p)}{2T}\right)} \qquad \text{with} \qquad 
	E_\Delta(p) = \sqrt{(p^2 - \mu)^2 + |\Delta(p)|^2}\,.
\end{equation*}
The gap equation can equivalently be written as
\begin{equation} \label{eq:gapeq2}
	(K_T^\Delta + V) \alpha = 0\,,
\end{equation}
where $K_T^\Delta(p)$ is understood as a multiplication operator in momentum space and $V(x)$ is understood as a multiplication operator in position space. The Euler--Lagrange equation for $\gamma$ (see \cite{hainzl.hamza.seiringer.solovej, hainzl.seiringer.16}) is simply given by 
\begin{equation} \label{eq:ELeqgamma}
\hat{\gamma}(p) = \frac{1}{2} - \frac{p^2 - \mu}{2 K_T^\Delta(p)}\,.
\end{equation}

\subsubsection{Critical temperature and energy gap} \label{subsubsec:TcXiintro}The system described by $\mathcal{F}_T$ is \emph{superconducting} if and only if any minimizer $\Gamma$ of $\mathcal{F}_T$ has off--diagonal entry $\alpha \not\equiv 0$ (or, equivalently, \eqref{eq:gapeq} has a solution $\Delta \not\equiv 0$). The question, whether a system is superconducting or not can be reduced to a linear criterion involving the pseudo--differential operator with symbol
\begin{equation*}
	K_T(p) \equiv K_T^0(p) = \frac{p^2 - \mu}{\tanh\left(\frac{p^2 - \mu}{2T}\right)}\,.
\end{equation*}
In fact, as shown in \cite{hainzl.hamza.seiringer.solovej}, 
the system is superconducting if and only if the operator $K_T + V$ has at least one negative eigenvalue. Moreover, there exists a unique \emph{critical temperature} $T_c \ge 0$ being defined as 
\begin{equation} \label{eq:Tc}
	\boxed{T_c := \inf\{ T > 0 : K_T + V \ge 0 \}}\,,
\end{equation}
for which $K_{T_c} + V \ge 0$ and $\inf \mathrm{spec} (K_T + V) < 0$ for all $T < T_c$. By Assumption \ref{ass:basic} and the asymptotic behavior $K_{T_c}(p) \sim p^2$ for $|p| \to \infty$ the critical temperature is well-defined by Sobolev's inequality \cite[Thm.~8.3]{analysis}. Note that, for $T \ge T_c$ the BCS functional \eqref{eq:functional} is uniquely minimized by the pair $\Gamma_{\rm FD} \equiv (\gamma_{\rm FD}, 0)$, where 
\begin{equation} \label{eq:FDdistr}
	\hat{\gamma}_{\rm FD}(p) = \frac{1}{1 + \E^{\frac{1}{T} (p^2 - \mu)}}
\end{equation}
is the usual Fermi-Dirac distribution. In contrast, for temperatures $0 \le T < T_c$ strictly below the critical temperature, 
it is a priori not clear whether or not the minimizer of \eqref{eq:functional} is unique.

In this paper we deal with two different cases.
In the case of $s$-wave superconductivity we will assume properties of $V$ such that the minimizer is unique 
and in the case of $2$-dimensional non-zero angular momentum we will assume properties of $V$ 
such that there are at most $2$ minimizers, see \Cref{sec.2d.ell.ne.0}.

For the $s$-wave case we assume the following.

\begin{assumption} \label{ass:hatV neg}
	Let the (real valued) interaction potential $V \in L^1(\R^d)$ be \emph{radially symmetric} and assume that $V$ is of \emph{negative type}, i.e.~
	$\hat{V} \le 0$ and $\hat{V}(0) < 0$. 
\end{assumption}
As shown in \cite{Hainzl.Seiringer.2008}, Assumption \ref{ass:hatV neg} implies that, in particular, the critical temperature is non-zero, i.e.~$T_c > 0$.\footnote{To be precise, the arguments in \cite{Hainzl.Seiringer.2008} cover only the case $d=3$, but, as already noted in \cite{Frank.Hainzl.ea.2012}, they are immediately transferable to the cases $d = 1, 2$.}
Moreover, as already indicated above, it ensures that the minimizer of \eqref{eq:functional} is unique. 
While this fact is already known at zero temperature \cite[Lemma~2]{Hainzl.Seiringer.2008}, 
we are not aware of any place in the literature where the extension to positive temperature is given.
As we will need this extension, we formulate it in the following proposition
and give a proof in \Cref{app:Deltaunique}.
\begin{prop}[Uniqueness of minimizers for potentials of negative type] \label{prop.KTDelta.geq.0} 
	Let $V$ satisfy Assumptions~\ref{ass:basic} and \ref{ass:hatV neg}, and consider the BCS functional \eqref{eq:functional}.  Then we have the following: 
	\begin{itemize}
		\item[(i)] For $0 \le T < T_c$, let $ \Gamma \equiv (\gamma, \alpha)$ be a minimizer of the BCS functional \eqref{eq:functional} (which exists by means of Proposition \ref{prop:exofmin}). Then the operator $K_T^\Delta + V$ from \eqref{eq:gapeq2} is non--negative and $\alpha$ is its unique ground state with eigenvalue zero $0$. 
		\item[(ii)] The minimizer $\Gamma  =: \Gamma_* \equiv (\gamma_*, \alpha_*)$ of \eqref{eq:functional} is unique up to a phase of $\alpha_*$ and can be chosen to have strictly positive Fourier transform $\hat{\alpha}_*$. Moreover, both $\gamma_*$ and $\alpha_*$ are radial functions. 
	\end{itemize}
\end{prop}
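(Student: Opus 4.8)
The plan is to mimic the zero-temperature argument of \cite[Lemma~2]{Hainzl.Seiringer.2008}, promoting it to positive temperature by exploiting the variational characterization of minimizers together with a strict convexity / strict rearrangement argument in Fourier space. I start from the observation that for any admissible $\Gamma$ with off-diagonal entry $\alpha$, the quadratic form $\langle \alpha, (K_T^\Delta + V)\alpha\rangle$ vanishes (this is just \eqref{eq:gapeq2}), so to prove (i) it suffices to show $K_T^\Delta + V \ge 0$ on all of $L^2_{\rm sym}(\R^d)$; given that, $\alpha$ is automatically a zero eigenvector, and uniqueness of the ground state will follow from a Perron--Frobenius type argument. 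To get non-negativity I would argue by contradiction: if $K_T^\Delta + V$ had a negative eigenvalue, I could use the corresponding eigenfunction $\psi$ to build a competitor BCS state with strictly lower energy than $\Gamma$, contradicting minimality. The construction of the competitor is the standard one: fix the diagonal through the relation \eqref{eq:ELeqgamma} with a perturbed gap $\Delta + t\,\delta\Delta$ where $\delta\Delta \propto \hat V \star \psi$, expand $\mathcal F_T$ to second order in $t$, and check that the second-order term is controlled by $\langle \psi, (K_T^\Delta + V)\psi\rangle < 0$. This requires the by-now routine computation of the Hessian of $\mathcal F_T$ at a minimizer, which is known to be $\langle \cdot, (K_T^\Delta + V)\cdot\rangle$ on the $\alpha$-direction modulo positive terms; I would cite \cite{hainzl.hamza.seiringer.solovej,hainzl.seiringer.16} for the relevant expansion rather than redo it.

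For the Perron--Frobenius step I use Assumption \ref{ass:hatV neg}. Since $\hat V \le 0$, the operator $K_T^\Delta + V$ acting in momentum space has the form ``positive multiplication operator plus an integral operator with nonpositive kernel'' (the kernel being $(2\pi)^{-d/2}\hat V(p-q)$ times the factor coming from $K_T^\Delta$); equivalently, after conjugating by $K_T^\Delta{}^{-1/2}$ and reducing to the zero-eigenvalue analysis, the relevant Birman--Schwinger operator $|\hat V|^{1/2} (K_T^\Delta)^{-1} |\hat V|^{1/2}$ (with a sign) is positivity-improving on the cone of nonnegative functions, because $\hat V(0) < 0$ guarantees irreducibility (the support of $|\hat V|$ has nonempty interior, so convolution connects all Fourier modes). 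Hence the lowest eigenvalue is simple and its eigenfunction can be chosen strictly positive; this eigenfunction is $\hat\alpha$ up to normalization, giving both simplicity in (i) and the strict positivity claim $\hat\alpha_* > 0$ in (ii). Radiality of $\alpha_*$ then follows because $V$ is radially symmetric (Assumption \ref{ass:hatV neg}), so $K_T^\Delta + V$ commutes with rotations once $\Delta$ is radial — and $\Delta$ is radial because it solves \eqref{eq:gapeq} with radial data, or alternatively because the unique positive ground state of a rotation-invariant operator is rotation-invariant. Radiality of $\gamma_*$ is then immediate from \eqref{eq:ELeqgamma}.

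Finally, uniqueness of the minimizer up to a phase (part (ii)) follows by combining (i) with the strict monotonicity of the map $\Delta \mapsto K_T^\Delta$: if $\Gamma_1, \Gamma_2$ were two minimizers with gap functions $\Delta_1, \Delta_2$, then both $K_T^{\Delta_1}+V$ and $K_T^{\Delta_2}+V$ are nonnegative with the respective $\alpha_i$ in their kernels; a convexity argument on the functional (the BCS functional is convex in $\Gamma$, strictly so on the relevant directions away from the diagonal Fermi--Dirac state) forces $\hat\Gamma_1 = \hat\Gamma_2$, hence equality of $|\hat\alpha_1| = |\hat\alpha_2|$ and of $\hat\gamma_1 = \hat\gamma_2$; the phase freedom in $\alpha$ is the only remaining ambiguity since $\hat\alpha$ can be taken real and positive. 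The main obstacle I anticipate is the positive-temperature Hessian computation and making the strict-inequality bookkeeping in the competitor construction fully rigorous — at $T>0$ the entropy term contributes and one must check that its second variation does not spoil the sign coming from the negative eigenvalue; this is where I would be most careful, and it is presumably the reason the authors deferred the proof to \Cref{app:Deltaunique}.
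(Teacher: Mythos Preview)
Your strategy for (i) (second variation/competitor construction) is in principle workable but is not what the paper does, and your sketch has a real subtlety you flag yourself: the BCS free energy is a function of $\Gamma=(\gamma,\alpha)$, not of $\Delta$, and the constraint $0\le\hat\Gamma\le 1$ couples $\gamma$ and $\alpha$, so ``perturb $\Delta$ and fix $\gamma$ via \eqref{eq:ELeqgamma}'' does not directly produce an admissible competitor without further work on the entropy term. The paper bypasses all of this with a one-line rearrangement argument: since $\hat V\le 0$, Cauchy--Schwarz gives
\[
\langle\hat\alpha,\hat V\star\hat\alpha\rangle\ \ge\ \langle|\hat\alpha|,\hat V\star|\hat\alpha|\rangle,
\]
so replacing $\hat\alpha$ by $|\hat\alpha|$ (which leaves the kinetic and entropy terms untouched and preserves the constraint) does not raise the energy. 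Hence $(\hat\gamma,|\hat\alpha|)$ is also a minimizer, its Euler--Lagrange equation puts $|\hat\alpha|$ in $\ker(K_T^{\Delta}+V)$ with $|\hat\alpha|>0$, and then the same inequality applied to the quadratic form $\langle\,\cdot\,,(K_T^\Delta+V)\,\cdot\,\rangle$ shows any ground state has non-negative $|\hat\alpha_{\rm GS}|$ also a ground state, forcing the bottom of the spectrum to be $0$ and simple. No Hessian, no Birman--Schwinger needed.

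Your argument for (ii) has a genuine gap: the assertion ``the BCS functional is convex in $\Gamma$'' is false. The interaction term $\int V|\alpha|^2$ is a quadratic form with sign governed by $V$ as a multiplication operator, and Assumption~\ref{ass:hatV neg} forces $\int V=\hat V(0)<0$, so $V\not\ge 0$ and the functional is \emph{not} convex (indeed, non-convexity is precisely why non-trivial minimizers exist below $T_c$). What the paper uses instead is the \emph{relative entropy identity} of \cite{Frank.Hainzl.ea.2012} together with a trace inequality (\cite[Lemma~3]{Frank.Hainzl.ea.2012}, \cite[Lemma~5.7]{Frank.Lemm.2016}), which for two minimizers $\Gamma_1,\Gamma_2$ yields
\[
0=\mathcal F_T[\Gamma_1]-\mathcal F_T[\Gamma_2]\ \ge\ \big\langle(\alpha_1-\alpha_2),\,(K_T^{\Delta_1}+V)\,(\alpha_1-\alpha_2)\big\rangle\ \ge\ 0.
\]
This forces $\alpha_1-\alpha_2\in\ker(K_T^{\Delta_1}+V)$, hence $\alpha_2=\psi_{21}\alpha_1$ for some $\psi_{21}\in\C$ by the one-dimensionality from (i); then the pointwise strict monotonicity of $|\psi|\mapsto K_T^{\psi\Delta_1}(p)$ together with $|\hat\alpha_1|>0$ gives $|\psi_{21}|=1$. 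Radiality is then cited from \cite[Proposition~2.9]{Deuchert.Geisinger.ea.2018}. The relative entropy step is the substantive input your outline is missing.
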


In particular, under Assumption \ref{ass:hatV neg}, we have that the \emph{energy gap}
\begin{equation} \label{eq:energygap}
\boxed{	\Xi(T) := \inf_{p \in \R^d} \sqrt{(p^2 - \mu)^2 + |\Delta(p)|^2}}\,,
\end{equation}
for $\Delta$ being the unique non-zero solution of \eqref{eq:gapeq} and temperatures $0 \le T< T_c$, is well-defined.

In case there is more than one solution $\Delta$ of the BCS gap equation \eqref{eq:gapeq} (i.e. more than one minimizer of the BCS functional)
we may for each such $\Delta$ define the energy gap $\Xi$ as in \Cref{eq:energygap}. 
In the case of two dimensions with (definite) non-zero angular momentum 
we shall prove that there exist exactly two such functions, $\Delta_{\pm}$.
They however satisfy $|\Delta_+| = |\Delta_-|$ and so the energy gap $\Xi$ is also here uniquely defined. 
For the details see \Cref{sec.2d.ell.ne.0}.

\begin{remark}\label{remark.gap=order-parameter}
The energy gap is essentially the same as the order parameter $\abs{\Delta(\sqrt\mu)}$ as we show in 
\Cref{eqn.compare.Delta.Xi,eq:Xi=Delta} below. 
In particular, one may replace $\Xi$ with $\abs{\Delta(\sqrt\mu)}$ in our main results, \Cref{prop:BCSoriguniv} and \Cref{thm:main,thm:main2}.
\end{remark}

\subsubsection{Weak coupling}
We consider here the weak--coupling limit where the interaction is of the form $\lambda V$ for a $\lambda > 0$ and we consider the limit $\lambda \to 0$.
In the weak--coupling limit an important role is played by the (rescaled) operator 
$\mathcal{V}_{\mu} : L^2(\mathbb{S}^{d-1}) \to L^2(\mathbb{S}^{d-1})$ \cite{Hainzl.Seiringer.2008, Henheik.Lauritsen.ea.2023,cuenin.merz.2021,hainzl.seiringer.2010}.
This operator, which is defined as
\begin{equation} \label{eq:Vmu}
  \left(\mathcal{V}_{\mu} u\right)(p) = \frac{1}{(2\pi)^{d/2}} \int_{\Sph^{d-1}} \hat{V}(\sqrt{\mu} (p-q)) u(q)\, \D\omega(q) \,,
\end{equation}
where $\D \omega$ denotes the uniform (Lebesgue) measure on the unit sphere $\Sph^{d-1}$, measures the strength of the interaction potential $\hat{V}$ on the Fermi surface. 
The pointwise evaluation of $\hat{V}$ (and in particular on a $\mathrm{codim-}1$ submanifold) 
is well-defined since we assume that $V \in L^1(\R^d)$.

The lowest eigenvalue $e_{\mu} = \mathrm{inf\, spec}\, \mathcal{V}_{\mu}$ is of particular importance. 
Note, that $\mathcal{V}_{\mu}$ is a trace-class operator (see the argument above \cite[Equation (3.2)]{frank.hainzl.naboko.seiringer}) with 
\begin{equation*}
  \mathrm{tr}(\mathcal{V}_{\mu}) = \frac{|\Sph^{d-1}|}{(2\pi)^d} \int_{\R^d} V(x) \D x = \frac{|\Sph^{d-1}|}{(2\pi)^{d/2}} \, \hat{V}(0)\,.
\end{equation*} 
For radial potentials one sees that the eigenfunctions of $\mcV_\mu$ are the spherical harmonics.

For potentials of negative type we have $\hat V(0) < 0$ and so $e_\mu < 0$. 
This corresponds to an attractive interaction between (some) electrons on the Fermi sphere.
Further, one easily sees that the constant function $u(p) = (|\Sph^{d-1}|)^{-1/2}$ is an eigenfunction of $\mathcal{V}_{\mu}$, which, since $\hat{V} \le 0$ by Assumption~\ref{ass:hatV neg}, is in fact the ground state by the Perron--Frobenius theorem, i.e.
\begin{equation} \label{eq:emu}
  e_{\mu} = \frac{1}{(2\pi)^{d/2}} \int_{\Sph^{d-1}} \hat{V}(\sqrt{\mu} - q \sqrt{\mu})  \, \D \omega(q)\,.
\end{equation}

In two dimensions the spherical harmonics take the form $u_{\pm\ell}(p) = (2\pi)^{-1/2} e^{\pm i \ell \varphi}$ 
with $\varphi $ denoting the angle of $p\in \R^2$ in polar coordinates. 
In this case the ground state has some definite angular momentum $\ell_0$. 
If $\ell_0\ne 0$ then the ground state is at least twice degenerate, since then both $u_{\pm\ell_0}$ 
are eigenfunctions of this lowest eigenvalue.

\subsection{Previous mathematical results}
So far, all mathematical results on solutions of the BCS gap equation \eqref{eq:gapeq} focused either on zero temperature, $T = 0$, or the regime close to the critical one, $T \approx T_c$, where the transition from superconducting to normal behavior is described by Ginzburg-Landau theory. 
\subsubsection{BCS theory in limiting regimes: Universality at \texorpdfstring{$T=0$}{T=0}} \label{subsubsec:T=0}
At zero temperature it is expected, that the ratio of the energy gap and the critical temperature is given by a universal constant, 
\begin{equation} \label{eq:T=0univ}
	\frac{\Xi(T=0)}{T_c}\approx \pi e^{-\gamma}\,,
\end{equation}
with $\gamma\approx 0.577$ the Euler--Mascheroni constant
in a limiting regime where ``superconductivity is weak'', meaning that $T_c$ is small.

In the literature three such limits have been studied: Historically, the first regime, which has been considered is the 
\emph{weak coupling limit} in three spatial dimensions \cite{Hainzl.Seiringer.2008,frank.hainzl.naboko.seiringer}, which we recently extended to one and two dimensions in \cite{Henheik.Lauritsen.ea.2023}. The critical temperature in the \emph{low density limit} in three dimensions was studied in \cite{hainzl.seiringer.scat.length} and later complemented by a study of the energy gap by one of us in \cite{lauritsen.energy.gap.2021}, thus, in combination, yielding the above-mentioned universal behavior. 
Finally, we considered the \emph{high density limit}, again in three dimension, in \cite{Henheik.2022,Henheik.Lauritsen.2022} and proved \eqref{eq:T=0univ} in this regime.

\subsubsection{Superconductors close to \texorpdfstring{$T_c$}{Tc}: Ginzburg-Landau theory} \label{subsubsec:GLtheory}
For temperatures close to the critical BCS theory is well-approximated by Ginzburg-Landau (GL) theory. In contrast to the microscopic BCS model, GL theory is a phenomenological model, which describes the superconductor on a macroscopic scale. Moreover, as suggested by \Cref{eqn.univ.sqrt.intro} a natural parameter measuring ``closeness to $T_c$''
is the parameter $h = \sqrt{1 - T/T_c}$. 
A rigorous analysis of various aspects of BCS theory in the limit $h\to 0$ was then studied in 
\cite{Frank.Hainzl.ea.2012,Frank.Lemm.2016,Frank.Hainzl.ea.2016}, very recently also allowing for general external fields \cite{deuchert2023microscopic, deuchert2023microscopic2}. 
Of particular interest to us is the fact that any minimizer of the BCS functional $(\gamma, \alpha)$
has $\alpha \approx h \psi \mathfrak{a}_0$
with $\mathfrak{a}_0 \in \ker (K_{T_c} + \lambda V)$ fixed and $\psi \in \C$ a minimizer of the corresponding GL functional, see \cite[Theorem 2.10]{Frank.Lemm.2016}.

\subsection{Outline of the paper}
The rest of this paper is structured as follows. In Section \ref{sec:results} we present our main result, starting with the prototypical universality in the original BCS model (Section \ref{sec.bcs.original}). Afterwards, in Sections~\ref{subsec:swave} and~\ref{sec.2d.ell.ne.0} we describe our results on universality for $s$-wave superconductors in arbitrary dimension $d \in \{1,2,3\}$, and for two-dimensional superconductors having pure angular momentum, respectively. The proofs of these results are given in Section \ref{sec:proofs}, while several additional proofs are deferred to Appendix \ref{app:auxandadd}.

\section{Main Result} \label{sec:results}
We next describe the main results of the paper. 
We first consider the example of an interaction as considered by BCS \cite{bcs.original}.
The reason for doing this is twofold: 
\begin{enumerate}
\item 
It highlights, how the universal function $\Xi(h)/T_c \approx \fBCS(h)$ appears.
\item 
A central idea in the proof 
of removing the $\log$-divergence is already present in the BCS gap equation \eqref{eq:gapeq}.
\end{enumerate}
For $T=\Delta=0$ we have $K_{T=0}^{\Delta=0}(p) = |p^2-\mu|$. 
This gives rise to a logarithmic-divergence in \Cref{eq:gapeq}. 
Understanding how to treat this $\log$-divergence was one of the key insights of Langmann and Triola \cite{Langmann.Triola.2023}.

\subsection{Energy gap in the original BCS approximation \texorpdfstring{\cite{bcs.original}}{}}
\label{sec.bcs.original}
In their seminal work \cite{bcs.original}, Bardeen--Cooper--Schrieffer modeled the interaction by a so called \emph{separable potential} $V(x,y)$ (i.e.~factorizing and depending not only on the difference variable $x-y$), whose Fourier transform $\hat{V}(p,q) $ 
is a product of two \emph{radial single variable} functions, that are 
compactly supported in the shell 
\begin{equation} \label{eq:bcsshell}
\mathcal{S}_\mu(T_D) := \{ p \in \R^d : |p^2 - \mu| \le T_D\}
\end{equation}
around the Fermi surface $\{ p \in \R^d : p^2 = \mu\}$, the only (material dependent) parameter being the so-called \emph{Debye temperature} $0 <T_D< \mu$. Switching from momentum $p$ to energy $\epsilon = p^2 - \mu$, the just mentioned single variable functions are chosen in such a way, that\footnote{Assuming that $\hat{V}$ is constant throughout the energy shell \eqref{eq:bcsshell} (as done in \cite{bcs.original}), the BCS coupling parameter emerges as $\lambda_{\rm BCS} = -\hat{V}(0,0) N(0)$.}
\begin{equation*}
\hat{V}(\epsilon, \epsilon') N(\epsilon') = - \lambda_{\rm BCS} \, \theta(1- |\epsilon/T_D|) \theta(1- |\epsilon'/T_D|)\,, \qquad \lambda_{\rm BCS} > 0\,,
\end{equation*}
where the electronic density of states (DOS) is denoted by $N(\epsilon) \sim (\epsilon + \mu)^{(d-2)/2}$ and $\theta$ is the Heaviside function.
($\theta(t) = 1$ for $t>0$ and $\theta(t)=0$ otherwise.)

In this case, the (unique non-negative) solution to the BCS gap equation \eqref{eq:gapeq} is given by 
\begin{equation} \label{eq:BCSgap}
\Delta(\epsilon) = \Delta \cdot \theta(1- |\epsilon/T_D|)
\end{equation}
for some temperature dependent constant $\Delta \ge 0$,
which is determined by the scalar gap equation (cf.~\cite[Eq.~(3.27)]{bcs.original})
\begin{equation} \label{eq:scalargapeq}
\frac{1}{\lambda_{\rm BCS}} = \int_{0}^{T_D} \frac{\tanh\big(\frac{\sqrt{\epsilon^2 + \Delta^2}}{2T}\big)}{\sqrt{\epsilon^2 + \Delta^2}} \mathrm{d} \epsilon 
\end{equation}
for any temperature $0 \le T < T_c$. In turn, the critical temperature $T_c>0$ is determined by \eqref{eq:scalargapeq} with $\Delta = 0$, i.e.
\begin{equation} \label{eq:scalargapeqTc}
\frac{1}{\lambda_{\rm BCS}} = \int_{0}^{T_D} \frac{\tanh\big(\frac{\epsilon}{2T_c}\big)}{\epsilon} \mathrm{d} \epsilon \,.
\end{equation}
In case of a small BCS coupling parameter, $\lambda_{\rm BCS} \ll 1$,\footnote{This can happen for various reasons. One example is that $V$ itself is scaled by a coupling parameter $\lambda > 0$, i.e.~$V \to \lambda V$, and one considers the limit $\lambda \to 0$, as done in Sections \ref{subsec:swave}--\ref{sec.2d.ell.ne.0}.} it holds that $T_c$ is exponentially small in $\lambda_{\rm BCS}$, i.e.~$T_c \sim \ee^{-1/\lambda_{\rm BCS}}$ (see \cite[Eq.~(3.29)]{bcs.original}). Moreover, it is easily checked that $\Delta$ as a function temperature is monotonically decreasing in the interval $[0,T_c]$ and satisfies $\Delta(T=0) \sim \ee^{-1/\lambda_{\rm BCS}}$, similarly to the critical temperature.

Next, changing variables as $x:= \epsilon/T_c$ and setting $\delta := \Delta/T_c$ as well as\footnote{As mentioned above, the parameter $h$ is commonly used (see, e.g., \cite{Frank.Hainzl.ea.2012, Frank.Lemm.2016}) in the context of Ginzburg-Landau theory, where it served as a `semiclassical' small parameter in the derivation this theory.}
\begin{equation} \label{eq:defh}
	h := \sqrt{1- \frac{T}{T_c}} \quad \text{for} \quad 0 \le T \le T_c\,,
\end{equation}
we can subtract \eqref{eq:scalargapeq} and \eqref{eq:scalargapeqTc} to find
\begin{equation} \label{eq:diffformula}
\int_{0}^{T_D/T_c} \left\{ \frac{\tanh\big(\frac{\sqrt{x^2 + \delta^2}}{2(1 - h^2)}\big)}{\sqrt{x^2 + \delta^2}}  - \frac{\tanh\big(\frac{x}{2}\big)}{x} \right\} \mathrm{d}x = 0 \,. 
\end{equation}
Note that this difference formula \eqref{eq:diffformula} removes the divergences of \eqref{eq:scalargapeq}--\eqref{eq:scalargapeqTc} as $\lambda_{\rm BCS} \to 0$. 

The proof of the following proposition is given in Section \ref{subsec:proofBCSorig}.
(In the statement of \Cref{prop:BCSoriguniv}, one may replace $\Xi$ by the order parameter $\Delta(\sqrt\mu)$, see \Cref{remark.gap=order-parameter} above.)
\begin{prop}[Energy gap in the original BCS model \cite{bcs.original}]
\label{prop:BCSoriguniv}
Let $\mu> 0$, fix a Debye temperature $0 < T_D < \mu$ and let $\lambda_{\rm BCS}> 0$ be the BCS coupling parameter as above. Let the critical temperature $T_c$ and the gap function $\Delta(p)$ be defined via \eqref{eq:BCSgap}--\eqref{eq:scalargapeqTc}. 

Then the energy gap $\Xi$ (defined in \eqref{eq:energygap}) as a function of $h = \sqrt{1 - T/T_c}$ for $0 \le T \le  T_c$ (recall \eqref{eq:defh}) is given by 
\begin{equation} \label{eq:XiunivBCSorig}
\Xi(h) = T_c \, \fBCS(h) \, \big(1 + O(\ee^{-1/\lambda_{\rm BCS}})\big)
\end{equation}
uniformly in $h \in [0,1]$, where the function $\fBCS :[0,1] \to [0,\infty)$ is implicitly defined via
\begin{equation} \label{eq:fBCS}
	\int_\R 
	\left\{\frac{\tanh \frac{\sqrt{s^2 + \mathsf{f}_{\textnormal{BCS}}(h)^2}}{2(1-h^2)}}{\sqrt{s^2 + \mathsf{f}_{\textnormal{BCS}}(h)^2}}  
	-  \frac{\tanh \frac{s}{2}}{s} \right\} \ud s =0 
\end{equation}
and plotted in Figure  \ref{fig:2}. 
\end{prop}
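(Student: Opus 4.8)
The plan is to reduce the statement to a one-dimensional comparison between the scalar gap equation \eqref{eq:diffformula}, which governs $\delta:=\Delta/T_c$, and its $T_D/T_c\to\infty$ limit \eqref{eq:fBCS}, which defines $\fBCS$. \emph{Step 1 (reduction to the scalar problem).} First I would observe that $\Xi(h)=\Delta(h)$ exactly: by \eqref{eq:BCSgap} the gap function is the step function $\Delta(p)=\Delta\,\theta(1-|(p^2-\mu)/T_D|)$ with $0\le\Delta=\Delta(T)\le\Delta(T=0)<T_D$ for $\lambda_{\rm BCS}$ small — the bound $\Delta(0)<T_D$ follows from \eqref{eq:scalargapeq} at $T=0$, namely $1/\lambda_{\rm BCS}=\operatorname{arcsinh}(T_D/\Delta(0))$ — so the infimum defining $\Xi$ in \eqref{eq:energygap} is attained at $p^2=\mu$ and equals $\Delta$. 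Hence it suffices to prove $\delta(h)=\fBCS(h)\bigl(1+O(\ee^{-1/\lambda_{\rm BCS}})\bigr)$ uniformly in $h\in[0,1]$. The standard weak-coupling asymptotics of \eqref{eq:scalargapeq}--\eqref{eq:scalargapeqTc} (cf.\ \cite[Eq.~(3.29)]{bcs.original}), i.e.\ $\Delta(0)\sim 2T_D\,\ee^{-1/\lambda_{\rm BCS}}$ and $T_c\sim\tfrac{2\ee^{\gamma}}{\pi}T_D\,\ee^{-1/\lambda_{\rm BCS}}$, give $\Lambda:=T_D/T_c\ge c\,\ee^{1/\lambda_{\rm BCS}}$ and, since $T\mapsto\Delta(T)$ is decreasing, $0\le\delta(h)\le\delta(1)=\Delta(0)/T_c\le C$ uniformly in $h$.

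\emph{Step 2 (the two scalar functions).} For $a:=1-h^2\in[0,1]$ put $\Phi(\delta,h,\Lambda):=\int_0^\Lambda\!\bigl\{\tfrac{\tanh(\sqrt{x^2+\delta^2}/2a)}{\sqrt{x^2+\delta^2}}-\tfrac{\tanh(x/2)}{x}\bigr\}\,\ud x$, so that \eqref{eq:diffformula} reads $\Phi(\delta(h),h,T_D/T_c)=0$, while $\fBCS(h)$ is, by evenness of the integrand, the nonnegative root of $\Phi(\cdot,h,\infty)$. Using that $r\mapsto\tfrac{\tanh(r/2a)}{r}$ is strictly decreasing — equivalent to the elementary inequality $2t<\sinh 2t$ for $t>0$ — one checks that $\Phi(\cdot,h,\Lambda)$ and $\Phi(\cdot,h,\infty)$ are continuous and strictly decreasing on $[0,\infty)$, with $\Phi(0,h,\infty)>0$ for $0<h\le1$ (and $\Phi(0,1,\infty)=+\infty$), $\Phi(0,0,\infty)=0$, and $\Phi(\delta,h,\infty)\to-\infty$ as $\delta\to\infty$; the integral at infinity converges because the integrand is $O(\delta^2/x^3)+O(\ee^{-x})$ as $x\to\infty$. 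This already shows that $\fBCS$ is well-defined, with $\fBCS(0)=0$, and is continuous, increasing and bounded on $[0,1]$.

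\emph{Step 3 (uniform comparison and inversion).} The heart of the matter is the tail estimate
\[
\bigl|\Phi(\delta,h,\infty)-\Phi(\delta,h,\Lambda)\bigr|=\Bigl|\int_\Lambda^\infty\!\bigl\{\tfrac{\tanh(\sqrt{x^2+\delta^2}/2a)}{\sqrt{x^2+\delta^2}}-\tfrac{\tanh(x/2)}{x}\bigr\}\,\ud x\Bigr|\le C\,\frac{\delta^2+h^2}{\Lambda^2},
\]
uniformly for $h\in[0,1]$ and $\delta\in[0,C]$: the pieces $\bigl|\tfrac1{\sqrt{x^2+\delta^2}}-\tfrac1x\bigr|\le\tfrac{\delta^2}{2x^3}$, $\tfrac{1-\tanh(x/2)}{x}$, and $\tfrac{1-\tanh(\sqrt{x^2+\delta^2}/2a)}{\sqrt{x^2+\delta^2}}$ are each integrable over $[\Lambda,\infty)$, and the point for the last two is that the integrand of \eqref{eq:fBCS} vanishes at $(\delta,h)=(0,0)$, which supplies the extra $h^2$ (for $h$ away from $1$) respectively $\delta^2$. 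Evaluating at $\delta=\delta(h)$ and using $\Phi(\delta(h),h,\Lambda)=0=\Phi(\fBCS(h),h,\infty)$, the mean value theorem yields $|\partial_\delta\Phi(\xi,h,\infty)|\,|\delta(h)-\fBCS(h)|\le C(\delta(h)^2+h^2)/\Lambda^2$ for some $\xi$ between $\delta(h)$ and $\fBCS(h)$; since $\Lambda\ge c\,\ee^{1/\lambda_{\rm BCS}}$ the right-hand side is $O(\ee^{-2/\lambda_{\rm BCS}})$. To convert this into the claimed relative bound I would split: for $h\ge\eps_0$ (any fixed $\eps_0>0$) one has $\fBCS(h)\ge\fBCS(\eps_0)>0$ and $|\partial_\delta\Phi(\xi,h,\infty)|\ge c>0$ on the relevant range, whence $|\delta(h)/\fBCS(h)-1|=O(\ee^{-2/\lambda_{\rm BCS}})$; for $h\le\eps_0$ I would use the local expansion $\Phi(\delta,h,\infty)=\Phi(0,h,\infty)+\tfrac{c_0}{2(1-h^2)^2}\,\delta^2+O(\delta^4)$, obtained via the substitution $x\mapsto(1-h^2)x$, with $c_0:=\int_0^\infty x^{-1}\tfrac{\ud}{\ud x}\!\bigl(\tfrac{\tanh(x/2)}{x}\bigr)\,\ud x<0$ and $\Phi(0,h,\infty)\asymp h^2$, which forces $\delta(h),\fBCS(h)\asymp h$ and $|\partial_\delta\Phi(\xi,h,\infty)|\asymp h$, hence $|\delta(h)-\fBCS(h)|\lesssim h\,\ee^{-2/\lambda_{\rm BCS}}$ and again $|\delta(h)/\fBCS(h)-1|=O(\ee^{-2/\lambda_{\rm BCS}})$. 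Combining the two regimes proves \eqref{eq:XiunivBCSorig} uniformly in $h\in[0,1]$, the identity at $h=0$ being $0=0$.

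I expect the regime $h\to 0$ in Step 3 to be the main obstacle: there $\fBCS(h)\to0$, so one cannot lose anything when dividing the absolute error by $\fBCS(h)$, and making this rigorous requires the quadratic nondegeneracy of $\Phi(\cdot,h,\infty)$ at $\delta=0$ together with the matching $\Phi(0,h,\infty)\asymp h^2$, both uniform in $h$ (and, analogously, that the finite-$\Lambda$ version shares the same expansion up to the negligible tail). Everything else is either an elementary monotonicity/continuity argument or a crude tail bound.
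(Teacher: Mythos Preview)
Your approach is correct and reaches the result, but it differs from the paper's in two structural ways. First, the paper establishes the crucial a~priori bound $\delta(h)\le Ch$ by deriving an ODE for $\delta$ via the implicit function theorem and integrating the resulting differential inequality $\partial_h\delta(h)\le C''h/\delta(h)$ (Lemma~\ref{lem:delta(h)apriori}); you instead extract $\delta(h)\asymp h$ from the quadratic expansion of $\Phi(\cdot,h,\infty)$ near $\delta=0$, together with monotonicity of $\delta$ in $h$ to ensure you are in the small-$\delta$ regime. Second, to convert the residual $R_U=\Phi(\delta(h),h,\infty)$ into an $\fBCS$-error, the paper invokes the explicit scaling identity $\phi=\ee^{-R_U}\fBCS\!\bigl(\sqrt{h^2+(1-h^2)(1-\ee^{R_U})}\bigr)$ (Lemma~\ref{lem:fBCSapprox}, taken from \cite{Langmann.Triola.2023}) and Taylor-expands, obtaining $|\delta-\fBCS(h)|=O(|R_U|^{1/2})$; you use the mean value theorem on $\Phi(\cdot,h,\infty)$, which after feeding in $\delta(h)\asymp h$ gives the slightly sharper $|\delta-\fBCS(h)|\lesssim h\,\ee^{-2/\lambda_{\rm BCS}}$. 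Your route is more self-contained (no ODE, no external identity), while the paper's is more modular in that the a~priori bound is obtained independently of the tail and the explicit inversion handles all $h\in[0,1]$ without a case split.

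One imprecision worth flagging: your claimed tail bound $C(\delta^2+h^2)/\Lambda^2$ is not literally correct. The two $(1-\tanh)$ pieces contribute $O(\ee^{-c\Lambda})$, not $O(h^2/\Lambda^2)$; the right bound is $C\delta^2/\Lambda^2+O(\ee^{-c\Lambda})$, and your justification ``the integrand vanishes at $(\delta,h)=(0,0)$'' does not produce the $h^2$ factor as stated. This is harmless for the proof, but only once $\delta(h)\asymp h$ is already in hand --- so the order of your Step~3 for small $h$ must be: monotonicity gives $\delta(h)\le\delta(\eps_0)$ small $\Rightarrow$ quadratic expansion and $\Phi(0,h,\infty)=-\log(1-h^2)\asymp h^2$ force $\delta(h)\asymp h$ $\Rightarrow$ tail $\lesssim \delta(h)^2/\Lambda^2\asymp h^2\ee^{-2/\lambda_{\rm BCS}}$ $\Rightarrow$ mean value theorem with $|\partial_\delta\Phi|\asymp h$. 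As written, the tail bound and the $\delta(h)\asymp h$ claim look circular; disentangling them in this order closes the argument.
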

This means that, independent of the material dependent Debye temperature $T_D > 0$ and the chemical potential $\mu >0$, the energy gap $\Xi$ within the original BCS approximation \cite{bcs.original}, follows a universal curve, described by \eqref{eq:XiunivBCSorig}, in the limit of weak BCS coupling. A similar formula for $\fBCS$ like \eqref{eq:fBCS} (but as a function of $x:=1-h^2$) also appeared in the monograph of Leggett \cite[Eq.~(5.5.21)]{leggett2006quantum}. We now list a few basic properties of $\fBCS$, whose proofs we omit, as they can be obtained by means of the implicit function theorem and further elementary tools (see also \cite[Lemma~1]{Langmann.Triola.2023} as well as Lemmas \ref{lem:delta(h)apriori} and \ref{lem:Cuniv} below). Almost all of these properties become apparent from Figure \ref{fig:2}. 
\begin{lem}[Properties of $\fBCS$] \label{lem:fBCS}
There exists a unique implicitly defined solution function $\fBCS :[0,1] \to [0,\infty)$ of \eqref{eq:fBCS}. Moreover, $\fBCS$ has the following properties:
\begin{itemize}
\item[(i)] It is strictly monotonically increasing in $[0,1]$.
\item[(ii)] It is $C^1$ in $(0,1)$ and has continuous one-sided derivatives at the boundaries $0$ and $1$.
\item[(iii)] It has the boundary values $\fBCS(0)=0$, $\fBCS'(0) = C_{\rm univ}$ and $\fBCS(1) = \pi \ee^{-\gamma} \approx 1.76$, $\fBCS'(1) = 0$.  Here, $\gamma \approx 0.57$ is the Euler-Mascheroni constant and 
\begin{equation} \label{eq:Cuniv}
	C_{\rm univ} := \sqrt{\frac{8 \pi^2}{7 \zeta(3)}} \approx 3.06 \,,
\end{equation}
where $\zeta(s)$ denotes Riemann's $\zeta$-function.
\end{itemize}
\end{lem}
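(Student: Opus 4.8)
The plan is to rewrite \eqref{eq:fBCS} as $\Phi(\fBCS(h),h)=0$ for the explicit two-variable function
\[
  \Phi(\delta,h):=\int_\R\left\{\frac{\tanh\frac{\sqrt{s^2+\delta^2}}{2(1-h^2)}}{\sqrt{s^2+\delta^2}}-\frac{\tanh\frac s2}{s}\right\}\ud s ,\qquad \delta\ge0,\ h\in[0,1),
\]
and then to read off all the claimed properties from the implicit function theorem. First I would check that $\Phi$ is finite — the integrand is bounded near $s=0$ and equals $-\delta^2/(2|s|^3)+O(\ee^{-c|s|})$ as $|s|\to\infty$ — and that it is smooth on $[0,\infty)\times[0,1)$, differentiation under the integral being justified by uniform tail decay. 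Since $\sqrt{s^2+\delta^2}$ and $1-h^2$ depend on $\delta,h$ only through $\delta^2,h^2$, one may write $\Phi(\delta,h)=\widetilde\Phi(\delta^2,h^2)$ with $\widetilde\Phi$ smooth up to the corner $(0,0)$ (the only point needing care is $s=0$, where $\tanh(\sqrt D/c)/\sqrt D$ is analytic in $D$). Finally, for $\delta>0$ the function $\Phi(\delta,\cdot)$ extends continuously, together with its $\delta$-derivative, to $h=1$: splitting off the $h$-independent integrable part $\tfrac1{\sqrt{s^2+\delta^2}}-\tfrac{\tanh(s/2)}{s}$ and dominating the remainder by $2\ee^{-\sqrt{s^2+\delta^2}/(1-h_0^2)}/\sqrt{s^2+\delta^2}$ gives $\Phi(\delta,h)\to\Phi_0(\delta):=\int_\R\bigl(\tfrac1{\sqrt{s^2+\delta^2}}-\tfrac{\tanh(s/2)}{s}\bigr)\ud s$ as $h\to1$.

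Next, the two monotonicities. Writing the integrand as $G(\sqrt{s^2+\delta^2})$ with $G(r)=\tanh\!\bigl(\tfrac r{2(1-h^2)}\bigr)/r$ and using $\partial_\delta\sqrt{s^2+\delta^2}=\delta/\sqrt{s^2+\delta^2}$, one finds $\partial_\delta\Phi=\delta\int_\R G'(\sqrt{s^2+\delta^2})/\sqrt{s^2+\delta^2}\,\ud s$, and $G'(r)<0$ since $x\cosh^{-2}x<\tanh x$ for $x>0$ (equivalently $2x<\sinh2x$); hence $\partial_\delta\Phi<0$ for $\delta>0$. Likewise $\partial_h\Phi=\tfrac{h}{(1-h^2)^2}\int_\R\cosh^{-2}\!\bigl(\tfrac{\sqrt{s^2+\delta^2}}{2(1-h^2)}\bigr)\ud s>0$ for $h>0$. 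Combined with $\Phi(0,h)=\int_\R|s|^{-1}\bigl(\tanh\tfrac{|s|}{2(1-h^2)}-\tanh\tfrac{|s|}2\bigr)\ud s$, which vanishes at $h=0$ and is strictly positive for $h\in(0,1)$, and with $\Phi(\delta,h)\to-\infty$ as $\delta\to\infty$, strict monotonicity gives, for each $h\in[0,1)$, a \emph{unique} $\fBCS(h)\ge0$; clearly $\fBCS(0)=0$ and $\fBCS(h)>0$ for $h\in(0,1)$. For $h=1$ one checks directly, using the classical asymptotics $\int_0^X\tfrac{\tanh u}{u}\ud u=\ln X+\ln\tfrac{4\ee^\gamma}\pi+o(1)$ and $\int_0^R(s^2+\delta^2)^{-1/2}\ud s=\ln\tfrac{2R}\delta+o(1)$, that in fact $\Phi_0(\delta)=2\ln(\pi\ee^{-\gamma}/\delta)$, so $\Phi_0$ is strictly decreasing from $+\infty$ to $-\infty$ with unique root $\delta=\pi\ee^{-\gamma}$ (this is just the original BCS computation of $\Delta_0/T_c$); since the root of $\Phi(\cdot,h)$ depends continuously on $h$ up to $h=1$, we get $\fBCS(1)=\pi\ee^{-\gamma}$. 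This establishes existence, uniqueness, $\fBCS(0)=0$ and $\fBCS(1)=\pi\ee^{-\gamma}$.

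Since $\partial_\delta\Phi\neq0$ at $\delta=\fBCS(h)$ for every $h\in[0,1]$ (one-sidedly at the endpoints, with $\partial_\delta\Phi\to\partial_\delta\Phi_0(\pi\ee^{-\gamma})<0$ at $h=1$), the implicit function theorem shows $\fBCS$ is $C^1$ on $(0,1)$ with continuous one-sided derivatives at $0,1$, and $\fBCS'(h)=-\partial_h\Phi/\partial_\delta\Phi>0$ by the signs above; this gives (i) and the regularity in (ii). For $\fBCS'(1)=0$: as $h\to1$ the argument $\tfrac{\sqrt{s^2+\delta^2}}{2(1-h^2)}$ stays $\ge\tfrac{\fBCS(h)}{2(1-h^2)}\to\infty$ (since $\fBCS(h)\to\pi\ee^{-\gamma}>0$), so $\cosh^{-2}$ of it is exponentially small and a short estimate gives $\partial_h\Phi(\fBCS(h),h)\to0$, whence $\fBCS'(1)=0$. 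For $\fBCS'(0)=C_{\rm univ}$ I would work with $\widetilde\Phi$ in the variables $(D,H)=(\delta^2,h^2)$: from $\widetilde\Phi(0,0)=0$ and differentiation under the integral,
\[
  \partial_H\widetilde\Phi(0,0)=\int_\R\tfrac12\cosh^{-2}\tfrac{|s|}2\,\ud s=2,\qquad
  \partial_D\widetilde\Phi(0,0)=\tfrac14\int_0^\infty\frac{u\cosh^{-2}u-\tanh u}{u^3}\,\ud u=-\frac{7\zeta(3)}{4\pi^2},
\]
the last integral being evaluated via the partial-fraction expansion $\tanh u=\sum_{n\ge0}\tfrac{8u}{(2n+1)^2\pi^2+4u^2}$, which yields $\tfrac{u\cosh^{-2}u-\tanh u}{u^3}=-\sum_{n\ge0}\tfrac{64}{[(2n+1)^2\pi^2+4u^2]^2}$ and hence, integrating term by term, $\int_0^\infty\tfrac{\tanh u-u\cosh^{-2}u}{u^3}\ud u=\tfrac8{\pi^2}\sum_{n\ge0}(2n+1)^{-3}=\tfrac{7\zeta(3)}{\pi^2}$. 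The implicit function theorem then gives a $C^1$ solution $D=D(H)$ with $D(0)=0$ and $D'(0)=-\partial_H\widetilde\Phi(0,0)/\partial_D\widetilde\Phi(0,0)=8\pi^2/(7\zeta(3))=C_{\rm univ}^2$, so that $\fBCS(h)=\sqrt{D(h^2)}=C_{\rm univ}\,h\sqrt{1+O(h^2)}$ is $C^1$ up to $h=0$ with $\fBCS'(0)=C_{\rm univ}$, completing (ii) and (iii).

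The routine parts are the interior regularity and monotonicity (implicit function theorem plus the elementary inequality $2x<\sinh2x$) and the bookkeeping for well-definedness of $\Phi$. The main obstacle is the degenerate endpoint $h=1$, where $1-h^2\to0$ makes the $\tanh$-argument blow up: one has to run the dominated-convergence argument carefully to get $\Phi(\cdot,h)\to\Phi_0$ together with derivatives, and thereby both $\fBCS(1)=\pi\ee^{-\gamma}$ and $\fBCS'(1)=0$. The only other genuinely non-elementary inputs are the two explicit integral evaluations producing the constants $\pi\ee^{-\gamma}$ (from the classical $\tanh u/u$ asymptotics) and $C_{\rm univ}$ (from the partial-fraction sum giving $\tfrac{7\zeta(3)}{\pi^2}$).
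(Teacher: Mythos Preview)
Your proposal is correct and is precisely what the paper has in mind: the paper \emph{omits} the proof of this lemma, stating only that it ``can be obtained by means of the implicit function theorem and further elementary tools'' with pointers to \cite[Lemma~1]{Langmann.Triola.2023} and to the later Lemmas~\ref{lem:delta(h)apriori} and~\ref{lem:Cuniv}. Your argument is exactly a careful execution of that sketch: monotonicity of $\Phi$ in both variables gives existence/uniqueness and (i); the IFT gives (ii), with the degeneracy at $h=0$ handled by passing to $(D,H)=(\delta^2,h^2)$; and the boundary values in (iii) come from the two classical integral evaluations, your partial-fraction computation of $\int_0^\infty(\tanh u - u\cosh^{-2}u)/u^3\,\ud u = 7\zeta(3)/\pi^2$ being exactly the content of Lemma~\ref{lem:Cuniv} (there attributed to \cite[3.333.3]{Gradshteyn.Ryzhik.2007}), and the $\pi\ee^{-\gamma}$ being the standard BCS computation. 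The endpoint $h=1$ is the one place requiring genuine care, and your dominated-convergence treatment is adequate.
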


\begin{figure}[htb]
	\centering
  \includegraphics[width=0.8\textwidth]{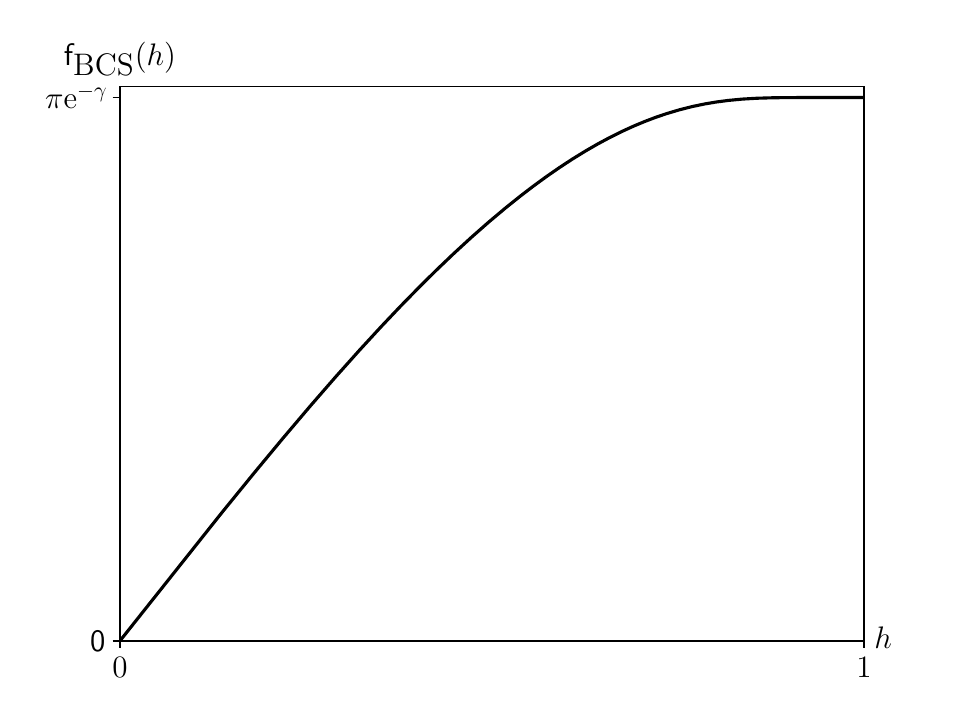}
	\caption[]{Sketch of the function $\fBCS$ obtained via the implicit relation \eqref{eq:fBCS}.}
	\label{fig:2}
\end{figure}

\begin{remark}[Contact interactions] \label{rmk:contact}Our proof of Proposition \ref{prop:BCSoriguniv} can easily be generalized to all BCS models, in which the energy gap is constant (at least near the Fermi surface). 
	\begin{itemize}
\item[(a)] In case of a delta potential, $V(x) = - \delta(x)$ in one spatial dimension, $d=1$, the gap function solving \eqref{eq:gapeq} is given by a constant (simply because here $\hat{V}$ is constant). This setting can similarly be analyzed (in a weak coupling limit, i.e.~replacing $V \to \lambda V$ and taking $\lambda \to 0$) as done in Proposition \ref{prop:BCSoriguniv} for the original BCS model \cite{bcs.original}.
\item[(b)] Also for contact interactions in three spatial dimensions, $d=3$, the situation is similar. 	This setting is studied in \cite{Braeunlich.Hainzl.ea.2014,braunlich.hainzl.seiringer}, where it is shown that for a suitable sequence of potentials $V_\ell$ 
converging to a point interaction with scattering length $a < 0$, the gap function $\Delta_\ell$ converges (uniformly on compact sets, see \cite[Eq.~(14)]{Braeunlich.Hainzl.ea.2014}) 
to a constant $\Delta$ solving the gap equation 
\[
- \frac{1}{4\pi a} 
= \frac{1}{(2\pi)^{3}} \int_{\R^3} 
\left(\frac{1}{K_T^\Delta(p)}
- \frac{1}{p^2}\right) \ud p\,. 
\]
Replacing the limit of weak coupling by a small scattering range limit, $a \to 0$, one can obtain a result similar to Proposition \ref{prop:BCSoriguniv}.  
	\end{itemize}
\end{remark}

\subsection{Universal behavior of the \texorpdfstring{$s$}{s}-wave BCS energy gap} \label{subsec:swave}
After having discussed the prototypical universality in the seminal BCS paper \cite{bcs.original}, we can now formulate our main result on general $s$-wave superconductors with local interactions. 
The proof of Theorem \ref{thm:main} is presented in Sections \ref{subsec:partaproof}--\ref{subsec:GLproof}, while the main ideas are briefly described in Remark \ref{rmk:proof} below. 
(We remark that, in the statement of \Cref{thm:main}, one may replace $\Xi$ by the order parameter $\Delta(\sqrt\mu)$, see \Cref{remark.gap=order-parameter} above.)
\begin{thm}[BCS energy gap for $s$-wave superconductors]  \label{thm:main}
  Let $d \in \{1,2,3\}$, $\mu > 0$ and let $V$ satisfy \Cref{ass:basic,ass:hatV neg}.
  Define the critical temperature $T_c$ and energy gap $\Xi$ as in \eqref{eq:Tc} and \eqref{eq:energygap} with interaction $\lambda V$ for a $\lambda > 0$
  and $\Delta$ being the unique non-zero solution the BCS gap equation \eqref{eq:gapeq} with interaction $\lambda V$.
	
	Then, with $\fBCS(h), h = \sqrt{1 - T/T_c}$ being the function defined via \eqref{eq:fBCS}, we have the following: 
	\begin{itemize}
\item[(a)] Assuming additionally that $|\cdot|^2 V \in L^1(\R^d)$, it holds that
\begin{equation} \label{eq:maina}
\Xi(h) = T_c \, \fBCS(h) \, \big(1 + O(h^{-1} \ee^{-c/\lambda})\big)
\end{equation}
for some constant $c>0$ independent of $\lambda$ and $h$. 
\item[(b)] Assuming additionally that $(1 + |\cdot|) V \in L^2(\R^d)$, it holds that
\begin{equation} \label{eq:mainb}
	\Xi(h) = T_c \, \fBCS(h) \, \big(1 + O(h \ee^{c'/\lambda}) + o_{\lambda \to 0}(1)\big)
\end{equation}
for some constant $c'>0$ independent of $\lambda$ and $h$ and where $o_{\lambda \to 0}(1)$ vanishes as $\lambda \to 0$ uniformly in $h$.
	\end{itemize}
\end{thm}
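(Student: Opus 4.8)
\emph{Strategy and set-up.} The plan is to reduce the full gap equation \eqref{eq:gapeq} (with interaction $\lambda V$) to the scalar equation \eqref{eq:fBCS} defining $\fBCS$, exploiting that in the weak--coupling limit the gap function is essentially constant on the Fermi sphere; this is exactly what makes the $s$-wave case universal. By \Cref{prop.KTDelta.geq.0}, for $0\le T<T_c$ the gap function $\Delta$ is unique up to a phase, radial, with $\hat\alpha>0$, and $\alpha$ is the non-degenerate zero--eigenvalue ground state of $K_T^\Delta+\lambda V$; similarly one fixes a radial positive $\fraka_0\in\ker(K_{T_c}+\lambda V)$. By \Cref{remark.gap=order-parameter} it suffices to prove the asymptotics with $\Delta(\sqrt\mu)$ in place of $\Xi$. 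As input I would use the weak--coupling asymptotics of $T_c$ from \cite{Hainzl.Seiringer.2008,frank.hainzl.naboko.seiringer,Henheik.Lauritsen.ea.2023} (so $T_c$ is exponentially small, $\log(\mu/T_c)\asymp1/\lambda$) together with a priori bounds: in the energy variable $\epsilon=p^2-\mu$ one shows $c\,h\,T_c\le\Delta(\sqrt\mu)\le C\,T_c$ with $\Delta(\sqrt\mu)$ monotone in $T$, that $\Delta$ stays close to $\Delta(\sqrt\mu)$ for $|\epsilon|\lesssim T_c$ and decays for larger $|\epsilon|$, and hence that the integral in \eqref{eq:gapeq} is localized near the Fermi sphere up to exponentially small errors. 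The extra hypothesis $|\cdot|^2V\in L^1$ of part (a) enters here through the differentiability of $\hat V$, controlling $\hat V(p-q)\rightsquigarrow\hat V(\sqrt\mu\,(\omega_p-\omega_q))$ near the Fermi sphere.

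\emph{The subtracted scalar equation (part (a)).} This is the mechanism that removes the $\log$-divergence, already visible in \Cref{sec.bcs.original} and going back to Langmann--Triola \cite{Langmann.Triola.2023}. Pairing $(K_T^\Delta+\lambda V)\alpha=0$ with $\fraka_0$ and $(K_{T_c}+\lambda V)\fraka_0=0$ with $\alpha$ and subtracting gives $\langle\fraka_0,(K_T^\Delta-K_{T_c})\alpha\rangle=0$; carrying out the Birman--Schwinger / weak--coupling expansion of \cite{Hainzl.Seiringer.2008,frank.hainzl.naboko.seiringer} for both the $T$- and the $T_c$-equation, the two logarithmically large (of order $1/\lambda$) contributions cancel, and — after extending the energy integral to $\R$ and rescaling $\epsilon=T_cs$, $T=(1-h^2)T_c$ — one is left with
\begin{equation*}
G\!\left(\frac{\Delta(\sqrt\mu)}{T_c},\,h\right)=\mcE,
\qquad
G(\delta,h):=\frac12\int_\R\left\{\frac{\tanh\frac{\sqrt{s^2+\delta^2}}{2(1-h^2)}}{\sqrt{s^2+\delta^2}}-\frac{\tanh\frac s2}{s}\right\}\ud s,
\end{equation*}
so that $G(\fBCS(h),h)=0$ by \eqref{eq:fBCS}. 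The error $\mcE$ collects the curvature of $\hat V$ on the Fermi sphere, the contribution of momenta away from it, and the deviation of $\Delta$ from $\Delta(\sqrt\mu)$; since it is a difference vanishing identically at $T=T_c$ (where $\Delta\equiv0$), the a priori bounds give $|\mcE|\lesssim h\,\ee^{-c/\lambda}$. As $G$ is strictly decreasing in $\delta$ with $|\partial_\delta G(\fBCS(h),h)|\gtrsim\fBCS(h)$ (\Cref{lem:fBCS} and the implicit function theorem), inverting gives $\Delta(\sqrt\mu)/T_c=\fBCS(h)+O(\ee^{-c/\lambda})$, and dividing by $\fBCS(h)\gtrsim h$ yields \eqref{eq:maina}.

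\emph{The Ginzburg--Landau regime (part (b)).} For $h\to0$ the previous argument degrades because $\fBCS(h)\to0$, so for small $h$ I would instead invoke the BCS-to-GL correspondence of Frank--Lemm \cite{Frank.Lemm.2016} (see \Cref{subsubsec:GLtheory}), whose hypotheses are guaranteed by $(1+|\cdot|)V\in L^2$: any minimizer satisfies $\alpha\approx h\,\psi\,\fraka_0$ with $\psi$ a minimizer of the associated GL functional. Inserting this into $\Delta(\sqrt\mu)=-2\lambda(2\pi)^{-d/2}(\hat V\star\hat\alpha)(\sqrt\mu)$ and using that the quadratic and quartic GL coefficients are precisely the Taylor coefficients of $G$ at $(\delta,h)=(0,0)$ — the quartic one being responsible for the $7\zeta(3)$ — one obtains $\Delta(\sqrt\mu)/(h\,T_c)\to\fBCS'(0)=C_{\rm univ}$; since the Frank--Lemm expansion is controlled only for $h$ small relative to (a power of) $T_c$, i.e.\ once $h\,\ee^{c'/\lambda}\to0$, this produces \eqref{eq:mainb} with the non-quantitative remainder $o_{\lambda\to0}(1)$. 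Parts (a) and (b) then jointly cover all of $h\in[0,1]$.

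\textbf{Main obstacle.} I expect the crux to be making the cancellation of the $\log$-divergence quantitative and uniform in $h$: the two Birman--Schwinger expansions must be carried to the same order, with remainders that cancel to leading order and carry the extra factor $\Delta(\sqrt\mu)\propto h$ needed to beat the $\fBCS(h)^{-1}\sim h^{-1}$ in the relative error of \eqref{eq:maina} — and then dovetailing this estimate with the Ginzburg--Landau analysis near $h=0$, whose error is only $o(1)$, so that \eqref{eq:maina} and \eqref{eq:mainb} together control the whole interval $[0,1]$.
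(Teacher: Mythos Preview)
Your overall architecture is correct and close to the paper's --- subtract the $T$-- and $T_c$--equations to kill the $\log$--divergence for part~(a), and invoke Frank--Lemm Ginzburg--Landau theory for part~(b). The part~(b) sketch matches the paper essentially verbatim (the paper computes $|\psi_{\rm GL}|=C_{\rm univ}T_c/\Delta_0(\sqrt\mu)(1+o(1))$ directly via the ratio of the quadratic and quartic GL coefficients, which is your ``Taylor coefficients of $G$'' remark). Two points in part~(a), however, deviate from the paper and one of them is a genuine gap.

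\textbf{Different subtraction mechanism.} Your pairing identity $\langle\fraka_0,(K_T^\Delta-K_{T_c})\alpha\rangle=0$ is correct and yields $\int\Delta_0(p)\Delta(p)\bigl[1/K_{T_c}-1/K_T^\Delta\bigr]\,\mathrm dp=0$, but it mixes the \emph{two different} ground states $\fraka_0,\alpha$. The paper instead pushes the Birman--Schwinger analysis down to $L^2(\S^{d-1})$ and uses that the reduced operators $S_{T,\Delta}$ and $S_{T_c,0}$ share the \emph{same} ground state, namely the constant function $u$ (Perron--Frobenius, since $\hat V\le0$). This gives the clean identity $-1=\lambda\,m(T,\Delta)\langle u|\cdots|u\rangle$ at both temperatures and hence a direct formula for $m(T,\Delta)-m(T_c,0)$; the error is then $\|V^{1/2}(M_{T,\Delta}-M_{T_c,0})|V|^{1/2}\|_{\rm HS}$, which is bounded by $Ce^{-c/\lambda}$ via kernel estimates (this is where $|\cdot|^2V\in L^1$ enters). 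Your route can be made to work, but you must then control the weights $\Delta_0(p)/\Delta_0(\sqrt\mu)$ and $\Delta(p)/\Delta(\sqrt\mu)$ against the singular bracket, which needs the same Lipschitz a~priori bounds (the paper's Lemma on $\Delta(p)\le C\Delta(\sqrt\mu)$, $|\Delta(p)-\Delta(q)|\le C\Delta(\sqrt\mu)|p-q|$) anyway.

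\textbf{The gap: the $h$ factor in $\mathcal E$ and the inversion.} You claim $|\mathcal E|\lesssim h\,e^{-c/\lambda}$ because ``the difference vanishes at $T=T_c$''. This is circular: the $T$--dependent pieces of $\mathcal E$ scale with $T_c-T=h^2T_c$ and with $\|\Delta\|_\infty$, and a~priori you only know $\|\Delta\|_\infty\lesssim T_c\sim e^{-c/\lambda}$, \emph{not} $\|\Delta\|_\infty\lesssim hT_c$ (the latter is essentially the statement you are trying to prove). The paper accordingly only obtains $|m(T,\Delta)-m(T_c,0)|\le Ce^{-c/\lambda}$ with \emph{no} $h$ factor, and compensates by using a \emph{square--root} inversion lemma: if $G(\phi,h)=R_U$ then $\phi=\fBCS(h)+O(|R_U|^{1/2})$, proved via the explicit rescaling identity $\phi=e^{-R_U}\fBCS\bigl(\sqrt{h^2+(1-h^2)(1-e^{R_U})}\bigr)$ from Langmann--Triola. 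Your linear implicit--function inversion with slope $|\partial_\delta G|\gtrsim\fBCS(h)\sim h$ would instead give $\phi-\fBCS(h)=O(h^{-1}\mathcal E)$, which without the unjustified $h$ in $\mathcal E$ produces a relative error $O(h^{-2}e^{-c/\lambda})$, too weak for \eqref{eq:maina}. So either justify $\|\Delta\|_\infty\lesssim hT_c$ uniformly in $\lambda$ (the paper does not, and it is unclear how to without GL input that itself carries $e^{c'/\lambda}$ constants), or replace your inversion step by the square--root lemma.
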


For the special case $h=1$, i.e.~$T=0$, \eqref{eq:maina} reproduces the results from \cite{Hainzl.Seiringer.2008} (for $d=3$) and \cite{Henheik.Lauritsen.ea.2023} (for $d=1,2$), which state the universality 
\begin{equation*}
\lim\limits_{\lambda \to 0} \frac{\Xi(T=0)}{T_c} = \frac{\pi}{\ee^{\gamma}}
\end{equation*}
at $T=0$. Moreover, by \eqref{eq:maina} again, we find that, uniformly in temperatures bounded away from $T_c$, i.e.~$h \in [\epsi, 1]$ for some fixed $\epsi >0$,
\begin{equation*}
\lim\limits_{\lambda \to 0} \frac{\Xi(h)}{T_c} = \fBCS(h)\,,
\end{equation*}
recovering the universality result in \cite{Langmann.Triola.2023} (for $d=3$), with an exponential speed $O(\ee^{-c/\lambda})$ of convergence. In the complementary case, for temperatures very close to the critical temperature, $T \approx T_c$, the question of universality is (i) physically more interesting due to the phase transition from superconducting to normal behavior and (ii) mathematically more delicate than in the previous scenarios. This is because now there are \emph{two} small parameters $\lambda$ and $h$, instead of $\lambda$ only, and the error term in \eqref{eq:maina} might actually be large compared to one. However, now involving both, \eqref{eq:maina} \emph{and} \eqref{eq:mainb}, we find that 
 \begin{equation}\label{eq:limit.orders}
\lim\limits_{\substack{ \lambda,h \to 0 \\ \ee^{-c/\lambda} \ll h}}\frac{\Xi(h)}{T_c\, h } = C_{\rm univ} \quad \text{and} \quad \lim\limits_{\substack{ \lambda,h \to 0 \\  h \ll \ee^{-c'/\lambda}}}\frac{\Xi(h)}{T_c\, h } = C_{\rm univ}
 \end{equation}
with the aid of Lemma \ref{lem:fBCS}~(iii). In particular, the ratio $\Xi(h)/(T_c h)$ converges to the \emph{same} universal constant $C_{\rm univ}$ (recall \eqref{eq:Cuniv}) in both orders of limits, $\lim_{\lambda \to 0} \lim_{h \to 0}$ and $\lim_{h \to 0} \lim_{\lambda \to 0}$.

\begin{rmk}[Joint limit] 
A careful inspection of the proof reveals that the constants $c,c'$ satisfy $c < c'$.
In particular, the proof does \emph{not} allow the two regimes considered in \eqref{eq:limit.orders} to be overlapping and we cannot prove that 
$\lim_{\lambda,h\to0}\frac{\Xi(h)}{T_ch } = C_{\textnormal{univ}}$ in any joint limit. 
We expect this to hold in any joint limit, however, as we saw for the particular example from \cite{bcs.original} in \Cref{prop:BCSoriguniv}
\end{rmk}

\begin{rmk}[{Comparison of assumptions with \cite{Langmann.Triola.2023}}]
Compared to the similar result in \cite{Langmann.Triola.2023} our assumptions hold for a slightly different class of potentials. 
The assumptions of \cite{Langmann.Triola.2023} are essentially on the smoothness of the interaction $V$ 
(formulated via some regularity/decay assumption on the Fourier transform $\hat V$).
Our assumptions on the other hand are on the regularity/decay of $V$. 
In particular, our assumptions cover the examples of \cite[Table I]{Langmann.Triola.2023} 
which are \emph{not} covered by the assumptions of \cite{Langmann.Triola.2023}. 
These are (in three dimensions)
\begin{equation*}
V_{\textnormal{Yukawa}}(x) = \frac{e^{-|x|}}{4\pi |x|},
\qquad 
V_{a\textnormal{Y} + b\textnormal{E}}(x) = \frac{(2a+b|x|)e^{-|x|}}{8\pi |x|},
\qquad 
V_{x\textnormal{-box}}(x) = \frac{3\theta(1-|x|)}{4\pi}\,.
\end{equation*}
\end{rmk}

\begin{rmk}[On the proof]\label{rmk:proof} The main ideas in the proof of Theorem \ref{thm:main} are the following. 
\begin{itemize}
\item[(a)] For part (a), we crucially use that both, $K_{T}^\Delta + \lambda V$ and $K_{T_c} + \lambda V$, have lowest eigenvalue zero. We then consider their corresponding Birman-Schwinger (BS) operators and use that, for $\lambda$ small enough, two naturally associated operators on the Fermi sphere both have the \emph{same} ground state. Evaluating the difference of these two associated operators in this common ground state, we find that a \emph{difference} of two logarithmically divergent integrals, similarly to \eqref{eq:fBCS}, vanishes up to exponentially small errors $O(\mathrm{e}^{-c/\lambda})$. 

The removal of the $\log$-divergence in this way (which -- in a similar fashion -- was the major insight in \cite{Langmann.Triola.2023}) is the key idea to (i) access also non-zero temperatures and (ii) obtain extremely precise error estimates (compared to all the previous results mentioned in Section \ref{subsubsec:T=0}). 
\item[(b)] For part (b), we employ Ginzburg-Landau (GL). The principal realm of GL theory is to describe superconductors and superfluids close to their critical temperature $T_c$. In this regime, when superconductivity is weak, the main idea is that the prime competitor for developing a small off-diagonal component $\hat{\alpha}$ for a BCS minimizer, is the \emph{normal state} $\Gamma_{\rm FD} = (\gamma_{\rm FD}, 0)$, with  $\gamma_{\rm FD}$ given by the usual Fermi-Dirac distribution (recall \eqref{eq:FDdistr}). Moreover, to leading order, the off-diagonal component $\hat{\alpha}$ lies in the kernel (which agrees with the ground state space) of the operator $K_{T_c} + \lambda V$.

The main input, which we use, is that every minimizer $(\gamma, \alpha)$ of the BCS functional has $\alpha \approx h \psi \mathfrak{a}_0$ with $\mathfrak{a}_0 \in \ker (K_{T_c} + \lambda V)$ fixed and $\psi \in \C$ minimizing the corresponding GL functional \cite[Theorem 2.10]{Frank.Lemm.2016}. Taking the convolution of $\hat{\mathfrak{a}}_0$ with $\hat{V}$, we find the universal constant \eqref{eq:Cuniv} appearing in $\Xi/(T_c h) \approx C_{\rm univ}$. 
\end{itemize}
Moreover, the ``additional assumptions" in Theorem \ref{thm:main} are not quite rigid, meaning that they can be weakened in the following sense. 
	\begin{itemize}
		\item[(a)] 
In case that $|\cdot|^{2\alpha} V \in L^1$ for a $0 < \alpha \leq 1$ the error term in \Cref{eq:maina} should instead be $O(h^{-1}e^{-c\alpha/\lambda})$ 
with the constant $c$ then being independent also of $\alpha$.

\item[(b)] In case that $(1 + |\cdot|) V \in L^p(\R^d)$ for $p < 2$, the factor $h$ in the first error term in \eqref{eq:mainb} would not appear raised to the first power but with exponent
\begin{equation*}
(3p-4)/p \quad \text{for} \quad d=1\,, \quad (3p-4)/p -\epsilon\quad \text{for} \quad d=2\,, \quad \text{and} \quad (4p-6)/p \quad \text{for} \quad d=3\,. 
\end{equation*}
	\end{itemize}
\end{rmk}

\begin{rmk}[Other limits] 
Although, in this paper, we considered only the weak coupling limit, we expect the relation $\Xi(h) \approx T_c \fBCS(h)$ to hold also in other limiting regimes in which ``superconductivity is weak", that is, e.g., the low-\footnote{For dimensions $d=1,2$, the same caveats mentioned in \cite[Remark~2.8]{Henheik.Lauritsen.ea.2023} apply.} and high-density limit, that were studied in \cite{hainzl.seiringer.scat.length, lauritsen.energy.gap.2021} and \cite{Henheik.2022, Henheik.Lauritsen.2022}, respectively. This idea is already contained in \cite{Langmann.Triola.2023}, where the authors considered a ``universal" parameter $\lambda$ in \cite[Eq.~(7)]{Langmann.Triola.2023}, which can be small for various physical situations. 
\end{rmk}

\begin{rmk}[Non-universality]
In the proof of part (a) we recover also the formula \cite[Equation (16)]{Langmann.Triola.2023}, 
that
\begin{equation}\label{eqn.nonuniversality}
\frac{\Delta(p)}{\Delta(\sqrt\mu)} = F(p) + O(e^{-c/\lambda})
\end{equation}
for some function $F$ not depending on the temperature and some constant $c>0$. It depends on the interaction $V$ however, 
hence why this is called a ``non-universal'' feature in \cite{Langmann.Triola.2023}. The proof of \eqref{eqn.nonuniversality} is given in Section \ref{subsec:nonuniv}. 
\end{rmk}

\subsection{The case of pure angular momentum for \texorpdfstring{$d=2$}{d=2}}\label{sec.2d.ell.ne.0}
In this section, we generalize Theorem \ref{thm:main} from $s$-wave superconductors to two-dimensional systems which have a definite (or \emph{pure}) angular momentum $\ell_0 \in \N_0$, which can differ from~$0$. 
\begin{assumption}[Pure angular momentum] \label{ass:pureell}
Let $V \in L^1(\R^2)$ be radially symmetric and \emph{attractive on the Fermi sphere}, i.e.~the lowest eigenvalue $e_\mu$ of $\mathcal{V}_\mu$ is strictly negative (recall \eqref{eq:Vmu}--\eqref{eq:emu}). Moreover, suppose that for all $\lambda > 0$ small enough the lowest eigenvalue of $K_{T_c} + \lambda V$ is at most twice degenerate, i.e.~$\dim \ker (K_{T_c} + \lambda V) \in \{1,2\}$. 
\end{assumption}
Since $K_{T_c}$ commutes with the Laplacian, Assumption \ref{ass:pureell} ensures the ground state of $K_{T_c} + \lambda V$ to have definite angular momentum. More precisely, it holds that 
\begin{equation} \label{eq:pureang}
\ker(K_{T_c} + \lambda V) = \mathrm{span} \{\rho\} \otimes \mathcal{S}_{\ell_0} \quad  \text{with} \quad \mathcal{S}_{\ell} := \mathrm{span}\big\{ \mathrm{e}^{\pm \mathrm{i} \ell \varphi} \big\} \subset L^2(\Sph^1) \quad \text{for some} \quad \ell_0 \in  \N_0\,,
\end{equation}
where $\rho \in L^2((0,\infty); r \ud r)$ is a ($\lambda$-dependent) radial function.\footnote{In fact, the angular momentum of the kernel of $K_{T_c} + \lambda V$ must be \emph{even}, i.e.~$\ell_0 \in 2 \N_0$. This is because BCS theory is formulated for reflection symmetric $\alpha$, whence $K_{T_c}+ \lambda V$ is naturally defined on the space of reflection symmetric functions only. }

We can now formulate our main result in the case of pure angular momentum for $d=2$. 
(We again remark that, in the statement of \Cref{thm:main2}, one may replace $\Xi$ by the order parameter $|\Delta(\sqrt\mu)|$, see \Cref{remark.gap=order-parameter} above.)
\begin{thm}[{BCS energy gap for $2d$ pure angular momentum}] \label{thm:main2} 
  Let $d =2$, $\mu > 0$ and let $V$ satisfy Assumptions \ref{ass:basic} and \ref{ass:pureell}.
Define the critical temperature $T_c$ and energy gap $\Xi$ as in \eqref{eq:Tc} and \eqref{eq:energygap} with interaction $\lambda V$ for a $\lambda > 0$
and $\Delta$ being \emph{any~(arbitrary!)} non-zero solution the BCS gap equation \eqref{eq:gapeq} with interaction $\lambda V$. 

Then, with $\fBCS(h), h = \sqrt{1 - T/T_c}$ being the function defined via \eqref{eq:fBCS}, we have the following: 
\begin{itemize}
	\item[(a)] 
  Assume additionally that $V\in L^2(\R^2)$, $\hat V\in L^r(\R^2)$ for some $1\leq r < 2$ and that $|\cdot|^2 V \in L^1(\R^d)$.
  Then there exists $0 \leq \tilde{T} < T_c$ with $\tilde{T}/T_c \le \mathrm{e}^{-\mathfrak{c}/\lambda}$ for some $\mathfrak{c}>0$, 
  such that for all temperatures $T \in (\tilde{T}, T_c)$
  it holds that
	\begin{equation} \label{eq:maina2}
		\Xi(h) = T_c \, \fBCS(h) \, \big(1 + O(h^{-1} \ee^{-c/\lambda})\big)
	\end{equation}
	for some constant $c>0$ independent of $\lambda$ and $h$. 
	\item[(b)] Assuming additionally that $(1 + |\cdot|) V \in L^2(\R^d)$, it holds that
	\begin{equation} \label{eq:mainb2}
		\Xi(h) = T_c \, \fBCS(h) \, \big(1 + O(h \ee^{c'/\lambda}) + o_{\lambda \to 0}(1)\big)
	\end{equation}
	for some constant $c'>0$ independent of $\lambda$ and $h$ and where $o_{\lambda \to 0}(1)$ vanishes as $\lambda \to 0$ uniformly in $h$.
\end{itemize}
\end{thm}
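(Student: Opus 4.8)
The plan is to run, in parallel, the two complementary arguments behind \Cref{thm:main} --- the Birman--Schwinger comparison for part~(a) and the Ginzburg--Landau reduction for part~(b) --- while keeping careful track of the angular momentum sector $\mathcal S_{\ell_0}$ throughout. The scalar identities that produce $\fBCS$ and the constant $C_{\rm univ}$ are the same as in the $s$-wave case; what is genuinely new is (i) that $V$ need no longer be of negative type (we only assume $e_\mu<0$), so Perron--Frobenius is unavailable and the relevant ground states must be located inside $\mathcal S_{\ell_0}$ by hand, and (ii) the two-fold degeneracy, which forces a gap solution to carry a nontrivial phase $\mathrm{e}^{\pm\I\ell_0\varphi}$ and produces exactly the two branches $\Delta_\pm$ with $|\Delta_+|=|\Delta_-|$.

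\textbf{Part (a).} I would first establish the structural normal form of a gap solution. Since $\hat V$ is radial, convolution with $\hat V$ preserves each angular momentum sector of $L^2(\R^2)$, so the ansatz $\Delta(p)=\delta(|p|)\,\mathrm{e}^{\pm\I\ell_0\varphi}$ is consistent with the gap equation \eqref{eq:gapeq}, and --- using that $K_{T_c}+\lambda V$ restricted to $\mathcal S_{\ell_0}$ has a one-dimensional kernel (\Cref{ass:pureell}) together with the classification of minimizers from \Cref{sec.2d.ell.ne.0} --- any non-zero solution has this form, with the \emph{same} radial profile $\delta$ for both branches up to a global phase; in particular $|\Delta|$, $E_\Delta$, $K_T^\Delta$ are radial and $\Xi$ is independent of the branch. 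With this in hand, the Birman--Schwinger analysis of \Cref{thm:main}~(a) runs on the sector $\mathcal S_{\ell_0}$: both $K_T^\Delta+\lambda V\ge0$ and $K_{T_c}+\lambda V\ge0$ have lowest eigenvalue $0$, so their Birman--Schwinger operators $-\lambda|V|^{1/2}(K_T^\Delta)^{-1}\sgn(V)|V|^{1/2}$ and $-\lambda|V|^{1/2}(K_{T_c})^{-1}\sgn(V)|V|^{1/2}$ both have largest eigenvalue $1$, with eigenvectors in (the $|V|^{1/2}$-image of) $\{g(|p|)\mathrm{e}^{\pm\I\ell_0\varphi}\}$ converging as $\lambda\to0$ to the same function built from the ground state of $\mathcal V_\mu$ in $\mathcal S_{\ell_0}$. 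Subtracting the two eigenvalue-$1$ conditions cancels the logarithmically large on-shell contributions and leaves an \emph{exact} scalar identity: an integral over a shell of thickness of order $\mu$ about the Fermi surface vanishes. Extending this integral to all of $\R$ and replacing $\hat V$ by its Fermi-sphere value each cost only $O(\mathrm{e}^{-c/\lambda})$ --- this is where $|\cdot|^2V\in L^1$, and $V\in L^2$, $\hat V\in L^r$ for the convergence of the two-dimensional expansion, enter --- so after the change of variables $x=(p^2-\mu)/T_c$ the identity becomes \eqref{eq:fBCS} up to that error. Hence $|\Delta(\sqrt\mu)|=T_c\,\fBCS(h)\,(1+O(h^{-1}\mathrm{e}^{-c/\lambda}))$, and the same regularity of $\delta$ near $\sqrt\mu$ lets us pass to $\Xi$ (cf.\ \Cref{remark.gap=order-parameter}).

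\textbf{Part (b).} Here I would invoke the Ginzburg--Landau reduction of \cite[Theorem~2.10]{Frank.Lemm.2016}: for $T=T_c(1-h^2)$, any BCS minimizer $(\gamma,\alpha)$ satisfies $\alpha\approx h\bigl(\psi_+\,\rho\,\mathrm{e}^{\I\ell_0\varphi}+\psi_-\,\rho\,\mathrm{e}^{-\I\ell_0\varphi}\bigr)$, where $\rho\,\mathrm{e}^{\pm\I\ell_0\varphi}$ span $\ker(K_{T_c}+\lambda V)$ and $(\psi_+,\psi_-)\in\C^2$ minimizes the effective GL functional on $\C^2$. A direct computation --- the quadratic part is $\propto|\psi_+|^2+|\psi_-|^2$ because the radial multiplier $\partial_T K_{T_c}$ has vanishing cross term between $\mathrm{e}^{+\I\ell_0\varphi}$ and $\mathrm{e}^{-\I\ell_0\varphi}$ for $\ell_0\neq0$, while the (always positive) quartic part is $\propto(|\psi_+|^2+|\psi_-|^2)^2+2|\psi_+|^2|\psi_-|^2$, obtained by integrating $|\psi_+\mathrm{e}^{\I\ell_0\varphi}+\psi_-\mathrm{e}^{-\I\ell_0\varphi}|^4$ over the angle --- shows the minimizers are exactly the \emph{pure} states $\psi_-=0$ or $\psi_+=0$ with $|\psi_\pm|$ fixed, reproducing the branches $\Delta_\pm$ with $|\Delta_+|=|\Delta_-|$. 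Convolving $\hat\rho(|p|)\mathrm{e}^{\pm\I\ell_0\varphi}$ with the radial $\hat V$ gives $\Delta(p)\approx\delta(|p|)\mathrm{e}^{\pm\I\ell_0\varphi}$, and at $|p|=\sqrt\mu$ the universal constants of the BCS-to-GL expansion combine, exactly as in \Cref{thm:main}~(b), into $|\Delta(\sqrt\mu)|=T_c\,C_{\rm univ}\,h\,(1+O(h\,\mathrm{e}^{c'/\lambda})+o_{\lambda\to0}(1))=T_c\,\fBCS(h)\,(1+O(h\,\mathrm{e}^{c'/\lambda})+o_{\lambda\to0}(1))$ by \Cref{lem:fBCS}~(iii); the hypothesis $(1+|\cdot|)V\in L^2$ again upgrades this to a statement about $\Xi$.

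\textbf{Main obstacle.} The delicate point, entirely within part~(a), is the reason for restricting to $T\in(\tilde T,T_c)$: away from the Ginzburg--Landau window one must still know that the minimizing $\alpha$ is the \emph{ground state} of $K_T^\Delta+\lambda V$ --- not merely some element of its kernel --- and that it stays inside the single sector $\mathcal S_{\ell_0}$, since otherwise the Birman--Schwinger eigenvalue need not equal $1$ and the normal form $\Delta(p)=\delta(|p|)\mathrm{e}^{\pm\I\ell_0\varphi}$ may fail. In the $s$-wave case this is supplied by \Cref{prop.KTDelta.geq.0}, which has no counterpart in the degenerate setting; instead one must propagate the sector/ground-state property downward in temperature starting from $T_c$ (a continuity/implicit-function argument), and its range of validity can only be controlled down to an exponentially small $\tilde T$, i.e.\ $\tilde T/T_c\le\mathrm{e}^{-\mathfrak c/\lambda}$. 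I expect this continuation step, rather than any of the scalar asymptotics, to be the technical heart of the proof.
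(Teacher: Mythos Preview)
Your proposal is essentially correct and follows the same overall route as the paper: for part~(a) the Birman--Schwinger comparison is run inside the angular momentum sector $\mathcal S_{\ell_0}$ with eigenvectors $u_{\pm\ell_0}$ replacing the constant $u$, and for part~(b) the Ginzburg--Landau reduction produces the two-component functional whose quartic term $|\psi_+|^4+|\psi_-|^4+4|\psi_+|^2|\psi_-|^2$ (which indeed equals your $(|\psi_+|^2+|\psi_-|^2)^2+2|\psi_+|^2|\psi_-|^2$) forces the minimizers to be pure states.

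The one place where you and the paper diverge is precisely the ``main obstacle'' you flag. You propose to \emph{prove} that, for $T\in(\tilde T,T_c)$, the minimizing $\alpha$ has the form $\hat\alpha_0(|p|)\,\mathrm e^{\pm\mathrm i\ell_0\varphi}$ and that $K_T^{\Delta_0}+\lambda V\ge 0$, via a downward continuity/implicit-function argument starting from $T_c$; you expect this to be the technical heart. The paper does \emph{not} do this: it quotes these facts, together with the identification $\tilde T=T_c(\ell_1)$, directly from \cite[Theorem~2.1, Proposition~4.3, Remark~2.6]{Deuchert.Geisinger.ea.2018} (this is also why the additional hypotheses $V\in L^2$ and $\hat V\in L^r$ for some $r<2$ appear --- they are exactly the assumptions of that theorem). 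The bound $\tilde T/T_c\le \mathrm e^{-\mathfrak c/\lambda}$ then follows from the weak-coupling asymptotics for $T_c(\ell_1)$ in terms of the first excited eigenvalue $e_\mu^{(1)}$ of $\mathcal V_\mu$, as in \cite{frank.hainzl.naboko.seiringer,hainzl.seiringer.2010}. So what you identify as the hard new step is in fact imported from the literature, and the remainder of the argument is, as you say, a sector-by-sector rerun of the $s$-wave proof.
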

The proof of Theorem \ref{thm:main2} is given in Section \ref{subsec:proof2d angmom}. 

\begin{remark}[On the assumptions] 
The additional assumptions in part (a) here compared to \Cref{thm:main} (namely $V\in L^2$ and $\hat V \in L^r$) are those of \cite[Theorem 2.1]{Deuchert.Geisinger.ea.2018}.
The proof of \Cref{eq:maina2} centrally uses this result. As discussed in \cite[Remark 2.3]{Deuchert.Geisinger.ea.2018}
these additional assumptions are expected to be of a technical nature.
\end{remark}

\begin{remark}[{The temperature $\tilde T$}] \label{rmk:extensionT}
The presence of the temperature $\tilde T$ in \Cref{thm:main2}~(a) arises from the first excited eigenvalue of $K_{T_c}+\lambda V$, 
see \cite[Remark 2.2]{Deuchert.Geisinger.ea.2018}. 
As discussed in the proof, the temperatures $T_c,\tilde T$ are given by  
$T_c=T_c(\ell_0)$ and $\tilde T = T_c(\ell_1)$, the critical temperatures restricted to angular momenta $\ell_0$ and $\ell_1$,
for some angular momenta $\ell_0\ne \ell_1$, 
see also \cite[Remark 2.2]{Deuchert.Geisinger.ea.2018}.
For temperatures $T \in (\tilde T, T_c)$ the BCS minimizer(s) then have angular momentum $\ell_0$ \cite[Theorem 2.1]{Deuchert.Geisinger.ea.2018}.
For temperatures $T < \tilde T$ however, we do not in general know whether the BCS minimizer(s) have angular momentum $\ell_0$. 
The proof crucially uses that the minimizer(s) have a definite angular momentum. 
If we however know a priori, that the BCS minimizer(s) have angular momentum $\ell_0$ for some larger ranger of temperatures $(T_1,T_c)$,
then the formula in \Cref{eq:maina2} holds in this larger range of temperatures. 
\end{remark}

\begin{remark}[Nodes of the gap function]
As already mentioned in \Cref{subsubsec:TcXiintro}, we establish during the proof, that any solution $\Delta$ of the BCS gap equation \eqref{eq:gapeq} has a radially symmetric
absolute value, $\abs{\Delta(p)}$, which is, moreover, independent of the particular solution $\Delta$. 
In particular, every solution $\Delta$ of the BCS gap equation \eqref{eq:gapeq} does \emph{not} have nodes on the Fermi surface. 
This contrasts many examples of $d$-wave superconductors in the physics literature, 
where a (necessarily) non-radial interaction $V$ leads to a gap function $\Delta$ with nodes on the Fermi surface, see, e.g., \cite{borkowski1994distinguishing, Maki.Won.1996, fletcher2009evidence, zhang2012nodal}. 
\end{remark}

\begin{remark}[{Non-extension to three dimensions}] \label{rmk:3dangmom}
The formula $\Xi(h) \approx T_c \fBCS{}(h)$ is \emph{not} expected to hold in three dimensions for non-zero angular momentum,
see for instance \cite[Figure~14.6]{Poole.Farach.ea.2007}. More precisely, we have the following: 
\begin{itemize}
	\item[(i)] For non-zero angular momentum in three dimensions, our method of proving Theorem~\ref{thm:main2}~(a) breaks down. In fact, we crucially use that $K_T^{\Delta} + \lambda V \geq 0$ for $\Delta = -2\lambda \widehat{V\alpha}$ with $\alpha$ a minimizer of the BCS functional. However, as shown in \cite[Proposition 2.11]{Deuchert.Geisinger.ea.2018} this implies that $\abs{\hat \alpha}$ is a radial function. In particular, in three dimensions, it cannot have a definite non-zero angular momentum.
	\item[(ii)] \emph{Assume} that we know \emph{a priori} that a solution of the BCS gap equation \eqref{eq:gapeq} (in spherical coordinates) satisfies $\Delta(p, \omega) = \Delta_0(p) Y_{\ell}^m(\omega)$ with $Y_\ell^m$ being the usual $L^2$-normalized (complex) spherical harmonic with $\ell \in \N_0$ and $m \in \{-\ell, ... , \ell\}$. Then, by application of \cite[Theorem 2.10]{Frank.Lemm.2016}, following very similar arguments to Sections \ref{subsec:GLproof} and \ref{subsubsec:partb}, we find that the radial part of the gap function is given by 
	\begin{equation} \label{eq:modCuniv}
|\Delta_0(\sqrt{\mu})| \approx c_{\ell,m} C_{\rm univ} h T_c 
	\end{equation}
	on the Fermi sphere $\{ p^2 = \mu\}$. Here $C_{\rm univ}$ was defined in \eqref{eq:Cuniv} and we denoted
	\begin{equation} \label{eq:clm}
		\begin{split}
		c_{\ell,m} :=& \left(\int_{\Sph^2} |Y_{\ell}^m(\omega)|^4 \ud \omega\right)^{-1/2} \\
		=&  \left(\sum_{L =0}^{2 \ell} \frac{(2 \ell+1)^2}{4 \pi (2 L +1)} |\langle \ell, \ell ; 0,0 \vert L;0 \rangle|^2 \, |\langle \ell, \ell ; m,m \vert L;2m \rangle|^2\right)^{-1/2}
		\end{split}
	\end{equation}
with $\langle \ell_1, \ell_2; m_1, m_2 | L;M \rangle$ being the well tabulated Clebsch-Gordan coefficients (see, e.g.,~\cite[p.~1046]{cohentanoudj}). The relation \eqref{eq:clm} shows that, in particular, even in the subspace of fixed angular momentum $\ell \neq 0$, the behavior \eqref{eq:modCuniv} is non-universal due to a non-trivial dependence on $m \in \{-\ell, ..., \ell \}$, as, for example (see \cite[Eq.~(6.8)]{Frank.Lemm.2016}), 
$$
c_{2,0} = \sqrt{\frac{28 \pi}{15}} \quad \text{and} \quad c_{2, \pm 1} = c_{2, \pm 2} = \sqrt{\frac{14 \pi}{5}}\,. 
$$

For temperatures $0 \le T \le T_c$ and $h:= \sqrt{1 - T/T_c}$, we expect \eqref{eq:modCuniv} to generalize to 
\begin{equation*}
|\Delta_0(\sqrt{\mu})| \approx T_c \, \fBCS^{(\ell, m)}(h)
\end{equation*}
with $\fBCS^{(\ell, m)} : [0,1] \to [0, \infty)$ being implicitly defined via
\begin{equation} \label{eq:fBCSangmom}
\int_{0}^\infty \ud s \int_{\Sph^{2}} \ud \omega \left\{\frac{\tanh \frac{\sqrt{s^2 + \big(\fBCS^{(\ell,m)}(h)\big)^2  |Y_{\ell}^m(\omega)|^2}}{2(1-h^2)}}{\sqrt{s^2 + \big(\fBCS^{(\ell,m)}(h)\big)^2 |Y_{\ell}^m(\omega)|^2}}   
-  \frac{\tanh \frac{s}{2}}{s} \right\}  |Y_{\ell}^m(\omega)|^2 = 0\,, 
\end{equation}
similarly to \cite[Eq.~(14.33)]{Poole.Farach.ea.2007}. For $\ell = m= 0$, \eqref{eq:fBCSangmom} yields that $\fBCS^{(0,0)} = (4 \pi)^{1/2} \fBCS$ with $\fBCS$ from \eqref{eq:fBCS} due to the $L^2$-normalization of the spherical harmonics (recall $\Delta(p, \omega) = \Delta_0(p) Y_{\ell}^m(\omega)$). 
\end{itemize}
A detailed analysis of the three-dimensional case with non-zero angular momentum is deferred to future work. 
\end{remark}

\section{Proofs of the main results} \label{sec:proofs}
This section contains the proofs of our main results formulated in Section \ref{sec:results}.

\subsection{Proof of Proposition \ref{prop:BCSoriguniv}} \label{subsec:proofBCSorig}
{For ease of notation, we shall henceforth write $\lambda$ instead of $\lambda_{\rm BCS}$.}
From the explicit form \eqref{eq:BCSgap} it is clear that $\Xi = \Delta$ and $\delta(h) \equiv \delta = \Delta/T_c$ is determined through \eqref{eq:diffformula}. Hence, the goal is to show that $\delta(h)/\fBCS(h) = 1 + O(\mathrm{e}^{-1/\lambda})$ uniformly in $h \in [0,1]$. The proof of this is conducted in three steps. 

\subsubsection{A priori bound on \texorpdfstring{$\delta$}{delta}}
We shall prove the following lemma. 

\begin{lemma} \label{lem:delta(h)apriori}
For $\delta = \delta(h)$ defined through \eqref{eq:diffformula} and $\lambda > 0$ small enough, it holds that 
\begin{equation} \label{eq:delta(h)apriori}
\delta(h) \le C h\,. 
\end{equation}
\end{lemma}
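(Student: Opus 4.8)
The plan is to exploit the defining relation \eqref{eq:diffformula}, namely
\begin{equation*}
\int_{0}^{T_D/T_c} \left\{ \frac{\tanh\big(\frac{\sqrt{x^2 + \delta^2}}{2(1 - h^2)}\big)}{\sqrt{x^2 + \delta^2}}  - \frac{\tanh\big(\frac{x}{2}\big)}{x} \right\} \mathrm{d}x = 0 \,,
\end{equation*}
treating it as a statement about a balance between a positive and a negative contribution. First I would record the elementary monotonicity facts about the function $g_\beta(y) := \tanh(y/(2\beta))/y$: for fixed $\beta>0$ it is strictly decreasing in $y>0$, and for fixed $y>0$ it is strictly increasing in $\beta>0$. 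Since $h\in[0,1]$ means $1-h^2 \le 1$, the first term's ``temperature'' $1-h^2$ is at most the second term's ``temperature'' $1$, so \emph{if} we had $\delta = 0$ the integrand would be pointwise $\ge 0$ (with strict inequality on a set of positive measure whenever $h>0$), forcing the integral to be strictly positive. Hence $\delta$ must be strictly positive, and moreover the larger $h$ is, the larger $\delta$ must be to compensate — this already suggests $\delta \lesssim h$ is the right scaling near $h=0$.

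The core of the argument is a quantitative version of this. I would split the integral at a scale of order one (say over $x\in(0,1)$ versus $x>1$), and on the bounded region Taylor-expand both terms in the small quantities $\delta^2$ and $h^2$. Using $\frac{d}{d\beta}\big[\tanh(y/(2\beta))/y\big]$ and $\frac{d}{d(\delta^2)}\big[\tanh(\sqrt{x^2+\delta^2}/(2\beta))/\sqrt{x^2+\delta^2}\big]$, one gets that near $\delta=h=0$ the left side of \eqref{eq:diffformula} behaves like $a\, h^2 - b\, \delta^2 + (\text{higher order})$ for explicit positive constants $a,b$ (these are exactly the integrals producing $C_{\rm univ}$ in Lemma \ref{lem:Cuniv}); the contribution from $x>1$ is controlled because there $\tanh \approx 1$ and the difference of the two square-root terms is $O(\delta^2/x^3)$, hence integrable and small. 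Setting this leading balance to zero yields $\delta^2 = \frac{a}{b} h^2 (1 + o(1))$, i.e. $\delta(h) \le Ch$ for $h$ small; for $h$ bounded away from $0$ (up to $h=1$) one instead uses the crude bound that $\delta$ stays in a fixed compact interval — which follows because the integral \eqref{eq:diffformula} is continuous and strictly monotone in $\delta$ on $[0,\infty)$ and blows up (to $-\infty$ relative to the fixed positive first-temperature term) as $\delta\to\infty$ — so $\delta(h)\le \delta(1) =: C' < \infty$, and enlarging $C$ if necessary covers all $h\in[0,1]$.

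The main obstacle I anticipate is making the Taylor expansion uniform: one must check that the remainder terms are controlled \emph{uniformly in $h\in[0,1]$} and in the (a priori unknown) size of $\delta$, and that the upper limit $T_D/T_c$ — which tends to $+\infty$ as $\lambda \to 0$ since $T_c \sim e^{-1/\lambda}$ — can be freely replaced by $+\infty$ at the cost of an $O(e^{-1/\lambda})$ error, so that the constants $a,b$ are genuinely the full-line integrals. A secondary technical point is handling the crossover region $x \sim \delta$, where $\sqrt{x^2+\delta^2}$ is not well-approximated by $x$; there one bounds the integrand directly by its supremum times the length $O(\delta)$ of the region, which contributes only at order $\delta$ (harmless once we know $\delta = O(h)$, and for the a priori bound one can bootstrap: a rough bound $\delta \le C h^{2/3}$ or similar from a cruder estimate feeds back into a sharper one). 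I expect no genuinely hard analytic difficulty — everything reduces to dominated-convergence-type estimates for the explicit kernel $g_\beta$ — but the bookkeeping of uniformity is where care is needed.
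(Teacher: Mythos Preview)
Your approach is correct but takes a different route from the paper's. After the same crude uniform bound $\delta(h)\le C$ from monotonicity, the paper applies the implicit function theorem to the function $G_\lambda(h,\delta)$ defined by the left side of \eqref{eq:diffformula}, obtaining an ODE
\[
\frac{\partial \delta}{\partial h} = \frac{(1-h^2)h}{\delta}\cdot\left(\int_0^{T_D/T_c}\frac{\ud x}{\cosh^2\big(\tfrac{\sqrt{x^2+\delta^2}}{2(1-h^2)}\big)}\ \middle/\ \int_0^{T_D/T_c}\frac{g_1\big(\tfrac{\sqrt{x^2+\delta^2}}{1-h^2}\big)}{\tfrac{\sqrt{x^2+\delta^2}}{1-h^2}}\,\ud x\right),
\]
bounds the ratio of integrals uniformly (using $\delta\le C$ and $T_D/T_c\to\infty$), and integrates the resulting differential inequality $\delta\,\partial_h\delta \le C'' h$ from $\delta(0)=0$ to get $\delta(h)^2\le C'' h^2$ in one stroke.

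Your direct Taylor-expansion of the balance equation is more elementary in that it avoids ODE machinery, and it has the pleasant feature that the constants $a,b$ in your leading balance $a\,h^2 - b\,\delta^2 \approx 0$ are visibly the same integrals that produce $C_{\rm univ}$ in Lemma~\ref{lem:Cuniv}, so you are in effect also recovering $\fBCS'(0)=C_{\rm univ}$. The price is exactly the bookkeeping you anticipate: you need $\delta$ small before the expansion is justified, so a preliminary step (continuity plus $\delta(0)=0$, or a crude sublinear bound to bootstrap from) is genuinely required, and you must split into small-$h$ and large-$h$ regimes. The paper's differential-inequality argument sidesteps all of this --- once $\delta\le C$ is known, the inequality holds globally in $h\in(0,1)$ and integrates without any case distinction or bootstrap.
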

\begin{proof}
First, we note that $\delta(h) \le C$ uniformly for $h \in [0,1]$. This easily follows from observing that $\delta(h)$ is strictly monotonically increasing (as follows from elementary monotonicity properties of the integrand in \eqref{eq:diffformula}) and $\delta(1)$ is necessarily bounded. 

In order to show \eqref{eq:delta(h)apriori}, we employ the implicit function theorem to derive an asymptotic ODE for $\delta(h)$. For this purpose, we now introduce the function (recalling $T_c \sim \mathrm{e}^{-1/\lambda}$)
\begin{equation*} 
		G_\lambda : [0,1] \times [0,\infty) \to \R\,, (h, \delta) \mapsto 	\int_{0}^{T_D/T_c} \left\{ \frac{\tanh\big(\frac{\sqrt{x^2 + \delta^2}}{2(1 - h^2)}\big)}{\sqrt{x^2 + \delta^2}}  - \frac{\tanh\big(\frac{x}{2}\big)}{x} \right\} \mathrm{d}x  
\end{equation*}
and trivially note that \eqref{eq:diffformula} is equivalent to $G_\lambda(h, \delta(h)) = 0$. Since $G_\lambda$ is $C^1$ (away from the boundary) in $\delta $ and $h$ (this easily follows from dominated convergence), we can apply the implicit function theorem to obtain the differential equation 
\begin{equation} \label{eq:diffeqn}
	\frac{\partial \delta(h) }{\partial h}=   \frac{(1 - h^2) h }{\delta(h)} \left( \int_{0}^{T_D/T_c} \frac{1}{\cosh^2\left(\frac{\sqrt{x^2 + \delta^2}}{2 (1-h^2)}\right)} \D x \middle/ \int_{0}^{T_D/T_c} \frac{g_1\left(\frac{\sqrt{x^2 + \delta^2}}{1-h^2}\right)}{\frac{\sqrt{x^2 + \delta^2}}{ 1-h^2}} \D x\right)\,,
\end{equation}
where we introduced the auxiliary functions
\begin{equation}\label{eq:g01}
	g_0(z) := \frac{\tanh(z/2)}{z}\,, \qquad 
g_1(z) := - g_0'(z) = z^{-1} g_0(z) - \frac{1}{2} z^{-1} \frac{1}{\cosh^2(z/2)}\,. 
\end{equation}

It is elementary to check that the even function $z \mapsto g_1(z)/z$ is (strictly) positive and (strictly) decreasing for $z \in [0,\infty)$. In combination with $\delta(h) \le C$ and $T_c \sim \mathrm{e}^{-1/\lambda}$, one can thus bound the denominator on the r.h.s.~of \eqref{eq:diffeqn} from below. Together with an upper bound on the integral in the numerator (obtained by using elementary monotonicity properties of the hyperbolic cosine), we find that
\begin{equation} \label{eq:diffineqn}
	\frac{\partial \delta(h) }{\partial h} \le C' \,   \frac{h}{\delta(h)} \left( \int_{0}^{\infty} \frac{1}{\cosh^2(x)} \D x \middle/ \int_{0}^{C} \frac{g_1\left(\sqrt{x^2 + C^2}\right)}{\sqrt{x^2 + C^2}} \D x\right) \le C'' \,  \frac{h}{\delta(h)}
\end{equation}
for $h> 0$ and $\lambda > 0$ small enough (to ensure $T_D/T_c \ge C$). 

Finally, the differential inequality \eqref{eq:diffineqn} can be integrated using the boundary condition $\delta(0) = 0$ to conclude the desired.\footnote{Strictly speaking, this requires to extend the function $\delta(h)$ in $(0,1)$, obtained via the implicit function theorem for $G_\lambda$, to the boundary points $0$. In order to do so, note that, for $h \in (0,1/2)$, \eqref{eq:diffeqn} yields 
\begin{equation*}
\frac{\partial \delta(h)}{\partial h} \sim \frac{h}{\delta(h)}\,,
	\end{equation*}
from which we immediately conclude that $|\partial_h \delta(h)| \le C$, uniformly in $(0,1/2)$. Hence, $\delta(h)$ continuously extends to $0$. The same is true for its derivative by means of \eqref{eq:diffeqn} again. We remark that by a similar argument, $\delta(h)$ can be extended to $1$ as well.}
\end{proof}

\subsubsection{Uniform error estimate}
Having Lemma \ref{lem:delta(h)apriori} as an input, we shall now prove the following. 
\begin{lemma} \label{lem:errorunif}
For $\delta = \delta(h)$ defined through \eqref{eq:diffformula}, it holds that 
\begin{equation} \label{eq:errorunif}
\int_{T_D/T_c}^{\infty} \left|\frac{\tanh\big(\frac{\sqrt{x^2 + \delta^2}}{2(1 - h^2)}\big)}{\sqrt{x^2 + \delta^2}}  - \frac{\tanh\big(\frac{x}{2}\big)}{x}  \right| \mathrm{d}x \le C \, h^2 \, \mathrm{e}^{-2/\lambda}\,. 
\end{equation}
\end{lemma}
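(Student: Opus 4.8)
The plan is to exploit that the integral in \eqref{eq:errorunif} is taken over $[T_D/T_c,\infty)$, whose left endpoint $T_D/T_c$ is exponentially large in $1/\lambda$ since $T_c\sim\mathrm{e}^{-1/\lambda}$, and that on this range both terms of the integrand are exponentially (in $x$) close to $1/\sqrt{x^2+\delta^2}$ and $1/x$ respectively. To organize the estimate -- and, crucially, to keep the prefactor $h^2$ on the right-hand side of \eqref{eq:errorunif} -- I would write the integrand as $F(\sqrt{x^2+\delta^2},1-h^2)-F(x,1)$ with $F(\rho,\tau):=\tanh\!\big(\tfrac{\rho}{2\tau}\big)/\rho$, and split the difference along the two coordinate directions,
\[
F(\sqrt{x^2+\delta^2},1-h^2)-F(x,1)=\int_x^{\sqrt{x^2+\delta^2}}\partial_\rho F(\rho,1-h^2)\,\D\rho+\int_{1-h^2}^{1}\partial_\tau F(x,\tau)\,\D\tau .
\]
The second integral carries the $h^2$-smallness coming from the temperature (its interval has length $h^2$), while the first carries the $\delta^2$-smallness, and $\delta\le Ch$ by Lemma~\ref{lem:delta(h)apriori}.

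Next I would record two elementary pointwise bounds, valid for $\rho\ge x$, $\tau\in(0,1]$ and $x>2$ (the latter holds once $\lambda$ is small, as $T_D/T_c\to\infty$): namely $|\partial_\rho F(\rho,\tau)|\le 2\rho^{-2}$ and $|\partial_\tau F(x,\tau)|\le 2\mathrm{e}^{-x}$. Both follow from $1/\cosh^2\sigma\le 4\mathrm{e}^{-2\sigma}$ together with elementary monotonicity estimates for $\sigma\mapsto\sigma^k\mathrm{e}^{-\sigma}$ on $[1,\infty)$; no input beyond single-variable calculus is needed. Combining these with $0\le\sqrt{x^2+\delta^2}-x=\delta^2/(x+\sqrt{x^2+\delta^2})\le\delta^2/(2x)$ gives the pointwise estimate
\[
\left|\frac{\tanh\big(\tfrac{\sqrt{x^2+\delta^2}}{2(1-h^2)}\big)}{\sqrt{x^2+\delta^2}}-\frac{\tanh\big(\tfrac{x}{2}\big)}{x}\right|\le\frac{\delta^2}{x^3}+2h^2\,\mathrm{e}^{-x},\qquad x\ge T_D/T_c .
\]

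Finally I would integrate this over $[T_D/T_c,\infty)$, obtaining $\tfrac{\delta^2}{2}(T_c/T_D)^2+2h^2\,\mathrm{e}^{-T_D/T_c}$, and insert the a priori bound $\delta=\delta(h)\le Ch$ from Lemma~\ref{lem:delta(h)apriori} together with $T_c\le C'\mathrm{e}^{-1/\lambda}$. The first term is then $\le C h^2\,\mathrm{e}^{-2/\lambda}$; the second is $\le 2h^2\,\mathrm{e}^{-T_D/T_c}$, which for small $\lambda$ is far smaller than $h^2\,\mathrm{e}^{-2/\lambda}$ because $T_D/T_c\gtrsim\mathrm{e}^{1/\lambda}\gg 2/\lambda$. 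This yields \eqref{eq:errorunif}. I do not expect any genuine obstacle; the only point requiring some care is precisely the bookkeeping that preserves the $h^2$ factor, i.e.\ decomposing $F(\sqrt{x^2+\delta^2},1-h^2)-F(x,1)$ in the $(\rho,\tau)$-directions rather than crudely splitting $\tanh A-\tanh B$. A naive split produces a harmless $\mathrm{e}^{-x}$ but either loses the $h^2$ prefactor -- so the bound would no longer be uniform down to $h=0$, where the left-hand side vanishes -- or reintroduces a factor $1/(1-h^2)$ that diverges as $T\to 0$.
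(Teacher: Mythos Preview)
Your proof is correct and follows essentially the same strategy as the paper: split the integrand into two pieces, one carrying the smallness $\delta^2\le Ch^2$ (via Lemma~\ref{lem:delta(h)apriori}) and one carrying the temperature change $h^2$, then integrate over $[T_D/T_c,\infty)$ using $T_c\sim\mathrm{e}^{-1/\lambda}$. The only difference is the choice of intermediate term: the paper adds and subtracts $\tanh(x/2)/\sqrt{x^2+\delta^2}$ (separating the $\tanh$-numerator from the denominator), whereas you go through $F(x,1-h^2)$ (separating the $\rho$- and $\tau$-variations). Your route has the minor advantage that the bound $|\partial_\tau F(x,\tau)|\le 2\mathrm{e}^{-x}$, valid uniformly for $\tau\in(0,1]$ once $x\ge 2$, delivers the $h^2$ factor without producing a stray $1/(1-h^2)$; the paper's mean-value estimate on the $\tanh$-difference does produce such a factor, which strictly speaking must be removed for $h$ near $1$ by combining with the crude bound $|\tanh A-\tanh B|\le 1-\tanh(x/2)$.
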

\begin{proof}
First, we add and subtract $\tanh(x/2)/\sqrt{x^2 + \delta^2}$ in \eqref{eq:errorunif}. Then, we employ $T_c \sim \mathrm{e}^{-1/\lambda}$ and Lemma~\ref{lem:delta(h)apriori} to estimate
\begin{equation*}
\int_{T_D/T_c}^{\infty} \left|\frac{\tanh\big(\frac{\sqrt{x^2 + \delta^2}}{2(1 - h^2)}\big)}{\sqrt{x^2 + \delta^2}}  - \frac{\tanh\big(\frac{x}{2}\big)}{\sqrt{x^2 + \delta^2}}  \right| \mathrm{d}x \le C \, h^2 \int_{T_D/T_c}^{\infty} \frac{1}{\cosh^2(x/2)} \mathrm{d}x \le C\, h^2 \, \mathrm{e}^{-2/\lambda}
\end{equation*}
and 
\begin{equation*}
\int_{T_D/T_c}^{\infty} \left|\frac{\tanh\big(\frac{x}{2}\big)}{\sqrt{x^2 + \delta^2}}  - \frac{\tanh\big(\frac{x}{2}\big)}{x}  \right| \mathrm{d}x \le C  \, h^2 \, \int_{T_D/T_c}^{\infty} \frac{1}{x^3} \mathrm{d}x \le C \, h^2 \, \mathrm{e}^{-2/\lambda}\,. 
\end{equation*}
Combining these bounds yields the claim by means of the triangle inequality. 
\end{proof}
From Lemma \ref{lem:errorunif}, we immediately conclude that 
\begin{equation} \label{eq:almostfBCS}
	\int_{\R} \left\{ \frac{\tanh\big(\frac{\sqrt{x^2 + \delta^2}}{2(1 - h^2)}\big)}{\sqrt{x^2 + \delta^2}}  - \frac{\tanh\big(\frac{x}{2}\big)}{x} \right\} \mathrm{d}x  = O(h^2 \, \mathrm{e}^{-2/\lambda})\,. 
\end{equation}
\subsubsection{Comparison with \texorpdfstring{$\fBCS{}$}{fBCS}}
Given \eqref{eq:almostfBCS}, the remaining task is to show that, because $\delta$ approximately \emph{solves} the defining equation of $\fBCS$, it is \emph{actually} close to $\fBCS$. This is the content of the following lemma.  
\begin{lemma} \label{lem:fBCSapprox}
Fix $h \in [0,1]$. If $\phi \in [0, \infty)$ satisfies\footnote{We follow the convention that, if $h=1$, we replace $\tanh(...)$ by the constant function $1$. }
\begin{equation} \label{eq:fBCSdefwitherr}
	\int_{\R} \left\{ \frac{\tanh\big(\frac{\sqrt{x^2 + \phi^2}}{2(1 - h^2)}\big)}{\sqrt{x^2 + \phi^2}}  - \frac{\tanh\big(\frac{x}{2}\big)}{x} \right\} \mathrm{d}x  = R_U
\end{equation}
for some $|R_U| \le C$, then 
\begin{equation} \label{eq:fBCSapprox}
\phi = \fBCS(h) + O(|R_U|^{1/2})
\end{equation}
with $\fBCS$ defined in \eqref{eq:fBCS}. 
\end{lemma}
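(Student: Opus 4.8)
The plan is to treat the left-hand side of \eqref{eq:fBCSdefwitherr} as a function $\Phi(\phi)$ of $\phi\in[0,\infty)$ for the fixed $h$ (with the convention $\tanh(\cdot)\equiv1$ if $\beta:=1-h^2=0$), so that $\Phi(\fBCS(h))=0$ by the definition \eqref{eq:fBCS} of $\fBCS$, while $\Phi(\phi)=R_U$ by hypothesis. I would then prove \eqref{eq:fBCSapprox} via a quantitative form of the strict monotonicity of $\Phi$: there are a threshold $\phi_{\max}$ and a constant $\kappa>0$, both depending only on the constant $C$ in the hypothesis (and \emph{not} on $h$), such that $\phi\le\phi_{\max}$ and $-\Phi'(s)\ge\kappa\,s$ for all $0\le s\le\phi_{\max}$. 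Granted this, and granted $\fBCS(h)\le\fBCS(1)\le\phi_{\max}$ (see below), integrating $-\Phi'$ from $\min(\phi,\fBCS(h))$ to $\max(\phi,\fBCS(h))$ and using $b^2-a^2\ge(b-a)^2$ for $0\le a\le b$ yields $|R_U|=|\Phi(\phi)-\Phi(\fBCS(h))|\ge\tfrac{\kappa}{2}\,|\phi-\fBCS(h)|^2$, which is \eqref{eq:fBCSapprox}. The lower bound on $-\Phi'$ is necessarily only \emph{linear} in $s$ — it degenerates at $s=0$, since $\fBCS(h)\to0$ as $h\to0$ — and this degeneracy is exactly what produces the square root in \eqref{eq:fBCSapprox} rather than a linear dependence on $R_U$.

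For the a priori bound $\phi\le\phi_{\max}$, I would use that, since $\tanh\le1$, the integrand of $\Phi$ is pointwise monotone in $\beta$: with $g_0(z):=\tanh(z/2)/z$ one computes $\partial_\beta[\beta^{-1}g_0(z/\beta)]=-\bigl(2\beta^2\cosh^2(z/(2\beta))\bigr)^{-1}\le0$, hence $\Phi(\phi)\le\Phi_1(\phi)$, where $\Phi_1$ is the same expression with $h=1$. The function $\Phi_1$ can be evaluated almost explicitly by regularising the two (individually logarithmically divergent) integrals: $\Phi_1(\phi)-\Phi_1(1)=\int_\R\bigl[(x^2+\phi^2)^{-1/2}-(x^2+1)^{-1/2}\bigr]\,\D x=-2\log\phi$, so $\Phi_1(\phi)=c_0-2\log\phi$ with $c_0:=\Phi_1(1)$ a finite absolute constant. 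Thus $R_U=\Phi(\phi)\ge-C$ forces $\phi\le e^{(C+c_0)/2}=:\phi_{\max}$; and since $\Phi_1(\fBCS(1))=0$ gives $c_0=2\log\fBCS(1)$, one also gets $\fBCS(h)\le\fBCS(1)=e^{c_0/2}\le\phi_{\max}$ (using $C\ge0$ and the monotonicity of $\fBCS$ from \Cref{lem:fBCS}(i)).

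For the derivative estimate, I would differentiate under the integral (justified by dominated convergence), use $g_0'=-g_1$ from \eqref{eq:g01}, and substitute $x=\beta y$ to obtain
\begin{equation*}
	-\Phi'(\phi)=\frac{\phi}{\beta^2}\,I\!\left(\frac{\phi}{\beta}\right),\qquad I(t):=\int_\R\frac{g_1\bigl(\sqrt{y^2+t^2}\,\bigr)}{\sqrt{y^2+t^2}}\,\D y
\end{equation*}
(the case $\beta=0$ being the limit $-\Phi'(\phi)=2/\phi$, consistent with $\Phi_1(\phi)=c_0-2\log\phi$). Since $z\mapsto g_1(z)/z$ is positive, even and strictly decreasing on $[0,\infty)$ (as recorded below \eqref{eq:g01}), $I$ is positive, continuous and decreasing; and $\tanh(z/2)=1-2e^{-z}+\dots$ gives $g_1(z)/z=z^{-3}+O(e^{-z})$ as $z\to\infty$, whence $I(t)=2t^{-2}+O(e^{-t/\sqrt2})$ as $t\to\infty$. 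In particular $I(t)\ge c_1\,(1+t^2)^{-1}$ for all $t\ge0$ and some universal $c_1>0$, so that for $0\le\phi\le\phi_{\max}$ and $0\le\beta\le1$,
\begin{equation*}
	-\Phi'(\phi)=\frac{\phi}{\beta^2}\,I\!\left(\frac{\phi}{\beta}\right)\ge\frac{c_1\,\phi}{\beta^2+\phi^2}\ge\frac{c_1}{1+\phi_{\max}^2}\,\phi=:\kappa\,\phi,
\end{equation*}
which is the bound required above.

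The only genuinely delicate point is making the lower bound on $-\Phi'$ uniform in $h$: one must control the regimes $\phi/\beta\lesssim1$ (where $I\asymp1$) and $\phi/\beta\gtrsim1$ (where $I\asymp(\phi/\beta)^{-2}$) simultaneously, and the single estimate $I(t)\gtrsim(1+t^2)^{-1}$ packages exactly this. Everything else — monotonicity of the integrand in $\beta$, the explicit form of $\Phi_1$, and the large-argument asymptotics of $g_1$ — is elementary.
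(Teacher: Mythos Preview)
Your argument is correct and takes a genuinely different route from the paper's. The paper invokes an exact scaling identity due to Langmann--Triola, namely
\[
\phi = e^{-R_U}\,\fBCS\!\left(\sqrt{h^2+(1-h^2)(1-e^{R_U})}\right),
\]
and then Taylor-expands $\fBCS$ around $h$, using the global bound $|\fBCS'|\le C$ from \Cref{lem:fBCS}; the square root arises from the Taylor remainder integral after a change of variables. Your approach instead works directly with the map $\Phi$ and extracts the square root from the quantitative monotonicity $-\Phi'(s)\ge\kappa s$, which degenerates linearly at the origin; the identity $-\Phi'(\phi)=\frac{\phi}{\beta^2}I(\phi/\beta)$ together with the global estimate $I(t)\gtrsim(1+t^2)^{-1}$ neatly handles the two regimes $\phi\lesssim\beta$ and $\phi\gtrsim\beta$ in one stroke. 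Your proof is more self-contained (no external identity is cited) and makes the origin of the exponent $1/2$ transparent; the paper's proof, on the other hand, exploits the explicit scaling structure of $\fBCS$ and is somewhat shorter once the identity is available. Both the a priori bound via $\Phi\le\Phi_1=c_0-2\log(\cdot)$ and the derivative estimate are carried out correctly; note in particular that $c_1:=\inf_{t\ge0}I(t)(1+t^2)\le 2$ (since $I(t)(1+t^2)\to 2$), so the limiting case $\beta=0$, where $-\Phi'(\phi)=2/\phi\ge c_1/\phi$, is consistent with your lower bound.
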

Hence, combining \eqref{eq:almostfBCS} with \eqref{eq:fBCSapprox} and invoking Lemma \ref{lem:fBCS}~(iii), we find that 
\begin{equation*}
\delta(h) = \fBCS(h) + O(h \mathrm{e}^{-1/\lambda}) = \fBCS(h) \, \big(1 + O(\mathrm{e}^{-1/\lambda})\big)\,. 
\end{equation*}
This concludes the proof of Proposition \ref{prop:BCSoriguniv}. 
\begin{proof}[Proof of Lemma \ref{lem:fBCSapprox}]
First, we note that, given $h \in [0,1]$, \eqref{eq:fBCSdefwitherr} has a solution $\phi \in [0, \infty)$ if and only if 
$R_U \le \log\left(1/(1-h^2)\right)$. Then, as in the proof of \cite[Lemma 6]{Langmann.Triola.2023} (see \cite[Equation (C50)]{Langmann.Triola.2023}) we find that 
\begin{equation} \label{eq:fBCSwitherror}
	\phi = \mathrm{e}^{-R_U} \fBCS\left(\sqrt{h^2 + (1-h^2)(1-e^{R_U})}\right)\,.
\end{equation}
Taylor expanding in $R_U$ around $0$, using regularity of $\fBCS$ from Lemma \ref{lem:fBCS}, we get 
\begin{equation*}
	\mathrm{e}^{R_U}\phi
	= \fBCS(h) + \int_0^1 \fBCS'\left(\sqrt{h^2 + (1-h^2)(1-\mathrm{e}^{tR_U})}\right) \frac{-(1-h^2)R_U \mathrm{e}^{tR_U}}{2\sqrt{h^2 + (1-h^2)(1-\mathrm{e}^{tR_U})}} \ud t\,.
\end{equation*}
To bound the integral we change variables to $s = h^2 + (1-h^2)(1-e^{tR_U})$ and bound $| \fBCS'(h)| \leq C$ by \Cref{lem:fBCS}.
We split into cases depending on the sign of $R_U$. \\[1mm]
\underline{$R_U > 0$:}
For $R_U > 0$ the integral is bounded by 
\begin{equation*}
	\int_{h^2 - (1-h^2)(\mathrm{e}^{R_U}-1)}^{h^2} \frac{\ud s }{2\sqrt s}
	= h - \sqrt{h^2 - (1-h^2)(\mathrm{e}^{R_U}-1)} 
	= \frac{(1-h^2)(\mathrm{e}^{R_U}-1)}{\sqrt{h^2 - (1-h^2)(\mathrm{e}^{R_U}-1)} + h}.
\end{equation*}
Noting that $\frac{\delta}{\sqrt{\eps-\delta} + \sqrt\eps} \leq \sqrt\delta$ and that 
$(1-h^2)(\mathrm{e}^{R_U}-1) \leq C R_U$ we find that the integral is bounded by $\sqrt{R_U}$.\\[1mm]
\underline{$R_U < 0$:}
For $R_U < 0$ the integral is similarly bounded by 
\begin{equation*}
	\int_{h^2}^{h^2 + (1-h^2)(1-\mathrm{e}^{R_U})} \frac{\ud s }{2\sqrt s}
	= \sqrt{h^2 + (1-h^2)(1-\mathrm{e}^{R_U})} - h 
	\leq C \sqrt{|R_U|}\,. 
\end{equation*}

Plugging these bounds into \eqref{eq:fBCSwitherror}, we conclude the desired. 
\end{proof}

\subsection{Proof of Theorem \ref{thm:main}(a)} \label{subsec:partaproof}
We give here the proof of \Cref{thm:main}(a). 
The argument is divided into several steps.

\subsubsection{A priori spectral information}
For any temperature $T$ we have by \Cref{prop.KTDelta.geq.0} 
that there exists a unique (up to a constant global phase) minimizer $(\gamma_*, \alpha_*)$
of the BCS functional. The function $\alpha_*$ is radial and has $\hat\alpha_* > 0$. 
Moreover, the operator
$K_T^{\Delta} + \lambda V$ has lowest eigenvalue $0$ and $\alpha_*$ is the unique eigenfunction with this eigenvalue.

\subsubsection{Weak a priori bound on \texorpdfstring{$\Delta$}{Delta}}
From the proof of \Cref{prop:exofmin} in \cite{hainzl.seiringer.16} 
we have the following bound on the minimising $(\gamma_*, \alpha_*)$ of the BCS functional
\cite[Eqn. 3.12]{hainzl.seiringer.16}
\begin{multline*}
  \int (1 + p^2) (\abs{\hat \alpha_*(p)}^2 + \hat\gamma_*(p)) \ud p 
    \\
    \leq 8T \int \log \left(1 + e^{-(p^2-\mu)/4T}\right) \ud p
    + 8 \int \left[p^2/4 - 1 + C_2(\lambda)\right]_{-} \ud p
     \leq
    C_\lambda 
\end{multline*}
with $C_2(\lambda) = \inf \spec ( p^2/4 + \lambda V) \leq 0$
and thus $C_\lambda$ uniformly bounded for $\lambda$ small enough.
In particular $\norm{\alpha_*}_{H^1}\leq C$ uniformly for $\lambda$ small enough. 
By Sobolev's inequality \cite[Thm 8.3]{analysis} we then have 
\begin{align*}
   \norm{\alpha}_{L^\infty}^2
   & \leq C \norm{\nabla\alpha}_{L^2} \norm{\alpha}_{L^2}
   \leq C 
   \tag{$d=1$}
   \\
   \norm{\alpha}_{L^q}^2
   & \leq 
   C \norm{\nabla \alpha}_{L^2} \norm{\alpha}_{L^2}
   \leq C 
   \tag{$d=2$}
   \\
   \norm{\alpha}_{L^6}^2 
   & \leq C \norm{\nabla \alpha}_{L^2}^2 
   \leq C
   \tag{$d=3$}
  \end{align*} 
for any $2\leq q < \infty$.
Thus, 
\begin{equation}\label{eqn.weak.Delta.infty.a.priori}
  \norm{\Delta}_{L^\infty} 
  \leq 2\lambda \norm{V\alpha}_{L^1}
  \leq 
  \begin{cases}
  2 \lambda \norm{V}_{L^1} \norm{\alpha}_{L^\infty} 
  & d=1
  \\
  2 \lambda \norm{V}_{L^p} \norm{\alpha}_{L^2}^{(2pq-2q-2p)/(pq-2p)} \norm{\alpha}_{L^q}^{(2q-pq)/(pq-2p)}
  & d=2
  \\
  2\lambda \norm{V}_{L^{3/2}} \norm{\alpha}_{L^2}^{1/2} \norm{\alpha}_{L^6}^{1/2}
  & d=3
  \end{cases}
  \quad 
  \leq C \lambda 
\end{equation}
uniformly for $\lambda$ small enough (for any $1 < p < \min\{2,p_V\}$ by choosing $q$ large enough in dimension $d=2$).
In particular we see that $\Delta(p) \to 0$ as $\lambda \to 0$.

\subsubsection{Birman--Schwinger principle}\label{sec.Birman--Schwinger.(a)}
Next, by the Birman--Schwinger principle \cite{frank.hainzl.naboko.seiringer,hainzl.hamza.seiringer.solovej,hainzl.seiringer.16}
the fact that $K_T^\Delta + \lambda V$ has lowest eigenvalue $0$ with $\alpha_*$ being the unique eigenvector 
is equivalent to the Birman--Schwinger operator 
\begin{equation*}
B_{T,\Delta} : = \lambda V^{1/2} \frac{1}{K_T^\Delta} |V|^{1/2}
\end{equation*}
having $-1$ as its lowest eigenvalue and 
$\phi = V^{1/2}\alpha_*$ being the unique eigenfunction of this eigenvalue.
Here we use the convention that $V^{1/2}(x) = \sgn(V(x)) |V(x)|^{1/2}$.
We decompose the Birman--Schwinger operator into a dominant singular term and an error term.
For this purpose we define the (rescaled) Fourier transform restricted to the sphere
$\mfF_\mu : L^1(\R^d) \to L^2(\S^{d-1})$ by 
\begin{equation*}
\mfF_\mu \psi (p) = \frac{1}{(2\pi)^{d/2}} \int_{\R^d} \psi(x) e^{-\mathrm{i}\sqrt\mu p\cdot x} \ud x,
\end{equation*}
which is well-defined by the Riemann--Lebesgue Lemma. Define then  
\begin{equation}\label{eqn.define.m(T,Delta)}
  m(T,\Delta) = \frac{1}{\abs{\S^{d-1}}} \int_{|p|\leq \sqrt{2\mu}} \frac{1}{K_T^\Delta(p)} \ud p
\end{equation}
and decompose
\begin{equation*}
 B_{T,\Delta} = \lambda m(T,\Delta) V^{1/2} \mfF_\mu^\dagger \mfF_\mu |V|^{1/2} + \lambda V^{1/2} M_{T,\Delta} |V|^{1/2},
\end{equation*}
with $M_{T,\Delta}$ defined such that this holds.
Analogously to \cite[Lemma 2]{frank.hainzl.naboko.seiringer} and \cite[Lemma 3.4]{Henheik.Lauritsen.ea.2023} we have the following lemma, the proof of which (that is analogous to the one of \cite[Lemma 3.4]{Henheik.Lauritsen.ea.2023}) is given in \Cref{sec.proof.lem.bdd.VMV}.
\begin{lemma}
\label{lem.bdd.VMV}
We have 
$\norm{V^{1/2} M_{T,\Delta} |V|^{1/2}}_{\textnormal{HS}} \leq C$ for small $\lambda$ uniformly in $T$ and $\Delta$, where $\Vert \cdot \Vert_{\mathrm{HS}}$ denotes the Hilbert-Schmidt norm of an operator.
\end{lemma}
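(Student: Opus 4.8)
\medskip
\noindent\textbf{Proof plan for \Cref{lem.bdd.VMV}.}
The plan is to argue as in \cite[Lemma~2]{frank.hainzl.naboko.seiringer} and \cite[Lemma~3.4]{Henheik.Lauritsen.ea.2023}, isolating the contribution of the Fermi sphere -- which carries the (as $\lambda\to0$ logarithmic) divergence of $m(T,\Delta)$ -- from a remainder whose integral kernel is bounded uniformly in $T$ and $\Delta$. Unravelling the definition, $M_{T,\Delta}$ is the difference of the Fourier multiplier $K_T^\Delta(\cdot)^{-1}$ and the integral operator $m(T,\Delta)\,\mfF_\mu^\dagger\mfF_\mu$, the latter having kernel $(2\pi)^{-d}\int_{\S^{d-1}}\E^{\I\sqrt\mu\,\omega\cdot(x-y)}\D\omega$, which is bounded by $\abs{\S^{d-1}}(2\pi)^{-d}$. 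Fix a small $c=c(\mu)\in(0,\mu)$, so that $\{\abs{p^2-\mu}<c\}\subset\{\abs p<\sqrt{2\mu}\}$, and split
\begin{equation*}
	\frac{1}{K_T^\Delta(p)} = \frac{\mathbf 1(\abs{p^2-\mu}\geq c)}{K_T^\Delta(p)} + \frac{\mathbf 1(\abs{p^2-\mu}< c)}{K_T^\Delta(p)}
\end{equation*}
into its \emph{regular} and \emph{singular} part, and correspondingly $m(T,\Delta)\,\mfF_\mu^\dagger\mfF_\mu = \big(m(T,\Delta)-\tfrac{1}{\abs{\S^{d-1}}}\int_{\abs{p^2-\mu}<c}K_T^\Delta(p)^{-1}\D p\big)\,\mfF_\mu^\dagger\mfF_\mu + \tfrac{1}{\abs{\S^{d-1}}}\big(\int_{\abs{p^2-\mu}<c}K_T^\Delta(p)^{-1}\D p\big)\,\mfF_\mu^\dagger\mfF_\mu$.

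For the regular part one uses only $K_T^\Delta(p)\geq E_\Delta(p)\geq\abs{p^2-\mu}$, giving $\mathbf 1(\abs{p^2-\mu}\geq c)\,K_T^\Delta(p)^{-1}\leq C_\mu(1+p^2)^{-1}$ with $C_\mu$ \emph{independent of $T$ and $\Delta$}; hence the corresponding operator, sandwiched between $V^{1/2}$ and $\abs{V}^{1/2}$, has Hilbert--Schmidt norm at most $C_\mu$ times that of the operator with symbol $(1+p^2)^{-1}$ sandwiched the same way, which is finite under \Cref{ass:basic} -- this is essentially the estimate underlying the well-definedness of $T_c$ (cf.\ the discussion around \eqref{eq:Tc}) and follows from elementary bounds on the kernel of $(1+p^2)^{-1}$ when $d=1,2$ and from the Hardy--Littlewood--Sobolev inequality (using $V\in L^{3/2}$) when $d=3$. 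For the first rank-type term the scalar prefactor equals $\tfrac{1}{\abs{\S^{d-1}}}\int_{\{\abs{p^2-\mu}\geq c,\,\abs p\leq\sqrt{2\mu}\}}K_T^\Delta(p)^{-1}\D p\leq C_\mu$ uniformly in $T,\Delta$ (again by $K_T^\Delta(p)\geq\abs{p^2-\mu}\geq c$ there), while $\norm{V^{1/2}\mfF_\mu^\dagger\mfF_\mu\abs{V}^{1/2}}_{\mathrm{HS}}^2\leq\big(\tfrac{\abs{\S^{d-1}}}{(2\pi)^d}\big)^2\norm{V}_{L^1}^2<\infty$ using the bound on the kernel of $\mfF_\mu^\dagger\mfF_\mu$ noted above.

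The remaining (second) rank-type term, together with the singular part of $K_T^\Delta(\cdot)^{-1}$, is handled by the cancellation that is the heart of the argument: since $\Delta$ is radial (\Cref{prop.KTDelta.geq.0}(ii)), $K_T^\Delta(p)=K_T^\Delta(\abs p)$, and, writing $J_r(z):=\int_{\S^{d-1}}\E^{\I r\,\omega\cdot z}\D\omega$ and $z=x-y$, the kernel of this combined piece is
\begin{equation*}
	\frac{1}{(2\pi)^d}\int_{\{\abs{r^2-\mu}<c\}}\frac{r^{d-1}}{K_T^\Delta(r)}\big(J_r(z)-J_{\sqrt\mu}(z)\big)\,\D r,
\end{equation*}
the cutoffs having been arranged precisely so that the radial prefactors match. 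From $\abs{\partial_r J_r(z)}\leq\abs{\S^{d-1}}\abs z$ one gets $\abs{J_r(z)-J_{\sqrt\mu}(z)}\leq\abs{\S^{d-1}}\abs z\,\abs{r-\sqrt\mu}$, while $K_T^\Delta(r)\geq\abs{r^2-\mu}$ and $\abs{r-\sqrt\mu}=\abs{r^2-\mu}/(r+\sqrt\mu)\leq\abs{r^2-\mu}/\sqrt\mu$ give $\tfrac{r^{d-1}}{K_T^\Delta(r)}\abs{r-\sqrt\mu}\leq r^{d-1}/\sqrt\mu\leq C_\mu$ on $\{\abs{r^2-\mu}<c\}$, uniformly in $T$ and $\Delta$; integrating over this bounded interval shows the combined piece has kernel bounded by $C_\mu\abs{x-y}$, uniformly in $T$ and $\Delta$. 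Collecting the bounds,
\begin{equation*}
	\norm{V^{1/2} M_{T,\Delta}\abs{V}^{1/2}}_{\mathrm{HS}}^2 \leq C + C\iint\abs{V(x)}\,\abs{x-y}^2\,\abs{V(y)}\,\D x\,\D y \leq C\big(1+\norm{\abs{\cdot}^2 V}_{L^1}\norm{V}_{L^1}\big)<\infty,
\end{equation*}
using $\abs{x-y}^2\leq 2\abs x^2+2\abs y^2$ and the extra hypothesis $\abs{\cdot}^2 V\in L^1$ of \Cref{thm:main}(a). The main obstacle is exactly this last cancellation: one must make rigorous that the blow-up of $K_T^\Delta(r)^{-1}$ near $r=\sqrt\mu$ as $T,\Delta\to0$ (the source of $m(T,\Delta)\to\infty$) is annihilated by the vanishing of $J_r(z)-J_{\sqrt\mu}(z)$ there -- i.e.\ that $\abs{r-\sqrt\mu}/K_T^\Delta(r)$ stays bounded independently of $T$ and $\Delta$ -- so that no $T$- or $\Delta$-dependence leaks into the final constant; the other estimates are then routine, the only further mild point being the borderline $d=3$ Hilbert--Schmidt bound for the regular part, dealt with by Hardy--Littlewood--Sobolev as in the well-definedness of $T_c$.
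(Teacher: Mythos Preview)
Your argument is correct and follows the same core strategy as the paper: isolate the near--Fermi--surface piece, where the cancellation $J_r(z)-J_{\sqrt\mu}(z)$ kills the $|r^2-\mu|^{-1}$ blow--up uniformly in $T,\Delta$, and bound the remaining regular part via Sobolev/HLS. (Your monotonicity step ``HS norm at most $C_\mu$ times that with symbol $(1+p^2)^{-1}$'' is indeed valid: for $0\le g\le h$ one has $0\le |V|^{1/2}g|V|^{1/2}\le |V|^{1/2}h|V|^{1/2}$ and hence $\mathrm{Tr}\,A^2\le\mathrm{Tr}\,B^2$, while the sign in $V^{1/2}$ is a partial isometry and does not affect the HS norm.)

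The one substantive difference from the paper is in how you estimate the cancellation piece. You use the crude Lipschitz bound $|J_r(z)-J_{\sqrt\mu}(z)|\le |\S^{d-1}|\,|z|\,|r-\sqrt\mu|$, which produces a kernel of size $C_\mu|x-y|$ and forces you to invoke the extra moment hypothesis $|\cdot|^2V\in L^1$ from \Cref{thm:main}(a). The paper avoids this: in $d=3$ it uses the sharper estimate $\bigl|\tfrac{\sin a}{a}-\tfrac{\sin b}{b}\bigr|\le C\tfrac{|a-b|}{a+b}$, so that the near--Fermi kernel is bounded \emph{uniformly} in $|x-y|$ (with the large--$|p|$ tail handled by subtracting $1/p^2$ and using HLS for the resulting $|x-y|^{-1}$ piece); in $d=1,2$ it appeals directly to \cite[Lemma~3.4]{Henheik.Lauritsen.ea.2023}, whose bounds involve only $\norm{V}_{L^1}$, $\norm{V}_{L^p}$, or logarithmic weights. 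Thus the paper establishes \Cref{lem.bdd.VMV} under \Cref{ass:basic} alone, whereas your route needs the additional moment condition. For the application in \Cref{thm:main}(a) this is harmless since that hypothesis is already in force, but the lemma as stated (and as reused elsewhere, e.g.\ in \Cref{lem:Lipschitz}) is meant to hold without it. Your version has the advantage of being dimension--uniform and somewhat shorter; the paper's buys the weaker hypothesis.
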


We conclude that $1 + \lambda V^{1/2} M_{T,\Delta} |V|^{1/2}$ is invertible for sufficiently small $\lambda$ and thus,
analogously to \cite[Lemma 4]{Hainzl.Seiringer.2008} and \cite[Lemma 13, Proposition 15]{Henheik.Lauritsen.2022}
the fact that $B_{T,\Delta}$ has lowest eigenvalue $-1$ is equivalent to the fact that the operator
\begin{equation}
\label{eqn.T.T.Delta.def}
  S_{T,\Delta} := \lambda m(T,\Delta) \mfF_\mu |V|^{1/2} \frac{1}{1 + \lambda V^{1/2} M_{T,\Delta} |V|^{1/2}} V^{1/2} \mfF_\mu^{\dagger} 
\end{equation}
acting on $L^2(\S^{d-1})$ has lowest eigenvalue $-1$.
Moreover, the function $\mfF_\mu |V|^{1/2} \phi = \mfF_\mu V \alpha_*$ is the unique eigenfunction of $S_{T,\Delta}$ of eigenvalue $-1$.

\subsubsection{A priori bounds on \texorpdfstring{$\Delta$}{Delta}}\label{sec.a.priori.Delta.(a)}
For the analysis of the integral $m(T,\Delta)$ we need some a priori bounds on $\Delta$. 
Analogously as in \cite{Henheik.Lauritsen.ea.2023,Hainzl.Seiringer.2008} we need some control of $\Delta(p)$ in terms of $\Delta(\sqrt\mu)$ 
and some type of Lipschitz-continuity of $\Delta$. 
These are collected in the following lemma. 
\begin{lemma}\label{lem.Delta.a.priori.bdd}
The function $\Delta$ satisfies the bounds
\begin{align}
  \Delta(p) 
  & 
  \leq C \Delta(\sqrt{\mu}),
  \label{eqn.Delta.bdd.Delta.sqrt.mu}
  \\
  \abs{\Delta(p) - \Delta(q)}
  &  
  \leq C \Delta(\sqrt{\mu}) |p-q| 
  \label{eqn.Delta.lipschitz}
\end{align}
for sufficiently small $\lambda$.
\end{lemma}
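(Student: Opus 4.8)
The plan is to derive both bounds from the gap equation \eqref{eq:gapeq} written in Birman--Schwinger form, together with the spectral information already collected. I would start from the eigenvalue equation $S_{T,\Delta}\, \mfF_\mu V\alpha_* = -\mfF_\mu V\alpha_*$ on $L^2(\S^{d-1})$ (cf.~\eqref{eqn.T.T.Delta.def}), since this isolates the behaviour of $\Delta$ on the Fermi sphere. Indeed $\Delta(p) = -2\lambda (2\pi)^{-d/2}(\hat V \star \hat\alpha_*)(p)$, so once $\mfF_\mu V\alpha_*$ is pinned down up to normalization as the (essentially unique, nondegenerate) ground state of $S_{T,\Delta}$, one recovers $\Delta$ on all of $\R^d$ via the off-sphere version of the gap equation. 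The first step is therefore to show that, for $\lambda$ small, the function $u_{T,\Delta} := \mfF_\mu V\alpha_*/\|\mfF_\mu V\alpha_*\|$ is close in $L^2(\S^{d-1})$ (indeed in $L^\infty$, using that $\mfF_\mu$ maps into Hölder functions once $|\cdot|^2 V\in L^1$) to the constant ground state of $\mcV_\mu$, with an error $O(\lambda)$; this uses Lemma~\ref{lem.bdd.VMV} to control $M_{T,\Delta}$ uniformly in $T,\Delta$, exactly as in \cite{Hainzl.Seiringer.2008} and \cite{Henheik.Lauritsen.ea.2023}. From this, the value $\Delta(\sqrt\mu) = -2\lambda(2\pi)^{-d/2}\int \hat V(\sqrt\mu - q\sqrt\mu)\,\hat\alpha_*(q\sqrt\mu)\,d\omega(q)\cdot(\text{const})$ is comparable (up to constants uniform in $\lambda$) to $\lambda\, m(T,\Delta)^{-1}$ times $e_\mu$, i.e.~$\Delta(\sqrt\mu)$ is genuinely nonzero and controls the overall scale of $\Delta$.

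For \eqref{eqn.Delta.bdd.Delta.sqrt.mu}, I would write $\Delta(p) = -2\lambda (2\pi)^{-d/2}(\hat V\star\hat\alpha_*)(p)$ and, using the gap equation in the form $\hat\alpha_*(q) = -\Delta(q)/(2K_T^\Delta(q))$ (from \eqref{eq:ELeqgamma}-type manipulation, i.e.~$\hat\alpha_* = -\Delta/(2K_T^\Delta)$), split the $q$-integral into the region near the Fermi sphere $|q^2-\mu|\lesssim \mu$ and the complementary tail. On the tail $K_T^\Delta(q)\gtrsim q^2$, so that contribution is bounded by $C\lambda\|\Delta\|_{L^\infty}\lesssim \lambda^2$ by \eqref{eqn.weak.Delta.infty.a.priori}, which is negligible compared to $\Delta(\sqrt\mu)\sim \lambda/|\log\ldots|$ — here one must be slightly careful about the relative sizes, but the point is that $\Delta(\sqrt\mu)$ is only polynomially (in $1/\lambda$) smaller than $\lambda$, not exponentially, because $m(T,\Delta)$ is only logarithmically large. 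On the Fermi-sphere region one rescales to the sphere and uses the near-constancy of $u_{T,\Delta}$ together with regularity of $\hat V$ (guaranteed by $|\cdot|^2 V\in L^1$, which gives $\hat V\in C^2$ and in particular $\hat V$ bounded and uniformly continuous) to bound $(\hat V\star\hat\alpha_*)(p)$ by a constant times $(\hat V\star\hat\alpha_*)(\sqrt\mu)$, i.e.~by $C\Delta(\sqrt\mu)$. For \eqref{eqn.Delta.lipschitz} one differentiates (or takes a finite difference of) the convolution: $\Delta(p)-\Delta(q) = -2\lambda(2\pi)^{-d/2}\int(\hat V(p-r)-\hat V(q-r))\,\hat\alpha_*(r)\,dr$, and since $\nabla\hat V$ is bounded (again from $|\cdot|^2 V\in L^1$, or rather $|\cdot|V\in L^1$), $|\hat V(p-r)-\hat V(q-r)|\le \|\nabla\hat V\|_\infty |p-q|$; combined with $\|\hat\alpha_*\|_{L^1}\le C$ (from $\alpha_*\in H^1$ and $\hat\alpha_* = -\Delta/(2K_T^\Delta)$, which is $L^1$ since $\|\Delta\|_\infty\le C\lambda$ and $1/K_T^\Delta$ is integrable near the sphere and $\sim q^{-2}$ at infinity, the latter needing $d\le 3$) one gets $|\Delta(p)-\Delta(q)|\le C\lambda|p-q|$, which is then upgraded to $C\Delta(\sqrt\mu)|p-q|$ using $\Delta(\sqrt\mu)\gtrsim \lambda/|\log\ldots|$ only at the cost of a logarithm — so in fact one should phrase the bound more carefully, or absorb the logarithm, following the same strategy as \cite[proof of Lemma 3.5]{Henheik.Lauritsen.ea.2023}.

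The main obstacle I anticipate is making the comparison between $\|\Delta\|_{L^\infty}$ (which is $O(\lambda)$) and $\Delta(\sqrt\mu)$ (which is $O(\lambda/|\log\lambda|)$, hence the ratio is only logarithmic) fully rigorous \emph{uniformly in the temperature} $T\in[0,T_c)$ and in $\Delta$, since $m(T,\Delta)$ degenerates (diverges logarithmically) precisely as $\lambda\to 0$, and one needs two-sided control of it. The way around this is to exploit that the \emph{same} logarithmic divergence sits in the eigenvalue equation for $S_{T,\Delta}$: the eigenvalue $-1 = \lambda\, m(T,\Delta)\,\langle u_{T,\Delta}, \mcV_\mu u_{T,\Delta}\rangle/e_\mu\cdot(\text{const}) + O(\lambda)$ forces $\lambda\, m(T,\Delta)$ to equal a constant (depending on $e_\mu$) up to $O(\lambda)$, which pins down $m(T,\Delta)\asymp 1/\lambda$ from both sides and makes all the above estimates uniform in $T$. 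I would extract this quantitative two-sided bound on $m(T,\Delta)$ first, as a preliminary sub-lemma, and then feed it into the splitting arguments for \eqref{eqn.Delta.bdd.Delta.sqrt.mu}--\eqref{eqn.Delta.lipschitz}.
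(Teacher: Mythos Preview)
Your opening is right: the eigenvalue equation for $S_{T,\Delta}$ and the fact that its ground state is, to leading order, the constant function $u$ is exactly where the paper starts. But you then take a wrong turn. Instead of working through the convolution $\Delta = -2\lambda(2\pi)^{-d/2}\hat V\star\hat\alpha_*$ and the self-consistent relation $\hat\alpha_* = -\Delta/(2K_T^\Delta)$, the paper extracts a \emph{representation formula} directly from the Birman--Schwinger structure: since $\phi = V^{1/2}\alpha_*$ is proportional to $(1+\lambda V^{1/2}M_{T,\Delta}|V|^{1/2})^{-1}V^{1/2}\mfF_\mu^\dagger u$, one obtains
\[
\Delta(p) = f(\lambda)\left(\int_{\S^{d-1}}\hat V(p-\sqrt\mu q)\,\ud\omega(q) + \lambda\,\eta_\lambda(p)\right)
\]
with a single scalar $f(\lambda)$ and $\|\eta_\lambda\|_\infty\le C$ uniformly (by \Cref{lem.bdd.VMV}). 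Evaluating at $p=\sqrt\mu$ gives $|f(\lambda)|\le C\Delta(\sqrt\mu)$, and both \eqref{eqn.Delta.bdd.Delta.sqrt.mu} and \eqref{eqn.Delta.lipschitz} follow immediately from the regularity of the fixed function $\hat\varphi(p)=\int\hat V(p-\sqrt\mu q)\,\ud\omega(q)$ (using $|\cdot|^2 V\in L^1$ for the Lipschitz bound on $\eta_\lambda$ as well). No information about $m(T,\Delta)$ is needed at this stage.

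Your route has a genuine gap. You arrive at $|\Delta(p)-\Delta(q)|\le C\lambda|p-q|$ and then propose to ``upgrade'' this to $C\Delta(\sqrt\mu)|p-q|$ using a lower bound $\Delta(\sqrt\mu)\gtrsim \lambda/|\log\lambda|$. That lower bound is false: $\Delta(\sqrt\mu)$ is \emph{exponentially} small in $1/\lambda$ (this is exactly what is shown later in \Cref{sec.exp.vanish.Delta}, using the present lemma as input). The ratio $\lambda/\Delta(\sqrt\mu)$ is therefore exponentially large, and the upgrade fails. The same problem infects your argument for \eqref{eqn.Delta.bdd.Delta.sqrt.mu}: the tail contribution $C\lambda\|\Delta\|_{L^\infty}\lesssim\lambda^2$ is \emph{not} negligible compared to $\Delta(\sqrt\mu)$. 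The point of the representation formula is precisely that the unknown (exponentially small) scale sits entirely in the single constant $f(\lambda)$, so one never has to compare $\lambda$ to $\Delta(\sqrt\mu)$.
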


\begin{proof}
As noted above, the function $\mfF_\mu |V|^{1/2} \phi = \mfF_\mu V\alpha$ is the eigenfunction of $S_{T,\Delta}$
of lowest eigenvalue $-1$.

Further, to leading order, $S_{T,\Delta}$ is proportional to $\mcV_\mu = \mfF_\mu V \mfF_\mu ^\dagger$.
Since the constant function $u = \abs{\S^{d-1}}^{-1/2} \in L^2(\S^{d-1})$ is the 
ground state of $\mathcal{V}_\mu$ (see the argument around \eqref{eq:emu}), the same is also true for $S_{T,\Delta}$ whenever $\lambda$ is small enough.
Hence, one can easily see that
\begin{equation*}
\frac{1}{1 + \lambda V^{1/2}M_{T,\Delta}|V|^{1/2}} V^{1/2} \mfF_{\mu}^\dagger u
\end{equation*}
is an eigenvector of $B_{T,\Delta}$ with eigenvalue $-1$ and thus proportional to $\phi = V^{1/2} \alpha_*$.
Thus, with $\mfF$ denoting the usual Fourier transform, by expanding 
$\frac{1}{1+x} = 1 - \frac{x}{1+x}$ we have
\begin{align*}
\Delta 
  & = -2 \lambda \mfF |V|^{1/2} \phi
  = f(\lambda) \mfF |V|^{1/2} \frac{1}{1 + \lambda V^{1/2} M_{T,\Delta} |V|^{1/2}} V^{1/2} \mfF_\mu^\dagger u
  \\
  &  =
  f(\lambda) \left(\int_{\S^{d-1}} \hat V(p-\sqrt{\mu}q) \ud \omega(q) + \lambda \eta_\lambda(p)\right)
\end{align*} 
for some constant $f(\lambda)$ (where we absorbed the factor $|\S^{d-1}|^{-1/2}(2\pi)^{-d/2}$ into $f(\lambda)$).
One easily verifies that 
\begin{equation*}
\eta_\lambda = -(2\pi)^{d/2}\mfF |V|^{1/2} \frac{V^{1/2} M_{T,\Delta} |V|^{1/2}}{1 + \lambda V^{1/2} M_{T,\Delta} |V|^{1/2}} V^{1/2} \mfF_\mu^\dagger u
\end{equation*}
has $\norm{\eta_\lambda}_\infty \leq C$ uniformly in sufficiently small $\lambda$ by \Cref{lem.bdd.VMV}.
By evaluating at $p=\sqrt{\mu}$ we find $\abs{f(\lambda)} \leq C \Delta(\sqrt{\mu})$ for small $\lambda$ 
and thus the global bound \eqref{eqn.Delta.bdd.Delta.sqrt.mu}.
Moreover, we have the Lipschitz-bound:
\begin{align*}
\abs{\Delta(p) - \Delta(q)}
  & \leq |f(\lambda)|
  |p-q|
  \norm{|x| |V|^{1/2} \frac{1}{1+ \lambda V^{1/2} M_{T,\Delta} |V|^{1/2}} V^{1/2} \mfF_\mu^\dagger u}_{L^1}
  \nonumber
  \\
  & \leq 
  C \Delta(\sqrt{\mu})
  |p-q|
  \norm{|x|^2 V}_{L^1}^{1/2} \frac{1}{1 - \lambda \norm{V^{1/2} M_{T,\Delta} |V|^{1/2}}} \norm{V^{1/2} \mfF_\mu^\dagger u}_{L^2}
  \nonumber
  \\ &
  \leq C \Delta(\sqrt{\mu}) |p-q|
\end{align*}
for sufficiently small $\lambda$.
\end{proof}

\subsubsection{First order}\label{sec.first.order.(a)}
Expanding the geometric series in \Cref{eqn.T.T.Delta.def} to first order we see that $S_{T,\Delta}$ to leading order is proportional 
to the operator $\mcV_\mu$ (defined in \eqref{eq:Vmu} above). Moreover, we have
\begin{align*}
  -1 & = \lambda m(T,\Delta) \inf \spec \left( \mfF_\mu |V|^{1/2} \frac{1}{1 + \lambda V^{1/2} M_{T,\Delta} |V|^{1/2}} V^{1/2} \mfF_\mu^{\dagger} \right)
    \\ & = \lambda m(T,\Delta) \inf \spec \mcV_\mu ( 1 + O(\lambda)) 
    = \lambda   e_\mu m(T,\Delta) ( 1 + O(\lambda)).
\end{align*}
In particular $m(T,\Delta) = \frac{-1}{\lambda e_\mu} (1 + O(\lambda)) \to \infty$ as $\lambda \to 0$.

\subsubsection{Exponential vanishing of \texorpdfstring{$\Delta$}{Delta}}
\label{sec.exp.vanish.Delta}
Pointwise we may bound
$K_T^\Delta \geq E_\Delta$. Thus, by the first--order analysis above, we have 
\begin{equation*}
  \frac{-1}{\lambda e_\mu} (1+O(\lambda)) 
  = m(T,\Delta) 
  = \frac{1}{\abs{\S^{d-1}}} 
  \int_{|p|< \sqrt{2\mu}} \frac{1}{K_T^\Delta} \ud p 
  \leq \frac{1}{|\S^{d-1}|} 
  \int_{|p| < \sqrt{2\mu}} \frac{1}{\sqrt{|p^2-\mu|^2 + |\Delta(p)|^2}} \ud p
\end{equation*}
The latter integral is calculated in \cite{Hainzl.Seiringer.2008,Henheik.Lauritsen.ea.2023}.
The same calculation is valid here by the bounds \eqref{eqn.Delta.bdd.Delta.sqrt.mu} and \eqref{eqn.Delta.lipschitz} and 
the fact that $\Delta(p)\to 0$ by \eqref{eqn.weak.Delta.infty.a.priori}. 
That is 
\begin{equation*}
  \frac{-1}{\lambda e_\mu} 
  \leq \mu^{d/2-1} \left(\log \frac{\mu}{\Delta(\sqrt{\mu})} + O(1)\right)
\end{equation*}
in the limit $\lambda \to 0$. 
We conclude that 
$\Delta(\sqrt\mu) \lesssim e^{-c/\lambda}$ as $\lambda \to 0$ (with $c = -1/e_\mu\mu^{d/2-1}$). 

The constant $c$ shall henceforth be used generically and its precise value might change from line to line.

\subsubsection{Infinite order}\label{sec.infinite.order.(a)}
Recall that the unique eigenfunction of $S_{T,\Delta}$ of eigenvalue $-1$ is for small $\lambda$ given by the constant function $u$.
Thus, for small $\lambda$ we have 
\begin{equation*}
  -1 = \lambda m(T,\Delta) 
    \longip{u}{ \mfF_\mu |V|^{1/2} \frac{1}{1 + \lambda V^{1/2} M_{T,\Delta} |V|^{1/2} } V^{1/2} \mfF_\mu^\dagger}{u}.
\end{equation*}
Combining this for the temperatures $T$ and $T_c$ we find 
\begin{align}
  & m(T,\Delta) - m(T_c,0)
  \nonumber
  \\ & \quad 
    = \frac{-1}{\lambda} 
      \left[
      \frac{1}{\longip{u}{ \mfF_\mu |V|^{1/2} \frac{1}{1 + \lambda V^{1/2} M_{T,\Delta} |V|^{1/2} } V^{1/2} \mfF_\mu^\dagger}{u}}
      \right.
      -
      \left.
      \frac{1}{\longip{u}{ \mfF_\mu |V|^{1/2} \frac{1}{1 + \lambda V^{1/2} M_{T_c,0} |V|^{1/2} } V^{1/2} \mfF_\mu^\dagger}{u}}
      \right]
      \nonumber
    \\
    & \quad 
    = \frac{-1}{\lambda e_\mu^2}(1 + O(\lambda))
      \longip{u}{ \mfF_\mu |V|^{1/2} 
        \left(
          \frac{1}{1 + \lambda V^{1/2} M_{T,\Delta} |V|^{1/2} } 
          - 
          \frac{1}{1 + \lambda V^{1/2} M_{T_c,0} |V|^{1/2} }
        \right)
          V^{1/2} \mfF_\mu^\dagger}{u}
\label{eqn.diff.m(T.Delta).m(Tc)}
\end{align}
for small enough $\lambda$ by expanding to first order in the denominator and noting that $\inf \spec \mcV_\mu = e_\mu$.
The proof of the following lemma (which is somewhat analogous to the proof of \Cref{lem.bdd.VMV}) is given in \Cref{sec.proof.lem.bdd.V.MTDelta.minus.MTc.V}.
\begin{lemma}
Uniformly in small $\lambda$ we have the bound 
\label{lem.bdd.V.MTDelta.minus.MTc.V}
\begin{equation*}
\norm{V^{1/2}(M_{T,\Delta} - M_{T_c,0})|V|^{1/2}}_{\textnormal{HS}} 
\leq C e^{-c/\lambda}
\end{equation*}
for some constants $C,c > 0$.
\end{lemma}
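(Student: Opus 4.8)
The plan is to exploit the explicit structure of the operator $M_{T,\Delta}$, which was defined via the decomposition $B_{T,\Delta} = \lambda m(T,\Delta) V^{1/2}\mfF_\mu^\dagger \mfF_\mu |V|^{1/2} + \lambda V^{1/2} M_{T,\Delta}|V|^{1/2}$. Unwinding this, $M_{T,\Delta}$ has integral kernel
\begin{equation*}
M_{T,\Delta}(x,y) = \frac{1}{(2\pi)^d}\int_{\R^d} \frac{e^{\I p\cdot(x-y)}}{K_T^\Delta(p)}\ud p - \frac{m(T,\Delta)}{|\S^{d-1}|}\int_{\S^{d-1}} e^{\I\sqrt\mu q\cdot(x-y)}\ud\omega(q),
\end{equation*}
i.e.\ it is the Fourier multiplier $1/K_T^\Delta$ with the singular on-shell contribution subtracted off. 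Consequently $M_{T,\Delta} - M_{T_c,0}$ is the Fourier multiplier with symbol $\frac{1}{K_T^\Delta(p)} - \frac{1}{K_{T_c}(p)}$ minus the corresponding difference $m(T,\Delta) - m(T_c,0)$ of the on-shell pieces. The key point is that both differences are \emph{non-singular}: the logarithmic divergence near the Fermi surface $|p|=\sqrt\mu$ cancels between $T$ and $T_c$, since the integrands $\tanh$-regularized there agree to leading order. First I would write $\frac{1}{K_T^\Delta(p)} - \frac{1}{K_{T_c}(p)}$ using the auxiliary functions $g_0, g_1$ from \eqref{eq:g01} (as in the proof of Proposition \ref{prop:BCSoriguniv}) and observe that, away from a neighborhood of the Fermi surface, this difference is bounded by $C(T_c - T) + C\Delta(\sqrt\mu)^2 \lesssim e^{-c/\lambda}$ using $1-T/T_c = h^2 \le 1$, the a priori bound $\Delta(\sqrt\mu)\lesssim e^{-c/\lambda}$ from Section \ref{sec.exp.vanish.Delta}, and the Lipschitz/size bounds \eqref{eqn.Delta.bdd.Delta.sqrt.mu}--\eqref{eqn.Delta.lipschitz}.

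Next I would estimate the Hilbert--Schmidt norm. Writing $\norm{V^{1/2}(M_{T,\Delta}-M_{T_c,0})|V|^{1/2}}_{\mathrm{HS}}^2 = \int\int |V(x)|\,|M_{T,\Delta}(x,y) - M_{T_c,0}(x,y) - \ldots|^2\,|V(y)|\ud x\ud y$, I would bound this by reducing, via Fourier transform and the same manipulations used in the proof of Lemma \ref{lem.bdd.VMV} (deferred to \Cref{sec.proof.lem.bdd.VMV}), to an estimate on the multiplier difference tested against $\hat V$. The subtracted on-shell term contributes $|m(T,\Delta) - m(T_c,0)|$ times a bounded quantity; and from the first-order analysis in Section \ref{sec.first.order.(a)} together with Lemma \ref{lem.bdd.VMV} one already gets $m(T,\Delta) = \frac{-1}{\lambda e_\mu}(1+O(\lambda))$ for both $(T,\Delta)$ and $(T_c,0)$, so I need the \emph{sharper} cancellation $|m(T,\Delta) - m(T_c,0)| \lesssim e^{-c/\lambda}$ — but this is exactly what the statement is ultimately about, so I would instead bound $m(T,\Delta) - m(T_c,0)$ directly by the non-singular integral $\frac{1}{|\S^{d-1}|}\int_{|p|<\sqrt{2\mu}}\big(\frac{1}{K_T^\Delta(p)} - \frac{1}{K_{T_c}(p)}\big)\ud p$ plus an exponentially small tail, and show this integral is $O(e^{-c/\lambda})$ by the cancellation of the log-divergence described above. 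The remaining multiplier part $\frac{1}{K_T^\Delta(p)} - \frac{1}{K_{T_c}(p)}$ (before subtracting the on-shell average) is handled by splitting $\R^d$ into the region near the Fermi surface, where one uses the log-divergence cancellation and the a priori smallness of $\Delta$, and the region far from it, where everything is smooth and the difference is controlled by $C e^{-c/\lambda}$ pointwise; in both regions the weight $|V(x)|^{1/2}|V(y)|^{1/2}$ together with $V\in L^{p_V}$ (Assumption \ref{ass:basic}) and the Hausdorff--Young / Riemann--Lebesgue bounds used for $\mfF_\mu$ give finiteness of the $x,y$-integrals.

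I expect the main obstacle to be the near-Fermi-surface analysis: showing that the would-be logarithmic divergence of $\int (1/K_T^\Delta - 1/K_{T_c})$ truly cancels with the quantitative rate $e^{-c/\lambda}$ rather than merely being $O(1)$ or $o(1)$. The subtlety is that $K_T^\Delta$ involves $\Delta(p)$, which is not literally zero on the Fermi surface, so the cancellation is between $K_T^\Delta$ and $K_{T_c}^0$ and one must track both the temperature difference $T_c - T = T_c h^2$ and the gap $\Delta(\sqrt\mu)^2$, using that \emph{both} are $O(e^{-c/\lambda})$ (for $T$ bounded away from $T_c$ this is immediate; the uniformity down to $h\to 0$ is delicate but here $h\le 1$ suffices since we only claim an $e^{-c/\lambda}$ bound, not one that degenerates). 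Once the scalar integral estimates are in place, packaging them into the Hilbert--Schmidt bound is routine and parallels \Cref{lem.bdd.VMV}, so I would organize the proof as: (i) write the kernel of $M_{T,\Delta}-M_{T_c,0}$ explicitly; (ii) prove the scalar estimate $|\frac{1}{K_T^\Delta(p)} - \frac{1}{K_{T_c}(p)}| \lesssim e^{-c/\lambda}\,w(p)$ for a suitable integrable weight $w$ and the companion bound $|m(T,\Delta)-m(T_c,0)|\lesssim e^{-c/\lambda}$; (iii) insert into the $\mathrm{HS}$-norm integral and invoke Assumption \ref{ass:basic} exactly as in \Cref{sec.proof.lem.bdd.VMV}.
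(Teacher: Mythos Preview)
Your proposed scalar estimate in step (ii), namely $\bigl|\tfrac{1}{K_T^\Delta(p)} - \tfrac{1}{K_{T_c}(p)}\bigr| \lesssim e^{-c/\lambda}\, w(p)$ with $w$ integrable, is false near the Fermi surface and this is the central gap. At $|p|=\sqrt\mu$ one has $K_{T_c}(\sqrt\mu)=2T_c$ and $K_T^\Delta(\sqrt\mu)=|\Delta(\sqrt\mu)|/\tanh(|\Delta(\sqrt\mu)|/2T)$, both of order $T_c\sim e^{-c/\lambda}$; hence each reciprocal is of order $e^{+c/\lambda}$, and there is no reason for the difference to be better (for instance at $T=0$ it equals $1/|\Delta(\sqrt\mu)| - 1/(2T_c)$). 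So the ``log--divergence cancellation'' you invoke does not produce smallness of the pointwise difference --- it only prevents a genuine singularity.

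What actually makes the kernel small is the on--shell subtraction, which you treat as a separate additive term to be bounded rather than as the source of cancellation. In the regularized kernel one has, for $|p|<\sqrt{2\mu}$, the factor $e^{ip(x-y)}-e^{i\sqrt\mu\,p/|p|\,(x-y)}$, and after the angular integration this contributes a factor $|j_d(p|x-y|)-j_d(\sqrt\mu|x-y|)|\le C|p-\sqrt\mu|\,|x-y|$. The paper bounds $\bigl|\tfrac{1}{K_T^\Delta}-\tfrac{1}{K_{T_c}}\bigr|$ by a triangle inequality through $1/E_\Delta$ and $1/|p^2-\mu|$ (using $|\tanh x-1|\le 2e^{-2x}$), obtaining terms like $e^{-|p^2-\mu|/T_c}/|p^2-\mu|$ and $\norm{\Delta}_\infty^2/|p^2-\mu|^3$. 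Against the extra $|p-\sqrt\mu|$ these integrate to $C\,T_c\,|x-y|$ and $C\,\norm{\Delta}_\infty\,|x-y|$ respectively, which is where the exponential smallness comes from. The resulting $|x-y|$ in the kernel bound is precisely why the additional hypothesis $|\cdot|^2V\in L^1$ (not just \Cref{ass:basic}) is needed for the HS estimate; your plan to ``invoke Assumption~\ref{ass:basic} exactly as in \Cref{sec.proof.lem.bdd.VMV}'' would not suffice. Finally, there is no need to estimate $m(T,\Delta)-m(T_c,0)$ separately: once you write the kernel in the subtracted form the $m$--pieces have already been absorbed, so the circularity you worried about does not arise.
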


Having \Cref{lem.bdd.V.MTDelta.minus.MTc.V} at hand, we write the difference as a telescoping sum
\begin{align*}
  & \frac{1}{1 + \lambda V^{1/2} M_{T,\Delta} |V|^{1/2} } 
  - 
  \frac{1}{1 + \lambda V^{1/2} M_{T_c,0} |V|^{1/2} }
  \\ & 
  \quad 
  = \sum_{k=1}^\infty (-\lambda)^k 
  \left[\left(V^{1/2} M_{T,\Delta} |V|^{1/2}\right)^k
      -
      \left(V^{1/2}M_{T_c,0}|V|^{1/2}\right)^k\right]
  \\ & 
  \quad 
  = 
  \sum_{k = 1}^\infty 
  (-\lambda)^k 
  \sum_{\ell = 0}^{k-1}
  \left(V^{1/2} M_{T,\Delta} |V|^{1/2}\right)^{k-1-\ell} 
  V^{1/2}(M_{T,\Delta} - M_{T_c,0})|V|^{1/2}
  \left(V^{1/2}M_{T_c,0}|V|^{1/2}\right)^{\ell}.
\end{align*}
Thus, by \Cref{lem.bdd.VMV,lem.bdd.V.MTDelta.minus.MTc.V} we have
\begin{align*}
\norm{\frac{1}{1 + \lambda V^{1/2} M_{T,\Delta} |V|^{1/2} } 
  - 
  \frac{1}{1 + \lambda V^{1/2} M_{T_c,0} |V|^{1/2} }}_{\textnormal{HS}}
  & \leq \sum_{k=1}^\infty \lambda^k \sum_{\ell=0}^{k-1} 
  C^{k-1-\ell} \times Ce^{-c/\lambda} \times C^\ell
  \\
  & \leq \sum_{k=1}^\infty k \lambda^k C^k e^{-c/\lambda} 
  \leq C \lambda e^{-c/\lambda}.
\end{align*}
We conclude that 
\begin{equation}\label{eqn.m.difference.infinite.order}
  \abs{m(T,\Delta) - m(T_c,0)}
  \leq C e^{-c/\lambda}.
\end{equation}

\subsubsection{Calculation of the integral \texorpdfstring{$m(T,\Delta) - m(T_c,0)$}{m(T,Delta)-m(Tc,0)}}
\label{sec.calc.int.m-m}
To extract the asymptotics in \eqref{eq:maina} from the bound in \eqref{eqn.m.difference.infinite.order}
we calculate the difference $m(T,\Delta) - m(T_c,0)$ and show that it is essentially the left-hand-side of \eqref{eq:fBCS}.
The argument is essentially given in \cite[Appendix C.4]{Langmann.Triola.2023}. For completeness, we give the argument here.

By changing variables to $s = (p^2 - \mu)/\mu$ and defining $x(s) = \Delta(\sqrt{\mu(1+s)})/\mu$ 
we get 
\begin{align*}
m(T,\Delta) - m(T_c,0)
  & = 
  \int_0^{\sqrt{2\mu}} \left(\frac{1}{K_T^\Delta} - \frac{1}{K_{T_c}}\right) p^{d-1} \ud p
  \\
  & = \frac{\mu^{d/2-1}}{2}  \int_{-1}^1 
    \left(
      \frac{\tanh\left(\frac{\sqrt{s^2+x(s)^2}}{2T/\mu}\right)}{\sqrt{s^2 + x(s)^2}}
      -
      \frac{\tanh \frac{s}{2T_c/\mu}}{s} 
    \right)
      (1 + s)^{d/2-1} 
    \ud s.
\end{align*}
This is of the form where we can use \cite[Lemma 5]{Langmann.Triola.2023}.
\begin{lemma}[{\cite[Lemma 5]{Langmann.Triola.2023}}]
Let $g,G$ be functions with $g(0) = G(0) = 1$ and $g\in L^\infty$ 
and let $\tau, \tau_c, \delta > 0$. Assume that $\tilde g(s) := (g(s)-1)/s$ and 
$\tilde G(s) := (G(s)-1)/s$ satisfy $\tilde g, \tilde G \in L^\infty(\R)$.
Let $s_1 > 0$ such that $g(s) > 1/2$ for $|s| < s_1$ and define 
\begin{equation*}
\begin{aligned}
J_{\tau,\delta,\tau_c}(g,G)
  & = \int_\R \left\{\frac{\tanh \frac{\sqrt{s^2 + g(s)^2 \delta^2}}{2\tau}}{\sqrt{s^2 + g(s)^2 \delta^2}}  
      -  \frac{\tanh \frac{s}{2\tau_c}}{s} \right\} G(s)
      \ud s,
  \\ 
J_{\tau,\delta,\tau_c}^{(0)}
= J_{\tau,\delta, \tau_c}(1,1)
  & = \int_\R 
    \left\{\frac{\tanh \frac{\sqrt{s^2 + \delta^2}}{2\tau}}{\sqrt{s^2 + \delta^2}}  
      -  \frac{\tanh \frac{s}{2\tau_c}}{s} \right\} \ud s.
\end{aligned}
\end{equation*}
Then 
\begin{equation*}
\begin{aligned}
\abs{J_{\tau,\delta,\tau_c}(g,G) - J_{\tau,\delta,\tau_c}^{(0)}}
  & \leq \norm{\tilde G}_{L^\infty} \left(4\tau + 4\tau_c + \pi \delta \norm{g}_{L^\infty}\right)
    + 4\delta \norm{\tilde g}_{L^\infty} \left(1 + \norm{g}_{L^\infty}\right)\left(1 + \frac{\delta}{2s_1}\right).
\end{aligned}
\end{equation*}
\end{lemma}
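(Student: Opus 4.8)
The plan is to interpolate between the pair $(g,G)$ and $(1,1)$ in two independent steps,
\[
J_{\tau,\delta,\tau_c}(g,G) - J^{(0)}_{\tau,\delta,\tau_c} = \bigl[J_{\tau,\delta,\tau_c}(g,G) - J_{\tau,\delta,\tau_c}(g,1)\bigr] + \bigl[J_{\tau,\delta,\tau_c}(g,1) - J^{(0)}_{\tau,\delta,\tau_c}\bigr],
\]
the first bracket producing the $\norm{\tilde G}_{L^\infty}$-term and the second the $\norm{\tilde g}_{L^\infty}$-term. For the first bracket I would pull $G(s)-1=s\tilde G(s)$ out of the curly braces, so it equals $\int_\R\bigl[\tfrac{s\tanh(\sqrt{s^2+g(s)^2\delta^2}/2\tau)}{\sqrt{s^2+g(s)^2\delta^2}}-\tanh\tfrac{s}{2\tau_c}\bigr]\tilde G(s)\ud s$; factoring out $\norm{\tilde G}_{L^\infty}$ leaves the $L^1$-norm of an odd function which, for $s>0$, I would split as
\[
\frac{s\bigl(\tanh\frac{\sqrt{s^2+g^2\delta^2}}{2\tau}-1\bigr)}{\sqrt{s^2+g^2\delta^2}}\;+\;\Bigl(\frac{s}{\sqrt{s^2+g^2\delta^2}}-1\Bigr)\;+\;\Bigl(1-\tanh\frac{s}{2\tau_c}\Bigr).
\]
The first summand is $\le 1-\tanh(s/2\tau)$ in absolute value (since $\sqrt{s^2+g^2\delta^2}\ge s$ and $\tanh$ is increasing), and $\int_0^\infty(1-\tanh(s/2\tau))\ud s\le 2\tau$; the second has absolute value $\tfrac{g^2\delta^2}{\sqrt{s^2+g^2\delta^2}(s+\sqrt{s^2+g^2\delta^2})}\le \tfrac{g^2\delta^2}{s^2+g^2\delta^2}\le \tfrac{\norm{g}_{L^\infty}^2\delta^2}{s^2+\norm{g}_{L^\infty}^2\delta^2}$ (the map $a\mapsto a/(s^2+a)$ being increasing), integrating to $\tfrac{\pi}{2}\norm{g}_{L^\infty}\delta$; the third integrates to $\le 2\tau_c$. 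Doubling yields exactly $\norm{\tilde G}_{L^\infty}(4\tau+4\tau_c+\pi\norm{g}_{L^\infty}\delta)$.

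For the second bracket I would write the integrand as $\Psi(s^2+g(s)^2\delta^2)-\Psi(s^2+\delta^2)$ with $\Psi(y):=\tanh(\sqrt y/2\tau)/\sqrt y$, which is strictly decreasing on $(0,\infty)$. From the representation $\Psi(y)=\int_0^{1/2\tau}\cosh^{-2}(t\sqrt y)\,\ud t$, differentiating under the integral and substituting $w=t\sqrt y$ gives the clean, temperature-free bound $|\Psi'(y)|\le C_0\,y^{-3/2}$ with $C_0:=\int_0^\infty w\,\tanh w\,\cosh^{-2}w\,\ud w<1$. By the mean value theorem, and since $C_0y^{-3/2}$ is decreasing,
\[
\bigl|\Psi(s^2+g^2\delta^2)-\Psi(s^2+\delta^2)\bigr|\le |g^2-1|\,\delta^2\,C_0\,\bigl(\min\{s^2+g^2\delta^2,\,s^2+\delta^2\}\bigr)^{-3/2},
\]
while $g-1=s\tilde g(s)$ gives $|g^2-1|\le |s|\,\norm{\tilde g}_{L^\infty}(1+\norm{g}_{L^\infty})$. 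I would then split $\int_\R=\int_{|s|<s_1}+\int_{|s|\ge s_1}$: on $|s|<s_1$ the hypothesis $g>1/2$ bounds the minimum below by $s^2+\delta^2/4$, and $\int_\R\frac{|s|\delta^2}{(s^2+\delta^2/4)^{3/2}}\ud s=4\delta$, contributing $\le 4C_0\delta\,\norm{\tilde g}_{L^\infty}(1+\norm{g}_{L^\infty})$; on $|s|\ge s_1$ the minimum is $\ge s^2$, and $\int_{|s|\ge s_1}\frac{\delta^2}{s^2}\ud s=\frac{2\delta^2}{s_1}$, contributing $\le \frac{2C_0\delta^2}{s_1}\norm{\tilde g}_{L^\infty}(1+\norm{g}_{L^\infty})$. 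Using $C_0<1$ these combine to $4\delta\,\norm{\tilde g}_{L^\infty}(1+\norm{g}_{L^\infty})(1+\tfrac{\delta}{2s_1})$, which added to the first bracket gives the asserted inequality.

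The only step that needs an actual idea rather than bookkeeping is the control of the second bracket near $s=0$, where $\Psi'$ is singular: the hypothesis $g>1/2$ on $(-s_1,s_1)$ is precisely what keeps $s^2+g(s)^2\delta^2\ge \delta^2/4$ there, and the integral representation of $\Psi$ is what delivers the $\tau$-independent constant $C_0<1$ — a crude Lipschitz estimate on $\Psi'$ would instead carry an unwanted factor $\tau^{-2}$, incompatible with the stated bound. Everything else reduces to the explicit elementary integrals above.
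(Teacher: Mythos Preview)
The paper does not supply its own proof of this lemma; it is quoted verbatim from \cite[Lemma~5]{Langmann.Triola.2023} and simply applied in \S\ref{sec.calc.int.m-m}. So there is nothing in the present paper to compare against.

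Your argument is correct and gives exactly the stated constants. The two-step interpolation $(g,G)\to(g,1)\to(1,1)$ is the natural route, and each of your estimates checks out: for the $\tilde G$-part the three-term splitting integrates to $2\tau\ln 2 + \tfrac{\pi}{2}\norm{g}_{L^\infty}\delta + 2\tau_c\ln 2$ on each half-line; for the $\tilde g$-part the integral representation $\Psi(y)=\int_0^{1/2\tau}\cosh^{-2}(t\sqrt y)\,\ud t$ does yield the $\tau$-free bound $|\Psi'(y)|\le C_0\,y^{-3/2}$ (in fact $C_0=\int_0^\infty w\tanh w\,\cosh^{-2}w\,\ud w=\tfrac12$ by one integration by parts), and the split at $|s|=s_1$ together with the explicit integrals $\int_\R|s|\delta^2(s^2+\delta^2/4)^{-3/2}\ud s=4\delta$ and $\int_{|s|\ge s_1}\delta^2 s^{-2}\ud s=2\delta^2/s_1$ closes the bound.

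One small imprecision: you call the integrand in the first bracket ``an odd function'', but this would require $g(-s)^2=g(s)^2$, which is not assumed. This does not matter for the bound, since for $s<0$ the absolute value has the same form with $g(-s)$ in place of $g(s)$ and your estimates depend only on $\norm{g}_{L^\infty}$; the doubling is therefore justified regardless.
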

To apply this lemma we write 
\begin{equation*}
x(s) = \frac{\Delta(\sqrt{\mu(1+s)})}{\Delta(\sqrt{\mu})} \frac{\Delta(\sqrt{\mu})}{\mu}
  = g(s) \delta.
\end{equation*}
Then $g\in L^\infty$ uniformly in $\lambda$ by \eqref{eqn.Delta.bdd.Delta.sqrt.mu}
and $\tilde g \in L^\infty$ uniformly in $\lambda$ by \eqref{eqn.Delta.lipschitz}.
Finally, clearly $G(s) = (1+s)^{d/2-1}\chi_{|s|\leq 1}$ has $\tilde G \in L^\infty$.
We conclude that 
\begin{equation*}
  m(T,\Delta) - m(T_c,0)
    = \frac{\mu^{d/2-1}}{2} 
      \int_\R \left\{\frac{\tanh \frac{\sqrt{s^2 + (\Delta(\sqrt{\mu})/\mu)^2}}{2T/\mu}}{\sqrt{s^2 + (\Delta(\sqrt{\mu})/\mu)^2}}  
      -  \frac{\tanh \frac{s}{2T_c/\mu}}{s} \right\} \ud s
    + O(e^{-c/\lambda}).
\end{equation*}
Writing $T = T_c (1 - h^2)$, recalling the bound in \eqref{eqn.m.difference.infinite.order} and changing variables
we find 
\begin{equation*}
  \int_\R \left\{\frac{\tanh \frac{\sqrt{s^2 + (\Delta(\sqrt{\mu})/T_c)^2}}{2(1-h^2)}}{\sqrt{s^2 + (\Delta(\sqrt{\mu})/T_c)^2}}  
      -  \frac{\tanh \frac{s}{2}}{s} \right\} \ud s
    = R_U,
\end{equation*}
with $R_U = O(e^{-c/\lambda})$. Hence, by Lemma \ref{lem:fBCSapprox}, we find that 
\begin{equation}\label{eqn.f=fBCS.(a)}
	\frac{\Delta(\sqrt{\mu})}{T_c} 
	= \fBCS(h) + O(|R_U|^{1/2}) = \fBCS(h) ( 1 + O(h^{-1} e^{-c/\lambda}))
\end{equation}
since $\fBCS(h) \sim h$ for small $h$ by Lemma \ref{lem:fBCS}.

\subsubsection{Comparing \texorpdfstring{$\Delta(\sqrt{\mu})$ and $\Xi$}{Delta(sqrt(mu)) and Xi}}
\label{sec.compare.Xi.Delta.(a)}
We finally prove that $\Xi$ is essentially given by $\Delta(\sqrt\mu)$.

Clearly $\Xi = \inf_p E_\Delta(p) \leq E_\Delta(\sqrt{\mu}) = \Delta(\sqrt\mu)$. 
To show a corresponding lower bound consider 
$p$ with $|p^2-\mu| \leq \Xi \leq \Delta(\sqrt\mu)$. 
Then by \Cref{eqn.Delta.lipschitz} and the bound $\Delta(\sqrt\mu) = O(e^{-c/\lambda})$ we have 
\begin{align*}
\Delta(p) & \geq \Delta(\sqrt\mu) - C \Delta(\sqrt\mu) |p-\sqrt\mu|
\geq \Delta(\sqrt\mu) \left(1 - C \Delta(\sqrt\mu)\right)
\geq \Delta(\sqrt\mu) ( 1 + O(e^{-c/\lambda})).
\end{align*}
We conclude that 
\begin{equation}\label{eqn.compare.Delta.Xi}
  \Xi = \Delta(\sqrt\mu) \left(1 + O(e^{-c/\lambda})\right).
\end{equation}
Together with \eqref{eqn.f=fBCS.(a)} this concludes the proof of \Cref{thm:main}(a).

\subsection{Non-universal property of \texorpdfstring{$\Delta$}{Delta}: Proof of \texorpdfstring{\Cref{eqn.nonuniversality}}{Equation (\ref*{eqn.nonuniversality})}} \label{subsec:nonuniv}

From the Birman--Schwinger argument (\Cref{sec.Birman--Schwinger.(a)}) we have that $\phi = V^{1/2}\alpha_*$ 
is the (unique) eigenvector of 
\begin{equation*}
\frac{\lambda m(T,\Delta)}{1 + \lambda V^{1/2} M_{T,\Delta}|V|^{1/2}} V^{1/2} \mfF_\mu^\dagger \mfF_\mu |V|^{1/2}
\end{equation*}
of eigenvalue $-1$. Recalling that $\Delta = -2\lambda \mfF V \alpha_*$ we thus get the equation 
\begin{equation*}
\Delta = - \mfF |V|^{1/2} \frac{\lambda m(T,\Delta)}{1 + \lambda V^{1/2}M_{T,\Delta}|V|^{1/2}}V^{1/2}\mfF_\mu^\dagger \Delta(\sqrt\mu\, \cdot)
\end{equation*}
with $\Delta(\sqrt\mu\,\cdot)$ being the constant function on the unit sphere of value $\Delta(\sqrt\mu)$.
Recall that $-1 = \lambda m(T,\Delta) \longip{u}{\mfF_\mu |V|^{1/2} \frac{1}{1 + \lambda V^{1/2}M_{T,\Delta}|V|^{1/2}}V^{1/2}\mfF_\mu^\dagger}{u}$ 
for small enough $\lambda$.
By the same argument as in \Cref{sec.infinite.order.(a)} we may replace $M_{T,\Delta}$ by $M_{0,0}$, 
its corresponding value at $T=\Delta=0$, up to errors of order $e^{-c/\lambda}$.
(Concretely one can define $M_{0,0}$ via the representation of its kernel as given in 
\Cref{eqn.decompose.VMV,eqn.kernel.Vp^2V,eqn.kernel.V(M-p^2)V,eqn.kernel.Vp2chiV,eqn.kernel.MTDelta.d=1.d=2}, 
setting $T=\Delta=0$.)
Hence, for sufficiently small $\lambda$,
\begin{equation*}
\frac{\Delta}{\Delta(\sqrt\mu)}
	= \frac{|\S^{d-1}|^{1/2}\mfF |V|^{1/2} \frac{1}{1 + \lambda V^{1/2}M_{0,0}|V|^{1/2}}V^{1/2}\mfF_\mu^\dagger u}
		{\longip{u}{\mfF_\mu |V|^{1/2} \frac{1}{1 + \lambda V^{1/2}M_{0,0}|V|^{1/2}}V^{1/2}\mfF_\mu^\dagger }{u}}
		+ O(e^{-c/\lambda})
	= F + O(e^{-c/\lambda}).
\end{equation*}
Clearly, the function $F$ does not depend on the temperature $T$.

\subsection{Ginzburg-Landau theory: Proof of Theorem \ref{thm:main}(b)} \label{subsec:GLproof}

As mentioned above, the proof of Theorem \ref{thm:main}(b) builds on Ginzburg--Landau (GL) theory. 
For convenience of the reader, we recall the main input from GL theory for the purpose of the present paper in Proposition \ref{prop:GLtheory} below. More general and detailed statements can be found in the original papers \cite{Frank.Hainzl.ea.2012, Frank.Lemm.2016, deuchert2023microscopic, deuchert2023microscopic2}. In particular, these works allow for external fields or ground state degeneracy (cf.~Lemma \ref{lem:KTcGS} below), respectively.

As a preparation for Proposition \ref{prop:GLtheory}, we have the following lemma.

\begin{lemma}[Ground state of $K_{T_c} + \lambda V$] \label{lem:KTcGS}
	Let $V$ satisfy Assumptions~\ref{ass:basic} and \ref{ass:hatV neg}.  Then $K_{T_c} + \lambda V$ has $0$ as a non-degenerate ground state eigenvalue and its $L^2(\R^d)$-normalized ground state $\mathfrak{a}_0$ can be chosen to have strictly positive Fourier transform. Moreover, it holds that $\hat{\mathfrak{a}}_0 \in L^\infty(\R^d)$. 
\end{lemma}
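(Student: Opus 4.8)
The plan is to establish each claim in turn: that $0$ is the ground-state eigenvalue of $K_{T_c} + \lambda V$, that it is non-degenerate, that the ground state $\mathfrak{a}_0$ can be chosen with strictly positive Fourier transform, and finally that $\hat{\mathfrak{a}}_0 \in L^\infty$. The first point is essentially the definition of $T_c$ in \eqref{eq:Tc}: by definition $K_{T_c} + \lambda V \ge 0$, and since for every $T < T_c$ the operator $K_T + \lambda V$ has negative spectrum, a continuity/limiting argument in $T$ (together with the fact that $K_T(p) \to |p^2 - \mu|$ is increasing in $T$ pointwise) forces $\inf\spec(K_{T_c} + \lambda V) = 0$ to actually be attained as an eigenvalue. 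The cleanest route is the Birman--Schwinger principle: $0 \in \spec(K_{T_c} + \lambda V)$ iff the compact operator $B_{T_c} := \lambda |V|^{1/2} K_{T_c}^{-1} V^{1/2}$ (note $K_{T_c}(p) \ge c(1+p^2)$ away from the Fermi surface and is integrable near it, so this is Hilbert--Schmidt under Assumption~\ref{ass:basic} by the same argument as for $\mathcal{V}_\mu$) has $-1$ as an eigenvalue, and the threshold characterization of $T_c$ gives exactly that $-1 = \inf\spec B_{T_c}$.

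Next I would transfer the Perron--Frobenius/positivity structure through the Birman--Schwinger operator, exactly as in \Cref{sec.a.priori.Delta.(a)} and as in \cite[Lemma~2]{Hainzl.Seiringer.2008}. Writing $B_{T_c} = \lambda m(T_c,0)\, V^{1/2}\mfF_\mu^\dagger \mfF_\mu |V|^{1/2} + \lambda V^{1/2} M_{T_c,0} |V|^{1/2}$ and using that $1 + \lambda V^{1/2} M_{T_c,0}|V|^{1/2}$ is invertible for small $\lambda$ (Lemma~\ref{lem.bdd.VMV}), the eigenvalue $-1$ of $B_{T_c}$ is equivalent to $-1$ being the lowest eigenvalue of $S_{T_c,0}$ on $L^2(\S^{d-1})$, which to leading order is $\lambda m(T_c,0)\,\mathcal{V}_\mu$. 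Since $\hat V \le 0$ with $\hat V(0) < 0$, the operator $\mathcal{V}_\mu$ has the constant function $u = |\S^{d-1}|^{-1/2}$ as its unique (Perron--Frobenius) ground state, hence so does $S_{T_c,0}$ for $\lambda$ small; pulling this back gives that $\phi_0 := (1 + \lambda V^{1/2} M_{T_c,0}|V|^{1/2})^{-1} V^{1/2}\mfF_\mu^\dagger u$ is the unique ground state of $B_{T_c}$, and $\mathfrak{a}_0 \propto K_{T_c}^{-1} |V|^{1/2}\phi_0 = -\lambda^{-1} K_{T_c}^{-1} V \mathfrak{a}_0$ is the unique (up to scalar) ground state of $K_{T_c} + \lambda V$, giving non-degeneracy. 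Positivity of $\hat{\mathfrak{a}}_0$ then follows because $\widehat{\mathfrak{a}_0}(p) = -\lambda (2\pi)^{-d/2} K_{T_c}(p)^{-1} (\widehat V \star \widehat{\mathfrak{a}}_0)(p)$ and $-\widehat V \ge 0$, $-\widehat V(0) > 0$: this is a fixed-point relation preserving positivity, and irreducibility (again from $\widehat V(0) < 0$ on the Fermi sphere) upgrades non-negativity to strict positivity, exactly as in the $T=0$ argument of \cite[Lemma~2]{Hainzl.Seiringer.2008}, which goes through verbatim with $K_0 = |p^2-\mu|$ replaced by $K_{T_c}$.

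For the boundedness $\hat{\mathfrak{a}}_0 \in L^\infty$, I would read it off the same fixed-point identity: $\widehat{\mathfrak{a}}_0(p) = -\lambda (2\pi)^{-d/2} K_{T_c}(p)^{-1}(\widehat V \star \widehat{\mathfrak{a}}_0)(p)$. Since $\mathfrak{a}_0 \in H^1(\R^d) \subset L^2$, we have $\widehat{\mathfrak{a}}_0 \in L^2$, and $V \in L^1$ gives $\widehat V \in L^\infty$, so $\widehat V \star \widehat{\mathfrak{a}}_0 \in L^\infty$ by Young's inequality; combined with $K_{T_c}(p)^{-1} \le (2T_c)/(|p^2-\mu|\coth(|p^2-\mu|/2T_c)) \le 1/\text{const}$ bounded (indeed $K_{T_c}(p) \ge 2T_c > 0$ everywhere), this yields $\widehat{\mathfrak{a}}_0 \in L^\infty$. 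The main obstacle is not any single hard estimate but making sure the Perron--Frobenius/uniqueness argument is deployed at the correct level — one must argue positivity on the sphere via $S_{T_c,0}$ (where the leading operator $\mathcal{V}_\mu$ has a genuine Perron--Frobenius structure with the explicit constant ground state) rather than trying to apply positivity-preservation directly to $B_{T_c}$ or to $K_{T_c}+\lambda V$ in position space, where $K_{T_c}^{-1}$ is not positivity improving. This is precisely the structure already used in \Cref{sec.a.priori.Delta.(a)}, so the proof amounts to specializing that analysis to $T = T_c$, $\Delta = 0$, and recording the extra $L^\infty$ bound; I would therefore present it compactly, citing Lemma~\ref{lem.bdd.VMV} and the discussion around \eqref{eq:emu} for the inputs.
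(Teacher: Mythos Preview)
Your detour through the Birman--Schwinger decomposition and the Fermi-sphere operator $S_{T_c,0}$ is much heavier than the paper's argument and rests on a misdiagnosed obstacle. You assert one ``must'' argue on the sphere because $K_{T_c}^{-1}$ is not positivity improving in position space --- but the paper simply works in \emph{momentum} space, where the picture is transparent: since $T_c>0$, the multiplier $K_{T_c}(p)\ge 2T_c$ is strictly positive, and $\lambda V$ acts by convolution with $\lambda\hat V\le 0$. The eigenvalue equation reads $\hat{\mathfrak a}_0(p) = -\lambda(2\pi)^{-d/2}K_{T_c}(p)^{-1}(\hat V\star\hat{\mathfrak a}_0)(p)$, an integral operator with nonnegative, irreducible kernel (from $\hat V(0)<0$ and continuity of $\hat V$), so a direct Perron--Frobenius argument gives non-degeneracy and strict positivity of $\hat{\mathfrak a}_0$ in one stroke, valid for \emph{all} $\lambda>0$. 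This is exactly the fixed-point relation you invoke at the end anyway --- your route through $S_{T_c,0}$, which additionally requires $\lambda$ small to invert $1+\lambda V^{1/2}M_{T_c,0}|V|^{1/2}$, contributes nothing to the conclusion.

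There is also a slip in your $L^\infty$ argument: $\hat V\in L^\infty$ and $\hat{\mathfrak a}_0\in L^2$ do \emph{not} yield $\hat V\star\hat{\mathfrak a}_0\in L^\infty$ via Young's inequality (the exponents $1/\infty + 1/2 \ne 1$ do not match). The correct route is $\|\hat V\star\hat{\mathfrak a}_0\|_{L^\infty}\lesssim\|V\mathfrak a_0\|_{L^1}$, which is finite by H\"older using $V\in L^1\cap L^{p_V}$ (Assumption~\ref{ass:basic}) together with the Sobolev embedding $\mathfrak a_0\in H^1\hookrightarrow L^{q}$ for the appropriate dual exponent. The paper obtains this by citing \cite[Proposition~2]{Frank.Hainzl.ea.2012}.
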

\begin{proof}
Since $T_c>0$ (recall the discussion below Assumption \ref{ass:hatV neg}), we first note that the Fourier multiplier $K_{T_c}$  is strictly positive. Then using $\hat{V} \le0$, the claim follows from a Perron-Frobenius type argument (see also \cite{Hainzl.Seiringer.2008} and \cite[Assumption~2]{Frank.Hainzl.ea.2012}). The fact that $\hat{\mathfrak{a}}_0 \in L^\infty(\R^d)$ follows from \cite[Proposition~2]{Frank.Hainzl.ea.2012} by invoking Assumption \ref{ass:basic}. 
\end{proof}

We can now formulate the main results from GL theory, needed for the present paper.

\begin{prop}[{Ginzburg-Landau theory, see \cite[Theorem 2.10]{Frank.Lemm.2016}}] \label{prop:GLtheory}
Let $V$ be a function satisfying Assumptions \ref{ass:basic} and \ref{ass:hatV neg} and suppose that $ 0 \le T< T_c$. 
Then, using the notations from Proposition \ref{prop.KTDelta.geq.0} and Lemma \ref{lem:KTcGS}, we have that
\begin{equation*}
\mathcal{F}_T[\Gamma_*] - \mathcal{F}_T[\Gamma_{\rm FD}] = h^4  \mathcal{E}_{\textnormal{GL}}(\psi_{\rm GL}) + \mathcal{O}(h^6) \quad \text{as} \quad h \to 0\,,
\end{equation*}
where $\psi_{\rm GL} \neq 0$ minimizes the Ginzburg-Landau ``functional'' $\mcE_{\textnormal{GL}} : \C \to \R$,
\begin{equation} 
	\begin{split}\label{eq:GLfunctional}
\mathcal{E}_{\rm GL}(\psi) = & |\psi|^4\left[\frac{1}{T_c^3}\int_{\R^d} \frac{g_1((p^2 - \mu)/T_c)}{(p^2 - \mu)/T_c} |K_{T_c}(p)|^4 |\hat{\mathfrak{a}}_0(p)|^4\D p\right]  \\ 
& \hspace{10mm}- |\psi|^2 \left[\frac{1}{2 T_c}\int_{\R^d} \frac{1}{\cosh^2\left( (p^2 - \mu) / (2 T_c)\right)} |K_{T_c}(p)|^2 |\hat{\mathfrak{a}}_0(p)|^2\D p \right] \,.
	\end{split}
\end{equation}
Here we used the auxiliary function $g_1$ from \eqref{eq:g01}. 
Moreover, we can decompose the off--diagonal element $\hat{\alpha}_*$ of $\Gamma_*$ as 
\begin{equation} \label{eq:approxmin}
\hat{\alpha}_* = h |\psi_0| \hat{\mathfrak{a}}_0 + \hat{\xi}
\end{equation}
where $\Vert \xi \Vert_{L^2} = \mathcal{O}(h^2)$ and $\psi_0\neq 0$ approximately minimizes \eqref{eq:GLfunctional}, i.e.
\begin{equation} \label{eq:approx minimum}
\mathcal{E}_{\rm GL}(\psi_0) \le \mathcal{E}_{\rm GL}(\psi_{\rm GL}) + \mathcal{O}(h^2) \,.
\end{equation}
\end{prop}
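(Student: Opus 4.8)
The plan is to obtain \Cref{prop:GLtheory} as the specialization of the Ginzburg--Landau approximation theorem of Frank and Lemm \cite[Theorem 2.10]{Frank.Lemm.2016} to the translation-invariant situation (no external electric or magnetic field) with a non-degenerate ground state of $K_{T_c} + \lambda V$. First I would recall their framework: one introduces a macroscopic length scale $h^{-1}$, sets $T = T_c(1-h^2)$, and expands the BCS free-energy difference $\mathcal{F}_T[\Gamma] - \mathcal{F}_T[\Gamma_{\mathrm{FD}}]$ in powers of $h$. The content of \cite[Theorem 2.10]{Frank.Lemm.2016} is then that this difference equals $h^4 \mathcal{E}_{\mathrm{GL}}(\psi_{\mathrm{GL}}) + o(h^4)$, that the off-diagonal part of any (approximate) minimizer is, to leading order, $h\,\psi(h\,\cdot)\,\mathfrak{a}_0$ with $\mathfrak{a}_0$ spanning $\ker(K_{T_c}+\lambda V)$ and $\psi$ minimizing the GL functional, and that the $L^2$-remainder is $O(h^2)$.

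Second, I would check that the hypotheses of that theorem are met here. \Cref{lem:KTcGS} provides exactly the required non-degeneracy of the ground state of $K_{T_c}+\lambda V$, together with $\hat{\mathfrak{a}}_0 \in L^\infty(\R^d)$, which guarantees that the GL coefficients below are finite; \Cref{ass:basic} supplies the integrability of $V$ used in \cite{Frank.Lemm.2016}; and \Cref{prop.KTDelta.geq.0} guarantees uniqueness of the BCS minimizer $\Gamma_*$, so that one may legitimately speak of \emph{the} associated $\psi_0 \in \C$ rather than a set. In the absence of external fields the GL functional of \cite{Frank.Lemm.2016} takes the form $\int \bigl(c_1 |\nabla\psi|^2 - c_2|\psi|^2 + c_3|\psi|^4\bigr)$ on the rescaled torus, with $c_1,c_2,c_3 > 0$ manifest from the integral formulas; its minimizer is therefore the spatially constant function, the gradient term drops out, and one is left with a genuine function of $\psi \in \C$. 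This is the reduction to $\mcE_{\mathrm{GL}} : \C \to \R$.

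Third, I would identify the constants $c_2, c_3$ with the bracketed integrals in \eqref{eq:GLfunctional}. The key point is that $\mathfrak{a}_0$ satisfies $K_{T_c}(p)\hat{\mathfrak{a}}_0(p) = -\lambda\,\widehat{V\mathfrak{a}_0}(p)$, so that $2K_{T_c}(p)\hat{\mathfrak{a}}_0(p)$ is precisely the ``gap-type'' function attached to the kernel element $\mathfrak{a}_0$ in the normalization of \eqref{eq:gapeq}. Substituting this into the coefficient formulas of \cite[Theorem 2.10]{Frank.Lemm.2016} --- which are expressed through $t \mapsto g_1(t/T_c)/(t/T_c)$ and $t \mapsto \cosh^{-2}(t/2T_c)$ evaluated at $t = p^2-\mu$, with $g_1$ from \eqref{eq:g01} --- reproduces \eqref{eq:GLfunctional}. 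The decomposition \eqref{eq:approxmin} with $\Vert\xi\Vert_{L^2}=O(h^2)$ and the near-optimality \eqref{eq:approx minimum} are then read off from the corresponding parts of that theorem. Finally, the sharpening of the remainder from $o(h^4)$ to $O(h^6)$ follows from the parity of the expansion: with $T = T_c(1-h^2)$ the free energy is a function of $h^2$, so the correction to the $h^4$ term is $O(h^6)$.

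The main obstacle I anticipate is bookkeeping rather than substance: reconciling the normalization conventions of \cite{Frank.Lemm.2016} (their definition of $K_T^\Delta$, the placement of the $2\pi$-factors and the sign in $\Delta = -2\lambda (2\pi)^{-d/2}\,\hat V \star \hat\alpha$, and their $L^2$-normalization of the kernel element) with those used here, so that the coefficients come out \emph{exactly} as written; and verifying carefully that in the field-free case their macroscopic variational problem genuinely collapses to the scalar one, i.e.\ that no non-constant profile of lower energy appears, which holds because the lowest mode of $-\nabla^2$ on the torus is the constant.
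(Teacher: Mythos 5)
Your plan matches what the paper actually does: Proposition~\ref{prop:GLtheory} is quoted as a direct specialization of \cite[Theorem~2.10]{Frank.Lemm.2016} to the translation-invariant, non-degenerate setting, with no further proof supplied in the paper itself. Your identifications are also correct: in the field-free case the macroscopic GL problem collapses to a polynomial in $|\psi|^2$ because the constant profile is the minimizer of $|\nabla\psi|^2$ on the unit torus, and the coefficients in \eqref{eq:GLfunctional} arise by substituting $\Delta_0(p)=2K_{T_c}(p)\hat{\mathfrak{a}}_0(p)$ (consistent with the footnote that $\Delta_0$ is denoted $t$ in Frank--Lemm). Lemma~\ref{lem:KTcGS} supplies the non-degeneracy and $\hat{\mathfrak{a}}_0\in L^\infty$ needed for the coefficients to be finite, and Proposition~\ref{prop.KTDelta.geq.0} handles uniqueness of $\Gamma_*$.

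The one step to tighten is the last sentence of your argument: ``the free energy is a function of $h^2$, so the correction to the $h^4$ term is $O(h^6)$'' is a parity heuristic, not a proof — being a function of $h^2$ does not by itself give a smooth or even asymptotic expansion of that order. If the Frank--Lemm error is stated as $o(h^4)$ in their generality, the upgrade to $O(h^6)$ in the translation-invariant case should instead be obtained by noting that without external fields there is no semiclassical/gradient expansion in the block-diagonalization step of their proof, so that the remainder terms in their a~priori and a~posteriori estimates (their Sections~5--6) are controlled directly at order $h^6$ rather than only $o(h^4)$, not by an appeal to parity. Alternatively, one should simply check whether Frank--Lemm's theorem, as stated, already yields $O(h^6)$ under the present hypotheses (in particular $\hat{\mathfrak{a}}_0\in L^\infty$), in which case no extra argument is needed at all.
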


\begin{remark} \label{rmk:tcinv}
We emphasize that all error terms in the above proposition (and also the implicit constants hidden in $\hat{\mathfrak{a}}_0 \in L^\infty(\R^d)$) are \emph{not} uniform in $\lambda$. This crucially limits the applicability of our GL theory based method for temperatures slightly away from the critical one with $\mathrm{e}^{-c'/\lambda}\ll h\ll 1$  (cf.~the error bound in \eqref{eq:mainb}). Indeed, a careful examination of the proofs in \cite{Frank.Hainzl.ea.2012, Frank.Lemm.2016} reveals that the hidden dependencies on the critical temperature $T_c$ are at most inverse polynomially and hence exponential in $\lambda$, i.e.~$\ee^{c/\lambda}$ for some $c > 0$ (independent of $\lambda$ and $h$). 

\end{remark}

\subsubsection{Minimizing the Ginzburg-Landau functional} 
Given the inputs from GL theory, Theorem \ref{thm:main}(b) is based on the following Proposition~\ref{prop:GLmin}, 
the proof of which we postpone after finishing the proof of Theorem~\ref{thm:main}(b).

\begin{prop}\label{prop:GLmin}
	The (up to a phase unique) minimizer $\psi_{\rm GL}$ of the GL functional \eqref{eq:GLfunctional} satisfies 
	\begin{equation} \label{eq:psiGL}
|\psi_{\rm GL}| = C_{\rm univ} \,   \frac{T_c}{\Delta_0(\sqrt{\mu})} \big(1 + o_{\lambda \to 0}(1)\big) \,,
	\end{equation}
where $\Delta_0:= - 2 (2\pi)^{-d/2} \lambda \hat{V} \star \hat{\mathfrak{a}}_0$,\footnote{Note that $\Delta_0$ was denoted by $t$ in \cite{Frank.Hainzl.ea.2012, Frank.Lemm.2016}. We also remark that $\Delta_0(\sqrt{\mu}) \neq 0$ as follows from $\hat{\mathfrak{a}}_0 > 0$ (by Lemma \ref{lem:KTcGS}) and $\hat{V} \le 0$ (by Assumption \ref{ass:hatV neg}). } the constant $C_{\rm univ}$ is given in \eqref{eq:Cuniv}, and the error is uniform in $h$. 
\end{prop}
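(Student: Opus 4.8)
The plan is to minimize the Ginzburg--Landau functional \eqref{eq:GLfunctional} explicitly and then evaluate its two coefficients to leading order as $\lambda \to 0$, using the eigenvalue equation for $\mathfrak{a}_0$ to rewrite everything in terms of $\Delta_0$. Write $\mathcal{E}_{\rm GL}(\psi) = A|\psi|^4 - B|\psi|^2$, where $A,B>0$ are the two bracketed coefficients in \eqref{eq:GLfunctional}; their positivity is clear since $s\mapsto g_1(s)/s$ is strictly positive (recall \eqref{eq:g01}), $\hat{\mathfrak{a}}_0>0$ by \Cref{lem:KTcGS}, and $\Delta_0(\sqrt\mu)\neq 0$. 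Minimising the quartic $t\mapsto At^2 - Bt$ over $t=|\psi|^2\geq 0$ gives $|\psi_{\rm GL}|^2 = B/(2A)$ (the phase being irrelevant), and since none of $A$, $B$, $\psi_{\rm GL}$ depends on $h$, the asserted uniformity in $h$ is automatic. It thus remains to compute $B/(2A)$ to leading order as $\lambda \to 0$.

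The key simplification is that, since $\mathfrak{a}_0\in\ker(K_{T_c}+\lambda V)$, in Fourier space $K_{T_c}(p)\hat{\mathfrak{a}}_0(p) = -\lambda\widehat{V\mathfrak{a}_0}(p) = \tfrac12\Delta_0(p)$ (using the convolution theorem and the definition of $\Delta_0$). Hence $|K_{T_c}|^2|\hat{\mathfrak{a}}_0|^2 = \tfrac14|\Delta_0|^2$ and $|K_{T_c}|^4|\hat{\mathfrak{a}}_0|^4 = \tfrac1{16}|\Delta_0|^4$, so that
\begin{equation*}
A = \frac{1}{16 T_c^3}\int_{\R^d}\frac{g_1((p^2-\mu)/T_c)}{(p^2-\mu)/T_c}\,|\Delta_0(p)|^4\ud p, \qquad B = \frac{1}{8 T_c}\int_{\R^d}\frac{|\Delta_0(p)|^2}{\cosh^2\!\big((p^2-\mu)/(2T_c)\big)}\ud p.
\end{equation*}
Both weights localise the integrals to the shell $|p^2-\mu|\lesssim T_c$: the weight in $B$ decays exponentially, while $g_1(s)/s = O(s^{-3})$ decays polynomially, which together with the a priori bound $\Delta_0(p)\leq C\Delta_0(\sqrt\mu)$ and $d\leq 3$ makes the contribution of $|p^2-\mu|\gg T_c$ negligible relative to the main term. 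Passing to radial and angular coordinates and substituting $s=(p^2-\mu)/T_c$, the radial Jacobian gives $p^{d-1}\ud p \sim \tfrac12\mu^{d/2-1}T_c\,\ud s$ near the Fermi sphere; invoking a priori bounds on $\Delta_0$ of the type in \Cref{lem.Delta.a.priori.bdd}, specialised to $T=T_c$ and $\Delta=0$ (which give $\Delta_0(p)=\Delta_0(\sqrt\mu)(1+o(1))$ uniformly for $|p^2-\mu|\lesssim T_c$ as $\lambda\to0$), one obtains
\begin{equation*}
A = \frac{|\S^{d-1}|\mu^{d/2-1}}{32\,T_c^2}\,\Delta_0(\sqrt\mu)^4\int_\R\frac{g_1(s)}{s}\ud s\,\big(1+o(1)\big), \qquad B = \frac{|\S^{d-1}|\mu^{d/2-1}}{16}\,\Delta_0(\sqrt\mu)^2\int_\R\frac{\ud s}{\cosh^2(s/2)}\,\big(1+o(1)\big).
\end{equation*}

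Finally, $\int_\R\cosh^{-2}(s/2)\,\ud s = 4$, while the partial-fraction expansion $\tanh(z/2)/z = \sum_{n\geq 0}4/\big((2n+1)^2\pi^2+z^2\big)$ together with $g_1 = -\big(\tanh(\cdot/2)/\cdot\big)'$ yields $\int_\R\tfrac{g_1(s)}{s}\ud s = \tfrac{2}{\pi^2}\sum_{n\geq 0}(2n+1)^{-3} = \tfrac{7\zeta(3)}{2\pi^2}$ (cf.~\Cref{lem:Cuniv}). Substituting, $|\psi_{\rm GL}|^2 = B/(2A) = \tfrac{8\pi^2}{7\zeta(3)}\,T_c^2/\Delta_0(\sqrt\mu)^2\,(1+o(1)) = C_{\rm univ}^2\,T_c^2/\Delta_0(\sqrt\mu)^2\,(1+o(1))$, which is \eqref{eq:psiGL}. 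The only genuine work lies in the localisation step and the uniform-in-$\lambda$ replacement $\Delta_0(p)\rightsquigarrow\Delta_0(\sqrt\mu)$ inside the integrals, i.e.~establishing the boundedness and Lipschitz estimates for $\Delta_0$ at $T=T_c$ (the Lipschitz bound follows from $(1+|\cdot|)V\in L^2$ via $\||x|V\mathfrak{a}_0\|_{L^1}\leq\||x|V\|_{L^2}\|\mathfrak{a}_0\|_{L^2}$, exactly as in \Cref{lem.Delta.a.priori.bdd}); one should also note that no constant non-uniform in $\lambda$ (such as the error terms in \Cref{prop:GLtheory} or the implicit $L^\infty$-bound on $\hat{\mathfrak{a}}_0$ flagged in \Cref{rmk:tcinv}) enters here, since $A$, $B$ and $\psi_{\rm GL}$ are defined intrinsically in terms of $\mathfrak{a}_0$ and $T_c$ alone.
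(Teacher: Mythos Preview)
Your proof is correct and follows essentially the same route as the paper: explicitly minimise the quartic, use $K_{T_c}\hat{\mathfrak{a}}_0=\tfrac12\Delta_0$ to rewrite the two coefficients in terms of $\Delta_0$, replace $\Delta_0(p)$ by $\Delta_0(\sqrt\mu)$ using a priori (Lipschitz) bounds near the Fermi surface, and evaluate the resulting ratio of one-dimensional integrals to get $C_{\rm univ}$. The only cosmetic differences are that the paper packages the integral computation into a separate \Cref{lem:Cuniv} (appealing to a table for $\int_0^\infty(\tanh t/t^3 - 1/(t^2\cosh^2 t))\,\ud t$) whereas you compute $\int_\R g_1(s)/s\,\ud s$ directly via the Mittag--Leffler expansion of $\tanh(z/2)/z$, and that the paper invokes the dedicated \Cref{lem:Lipschitz} for the a priori bounds on $\Delta_0$ rather than \Cref{lem.Delta.a.priori.bdd} specialised to $T=T_c$ (the two arguments are analogous, as the paper itself notes).
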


The fact that $|\psi_0|>0$ approximately minimizes \eqref{eq:GLfunctional} (see \eqref{eq:approx minimum}), implies that (recalling Remark \ref{rmk:tcinv})
\begin{equation*}
	|\psi_0| = |\psi_{\rm GL}| +O\big(h \ee^{c/\lambda}\big)\,. 
\end{equation*}
Therefore, by means of \eqref{eq:approxmin} in combination with Proposition \ref{prop:GLmin}, we infer
\begin{equation*}
	\hat{\alpha}_* = C_{\rm univ} \, T_c \, h \,  \frac{\hat{\mathfrak{a}}_0}{\Delta_0(\sqrt{\mu})} \big( 1 + o_{\lambda \to 0}(1) + O\big(h \ee^{c/\lambda}\big)\big)+ \hat{\xi}\,,
\end{equation*}
and thus, after taking the convolution with $\lambda \hat{V}$, 
\begin{equation} \label{eq:Delta exp}
	\Delta = C_{\rm univ} \, T_c \, h  \, \frac{\Delta_0}{\Delta_0(\sqrt{\mu})} \big( 1 + o_{\lambda \to 0}(1) + O\big(h \ee^{c/\lambda}\big)\big)+ \Delta_\xi\,,
\end{equation}
where $\Delta> 0$ is the unique solution of the BCS gap equation \eqref{eq:gapeq} (see Proposition \ref{prop.KTDelta.geq.0}) and we denoted $\Delta_\xi := - 2 (2 \pi)^{-d/2} \lambda \hat{V} \star \hat{\xi}$.

\subsubsection{A priori bounds on \texorpdfstring{$\Delta_0$}{Delta0}}

For the proof of \Cref{thm:main}(b) we need some a priori bounds on $\Delta_0$ analogously to those of \Cref{sec.a.priori.Delta.(a)}.
The bounds follow from the following lemma, 
the proof of which is analogous to the argument of \Cref{sec.Birman--Schwinger.(a),sec.a.priori.Delta.(a)} and given in \Cref{sec.proof.lem:Lipschitz}.
\begin{lemma}[{c.f. \cite[Lemma 4]{Hainzl.Seiringer.2008} and \cite[Lemma 13]{Henheik.Lauritsen.2022}}] \label{lem:Lipschitz}
Let $\mathfrak{a}_0 \in H^1(\R^d)$ with $\hat{\mathfrak{a}}_0 > 0$ be the unique $H^1(\R^d)$-normalized ground state of $K_{T_c} + \lambda V$ from Lemma \ref{lem:KTcGS}. Moreover, let $u(p) = (|\Sph^{d-1}|)^{-1/2}$ be the constant function on the sphere $\Sph^{d-1}$ and let 
	\begin{equation} \label{eq:hatphi}
		\hat{\varphi}(p) = 	- \frac{1}{(2\pi)^{d/2}} \int_{\Sph^{d-1}} \hat{V}(p-\sqrt{\mu} q) \, \D \omega(q)\,.
	\end{equation}
	Then $\Delta_0= - 2 (2\pi)^{-d/2} \lambda \hat{V} \star \hat{\mathfrak{a}}_0$ can be expanded as
	\begin{equation} \label{eq:Delta0expand}
		\Delta_0(p) = f(\lambda) [\hat{\varphi}(p) + \lambda \eta_\lambda(p)]
	\end{equation}
	for some positive function $f(\lambda)$ and $\Vert \eta_\lambda \Vert_{L^\infty(\R^d)}$ bounded uniformly in $\lambda > 0$. 
\end{lemma}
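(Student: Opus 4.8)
The plan is to carry out the Birman--Schwinger analysis of Sections~\ref{sec.Birman--Schwinger.(a)} and~\ref{sec.a.priori.Delta.(a)}, but now at the critical temperature, i.e.\ with the pair $(T,\Delta)$ replaced by $(T_c,0)$. By \Cref{lem:KTcGS} the operator $K_{T_c}+\lambda V$ has lowest eigenvalue $0$ with (up to a phase) unique ground state $\mathfrak{a}_0$, which --- by non-degeneracy, the rotation invariance of $K_{T_c}+\lambda V$, and the positivity $\hat{\mathfrak{a}}_0>0$ --- is radial. By the Birman--Schwinger principle this is equivalent to $B_{T_c,0}:=\lambda V^{1/2}\frac{1}{K_{T_c}}|V|^{1/2}$ having $-1$ as its lowest eigenvalue, with eigenfunction $\phi=V^{1/2}\mathfrak{a}_0$. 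Writing $B_{T_c,0}=\lambda m(T_c,0)\,V^{1/2}\mfF_\mu^\dagger\mfF_\mu|V|^{1/2}+\lambda V^{1/2}M_{T_c,0}|V|^{1/2}$ with $m(T_c,0)$ as in \eqref{eqn.define.m(T,Delta)}, \Cref{lem.bdd.VMV} gives $\norm{V^{1/2}M_{T_c,0}|V|^{1/2}}_{\textnormal{HS}}\le C$, so $1+\lambda V^{1/2}M_{T_c,0}|V|^{1/2}$ is invertible for small $\lambda$ and the eigenvalue problem transfers to the sphere: the operator $S_{T_c,0}$ of \eqref{eqn.T.T.Delta.def} has lowest eigenvalue $-1$, with eigenfunction $\mfF_\mu V\mathfrak{a}_0$. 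Since $V\mathfrak{a}_0$ is radial, $\mfF_\mu V\mathfrak{a}_0$ is constant on $\Sph^{d-1}$, i.e.\ proportional to the function $u$ (equivalently: $u$ is the non-degenerate ground state of the leading term $\mcV_\mu$ by Perron--Frobenius, and $S_{T_c,0}$ commutes with rotations).

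Reversing the equivalence, since $u$ is the eigenfunction of $S_{T_c,0}$ of eigenvalue $-1$, the vector $\frac{1}{1+\lambda V^{1/2}M_{T_c,0}|V|^{1/2}}V^{1/2}\mfF_\mu^\dagger u$ is an eigenfunction of $B_{T_c,0}$ of eigenvalue $-1$, hence proportional to $\phi$. Using $\Delta_0=-2(2\pi)^{-d/2}\lambda\,\hat V\star\hat{\mathfrak{a}}_0=-2\lambda\,\mfF(V\mathfrak{a}_0)=-2\lambda\,\mfF|V|^{1/2}\phi$ and absorbing all numerical prefactors into a scalar $f(\lambda)$, this yields
\[
\Delta_0=f(\lambda)\,\mfF|V|^{1/2}\frac{1}{1+\lambda V^{1/2}M_{T_c,0}|V|^{1/2}}V^{1/2}\mfF_\mu^\dagger u.
\]
Expanding the resolvent via $\frac{1}{1+x}=1-\frac{x}{1+x}$ and computing the leading term, $\mfF|V|^{1/2}V^{1/2}\mfF_\mu^\dagger u=\mfF V\mfF_\mu^\dagger u=(2\pi)^{-d/2}|\Sph^{d-1}|^{-1/2}\int_{\Sph^{d-1}}\hat V(p-\sqrt\mu q)\,\D\omega(q)=-|\Sph^{d-1}|^{-1/2}\hat\varphi(p)$, and then absorbing the constant $-|\Sph^{d-1}|^{-1/2}$ into $f(\lambda)$, one arrives at the claimed form $\Delta_0(p)=f(\lambda)[\hat\varphi(p)+\lambda\eta_\lambda(p)]$, where $\eta_\lambda$ is $|\Sph^{d-1}|^{1/2}$ times the Fourier transform of $|V|^{1/2}\frac{V^{1/2}M_{T_c,0}|V|^{1/2}}{1+\lambda V^{1/2}M_{T_c,0}|V|^{1/2}}V^{1/2}\mfF_\mu^\dagger u$.

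Two routine bounds then finish the proof, exactly as in the proof of \Cref{lem.Delta.a.priori.bdd}. For $\norm{\eta_\lambda}_{L^\infty}\le C$ uniformly in $\lambda$: combine $\norm{\mfF g}_{L^\infty}\le(2\pi)^{-d/2}\norm{g}_{L^1}$ with Cauchy--Schwarz, $\norm{g}_{L^1}\le\norm{V}_{L^1}^{1/2}\norm{\tfrac{V^{1/2}M_{T_c,0}|V|^{1/2}}{1+\lambda V^{1/2}M_{T_c,0}|V|^{1/2}}V^{1/2}\mfF_\mu^\dagger u}_{L^2}$, where the operator norm of the middle factor is at most $\tfrac{C}{1-\lambda C}$ by \Cref{lem.bdd.VMV} and $\norm{V^{1/2}\mfF_\mu^\dagger u}_{L^2}^2\le\norm{V}_{L^1}\norm{\mfF_\mu^\dagger u}_{L^\infty}^2<\infty$ since $\mfF_\mu^\dagger u$ is a bounded function. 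For positivity of $f(\lambda)$: $\Delta_0\ge0$ and $\hat\varphi\ge0$ pointwise (both because $\hat V\le0$), and $\hat\varphi(p)>0$ for $|p|=\sqrt\mu$ (because $\hat V(0)<0$), so evaluating the expansion on $\{|p|=\sqrt\mu\}$ and using $\norm{\lambda\eta_\lambda}_{L^\infty}\to0$ forces $f(\lambda)>0$ for small $\lambda$. The only point demanding a little care --- just as in Section~\ref{sec.Birman--Schwinger.(a)} --- is the identification of the ground state of $S_{T_c,0}$ with $u$; here this is immediate from the radiality of $\mathfrak{a}_0$, and everything else is bookkeeping.
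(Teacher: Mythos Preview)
Your proof is correct and follows essentially the same approach as the paper's own proof, which is a brief sketch referring back to the Birman--Schwinger analysis of \Cref{sec.Birman--Schwinger.(a),sec.a.priori.Delta.(a)} specialized to $(T,\Delta)=(T_c,0)$. You have simply written out those details explicitly, and in addition supplied the short argument for the positivity of $f(\lambda)$ that the paper's sketch leaves implicit.
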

After realizing $\hat{\varphi}(\sqrt{\mu}) =  - e_\mu$ by \eqref{eq:emu}, we conclude for small enough $\lambda > 0$ that
\begin{equation*}
	\Delta_0(p) - \Delta_0(\sqrt{\mu}) = \frac{[\hat{\varphi}(p) - \hat{\varphi}(\sqrt{\mu})] + \lambda [ \eta_\lambda(p) - \eta_\lambda(\sqrt{\mu})]}{- e_\mu + \lambda \eta_\lambda(\sqrt{\mu})}\Delta_0(\sqrt{\mu})\,.
\end{equation*}
Now it is an easy computation to see $|\hat{\varphi}(p) - \hat{\varphi}(q)| \le C \min\big\{ ||p|-|q||, 1 \big\}$ for all $p,q \in \R^d$. Thus, 
\begin{equation} \label{eq:Lipschitzbound}
	\big|  \Delta_0(p) - \Delta_0(\sqrt{\mu})  \big| \le C \,  \left( \min\big\{ \big||p|-\sqrt{\mu}\big|, 1 \big\} + \lambda  \right) \, \Delta_0(\sqrt{\mu})\,. 
\end{equation}

\subsubsection{A priori bounds on \texorpdfstring{$\Delta_\xi$}{DeltaXi}}

For the following arguments, we need two estimates on $\Delta_\xi = - 2 (2 \pi)^{-d/2} \lambda \hat{V} \star \hat{\xi}$. 
\begin{itemize}
	\item First, it is a simple consequence of Young's inequality and $\Vert \hat{\xi} \Vert_{L^2}= \Vert {\xi} \Vert_{L^2} = O\big(h^2 \ee^{c/\lambda}\big)$, that 
	\begin{equation} \label{eq:Dxi Linf}
		\Vert \Delta_\xi \Vert_{L^\infty} = \Vert V \Vert_{L^2}  \, O\big(h^2 \ee^{c/\lambda}\big)\,.
	\end{equation}
\item Second, we note that $\Delta_\xi(p) - \Delta_\xi(q)$ is (proportional to) the Fourier transform of $ V(x) \big( 1 - \E^{\I (p-q)\cdot x} \big) \xi(x)$, and thus
\begin{equation*}
	\left\Vert V(x) \big( 1 - \E^{\I (p-q)\cdot x} \big) \right\Vert_{L^2}^2 = \int_{\R^d} |V(x)|^2 \left\vert 1 - \E^{\I (p-q)\cdot x} \right\vert^2 \D x \le C |p-q|^2 \int_{\R^d} |V(x)|^2 |x|^2 \D x\,. 
\end{equation*}
Using radiality of $\Delta$ and $\Delta_0$, we conclude the radiality of $\Delta_\xi$ and therefore 
\begin{equation} \label{eq:Dxi Lip}
	|\Delta_\xi(p) - \Delta_\xi(q)| \le \big| |p| - |q| \big| \, \left\Vert |\cdot| V \right\Vert_{L^2} \, O\big(h^2 \ee^{c/\lambda}\big)\,. 
\end{equation}
\end{itemize}
Recall that $\Vert (1 + |\cdot|) V \Vert_{L^2}< \infty$ by assumption.

\subsubsection{Comparing \texorpdfstring{$\Delta(\sqrt{\mu})$ and $\Xi$}{Delta(sqrt(mu)) and Xi}}
We aim at proving
\begin{equation} \label{eq:Xi=Delta}
	\Xi = \Delta(\sqrt{\mu}) \left( 1 + O\big(\lambda + h^2 \ee^{c/\lambda}\big)  \right)\,. 
\end{equation}
In order to see this, we note that clearly $\Xi = \sqrt{(p^2-\mu)^2 + |\Delta(p)|^2} \le \Delta(\sqrt{\mu})$. For the reverse inequality, let $p \in \R^d$ with $|p^2 - \mu | \le \Xi \le \Delta(\sqrt{\mu})$. Then 
\begin{equation*}
	|\Delta(p) - \Delta(\sqrt{\mu})| \le C \, T_c\,  h \big(  1 + o_{\lambda \to 0}(1) + O(h \ee^{c/\lambda})\big) \cdot \big( \big| |p| - \sqrt{\mu}\big| + \lambda \big)
  + C \big| |p| - \sqrt{\mu}\big| \, h^2 \ee^{c/\lambda} 
\end{equation*}
by application of \eqref{eq:Lipschitzbound} and \eqref{eq:Dxi Lip}. Using that $\Delta(\sqrt{\mu}) \sim T_c h$, as a consequence of \eqref{eq:Delta exp} for $h$ small enough (meaning $h \mathrm{e}^{c/\lambda} \ll 1$), we then conclude
\begin{equation*}
	|\Delta(p) - \Delta(\sqrt{\mu})|	\le  C  \, \big(\lambda + h^2 \ee^{c/\lambda}\big)  \,  \Delta(\sqrt{\mu})
\end{equation*}
In combination with the upper bound, this proves \eqref{eq:Xi=Delta}.

\subsubsection{Conclusion: Proof of Theorem \ref{thm:main}(b)}
We evaluate \eqref{eq:Delta exp} at $p = \sqrt{\mu}$, such that we find
\begin{equation*}
	\Xi = \left[C_{\rm univ} \, T_c \, h \, \big( 1 + o_{\lambda \to 0}(1) +O\big(h \ee^{c/\lambda}\big)\big) + O\big(h^2 \ee^{c/\lambda}\big)\right] \cdot \big( 1 + O\big(\lambda + h^2 \ee^{c/\lambda}\big)  \big)
\end{equation*}
with the aid of \eqref{eq:Dxi Linf} and \eqref{eq:Xi=Delta}. Collecting all the error terms leaves us with 
\begin{equation} \label{eq:Xi final}
	\Xi = C_{\rm univ} \, T_c \, h \big( 1 + o_{\lambda \to 0}(1) + O\big(h \ee^{c/\lambda}\big) \big)\,. 
\end{equation}
Hence, using $\fBCS(h) = C_{\rm univ} h + O(h^2)$, by Lemma \ref{lem:fBCS}, we arrive at Theorem~\ref{thm:main}(b). \qed
\subsubsection{Proof of Proposition \ref{prop:GLmin}}

In the following estimates, we use the shorthand notations (recall the definition of the auxiliary function $g_1$ from \eqref{eq:g01})
\begin{equation} \label{eq:f4f2}
f_4(p) := \frac{g_1((p^2 - \mu)/T_c)}{(p^2 - \mu)/T_c} \qquad f_2(p):= \frac{1}{\cosh^2\left( (p^2 - \mu) / (2 T_c)\right)}\,,
\end{equation}
such that the absolute value of the minimizer $\psi_{\rm GL}$ of \eqref{eq:GLfunctional} is given by
\begin{equation*}
|\psi_{\rm GL}| = T_c \left( \frac{\int_{\R^d} f_2(p)  |K_{T_c}(p)|^2 |\hat{\mathfrak{a}}_0(p)|^2  \D p}{4 \int_{\R^d} f_4(p) |K_{T_c}(p)|^4 |\hat{\mathfrak{a}}_0(p)|^4  \D p} \right)^{1/2} = 
T_c \left( \frac{\int_{\R^d} f_2(p)  |\Delta_0(p)|^2  \D p}{\int_{\R^d} f_4(p) |\Delta_0(p)|^4  \D p} \right)^{1/2}\,. 
\end{equation*}
We denoted $\Delta_0= - 2 (2\pi)^{-d/2} \lambda \hat{V} \star \hat{\mathfrak{a}}_0$ (as in Proposition \ref{prop:GLmin}) and used that $\mathfrak{a}_0 \in \ker\big(K_{T_c} + \lambda V\big)$. Note that, $\Delta_0 = |\Delta_0|$ by means of Proposition \ref{prop.KTDelta.geq.0} and $\hat{V} \le 0$ from Assumption \ref{ass:hatV neg}. 

Next, we add and subtract $|\Delta_0(\sqrt{\mu})|^2$ (resp.~$|\Delta_0(\sqrt{\mu})|^4$) in the integral in the numerator (resp.~denominator). The terms involving $|\Delta_0(\sqrt{\mu})|^j$ are evaluated as follows. 
\begin{lem}[Emergence of $C_{\rm univ}$ in GL theory] \label{lem:Cuniv}
	Let $\mu > 0$. In the limit $T_c/\mu  \to 0$ we have that (recall $C_{\rm univ}$ from \eqref{eq:Cuniv})
	\begin{equation} \label{eq:ratiotoCuniv}
			\left( \int_{\R^d} \frac{1}{\cosh^2\left(\frac{p^2 - \mu}{2 T_c}\right)} \D p \middle/ \int_{\R^d} \frac{g_1((p^2 - \mu)/T_c)}{(p^2 - \mu)/T_c} \D p\right)^{1/2} \longrightarrow \ C_{\rm univ}
	\end{equation}
	for all $d = 1,2,3$. 
\end{lem}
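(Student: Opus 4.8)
The plan is to compute each of the two integrals in \eqref{eq:ratiotoCuniv} asymptotically as $T_c/\mu \to 0$ by passing to the energy variable, and to show that the leading behavior of both is governed by the density of states on the Fermi surface, so that the ratio converges to the explicit constant. First I would change variables from $p \in \R^d$ to $s = (p^2-\mu)/T_c$, writing $\int_{\R^d} F((p^2-\mu)/T_c)\D p = \frac{|\Sph^{d-1}|}{2} \int_{-\mu/T_c}^\infty F(s) (T_c s + \mu)^{d/2-1} T_c \D s$. For $d=3$ the Jacobian factor $(T_c s + \mu)^{1/2}$ is unbounded, but the integrands $1/\cosh^2(s/2)$ and $g_1(s)/s$ both decay faster than any polynomial (indeed exponentially) as $|s|\to\infty$, so the integral localizes near $s=0$, where $(T_c s + \mu)^{d/2-1} = \mu^{d/2-1}(1 + O(T_c s/\mu))$. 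Hence each integral equals $\mu^{d/2-1} T_c \frac{|\Sph^{d-1}|}{2}$ times $\int_\R F(s)\D s$ up to a relative error $O(T_c/\mu)$, using dominated convergence after extending the lower limit $-\mu/T_c \to -\infty$ (again justified by the rapid decay). For $d=1,2$ the same argument works verbatim, the $d=1$ case needing only that the integrals $\int_\R s^{-2}(\ldots)$ remain finite, which they do since $1/\cosh^2$ and $g_1(s)/s$ are bounded near $0$.

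The common prefactor $\mu^{d/2-1} T_c \frac{|\Sph^{d-1}|}{2}$ cancels in the ratio, so it remains to evaluate the two dimensionless integrals. For the numerator, $\int_\R \cosh^{-2}(s/2)\D s = 2[\tanh(s/2)]_{-\infty}^\infty \cdot \ldots$; concretely, substituting $t=s/2$, $\int_\R \cosh^{-2}(s/2)\D s = 2\int_\R \cosh^{-2}(t)\D t = 4$. For the denominator, recall $g_1(z) = -g_0'(z)$ with $g_0(z) = \tanh(z/2)/z$ from \eqref{eq:g01}, so $g_1(s)/s = -g_0'(s)/s$, and I would use the known value $\int_\R \frac{g_1(s)}{s}\D s = \int_\R \frac{-g_0'(s)}{s}\D s = \frac{7\zeta(3)}{2\pi^2}$. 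This last integral identity is the one genuinely special-function computation: it can be obtained by writing $\tanh(s/2)/s$ via its Mittag–Leffler expansion $\tanh(s/2)/s = \sum_{k\ge 0} \frac{2}{s^2 + (2k+1)^2\pi^2}$, differentiating termwise, dividing by $s$, integrating, and summing the resulting series $\sum_{k\ge0}(2k+1)^{-3} = \frac{7}{8}\zeta(3)$ (or by an equivalent contour/Feynman-parameter argument). Putting these together gives $\big(4 \big/ \frac{7\zeta(3)}{2\pi^2}\big)^{1/2} = \big(\frac{8\pi^2}{7\zeta(3)}\big)^{1/2} = C_{\rm univ}$, matching \eqref{eq:Cuniv}.

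The main obstacle is not conceptual but computational: establishing $\int_\R g_1(s)/s \, \D s = 7\zeta(3)/(2\pi^2)$ cleanly, and making sure the interchange of sum and integral (or the contour deformation) is justified — the termwise-integrated series converges absolutely since $\int_\R \frac{\D s}{s^2 + (2k+1)^2\pi^2} \sim (2k+1)^{-1}$ combines with an extra $(2k+1)^{-2}$ from the derivative-over-$s$ structure. A secondary point requiring a line of care is the uniform domination needed to take $T_c/\mu \to 0$ inside the integrals for $d=3$: one bounds $(T_c s + \mu)^{1/2} \le \mu^{1/2}(1+|s|)^{1/2}$ for $T_c \le \mu$ on the relevant range, and $(1+|s|)^{1/2}\cosh^{-2}(s/2)$, $(1+|s|)^{1/2} g_1(s)/s$ are both integrable, so dominated convergence applies. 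Everything else is elementary, and I would relegate the routine estimates to a single displayed chain of inequalities rather than spelling them out.
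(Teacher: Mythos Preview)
Your approach is essentially the same as the paper's: both pass to the energy variable $s=(p^2-\mu)/T_c$ (the paper uses $t=s/2$ and splits at $p^2=\mu$), invoke dominated convergence to extract the common density-of-states prefactor $\mu^{d/2-1}T_c|\Sph^{d-1}|$, and then evaluate the two resulting dimensionless integrals --- the paper simply cites \cite{Gradshteyn.Ryzhik.2007} for the $\zeta(3)$ integral where you sketch the Mittag--Leffler argument. Your aside about the $d=1$ case requiring ``$\int_\R s^{-2}(\ldots)$ finite'' is garbled (the Jacobian there is $(T_c s+\mu)^{-1/2}$, singular at the moving lower endpoint $s=-\mu/T_c$ rather than at $s=0$), but this is harmless: the exponential decay of the integrands kills the contribution near that endpoint, so the domination still goes through after a trivial split.
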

\begin{proof}
	Since the integrands are both radial, we switch to spherical coordinates and neglect the common $|\mathbb{S}^{d-1}|$--factor in numerator and denominator. By splitting the remaining radial integration according to $p^2 \le \mu$ and $p^2 \ge \mu$ and changing the integration variables from $(p^2 - \mu)/2T_c$ to $-t$ resp.~$t$ we find the numerator of \eqref{eq:ratiotoCuniv} being equal to
	\begin{equation} \label{eq:numer}
		2 T_c\mu^{(d-2)/2} \left[\int_{0}^{\mu/2T_c} \hspace{-2mm}\big(1 - 2 \tfrac{T_c}{\mu} t\big)^{(d-2)/2}  +  \int_{0}^{\infty} \big(1 + 2 \tfrac{T_c}{\mu} t\big)^{(d-2)/2} \right] \left(\frac{1}{\cosh^2(t)}\right) \D t\,. 
	\end{equation}
	Similarly, we find the denominator of \eqref{eq:ratiotoCuniv} to equal 
	\begin{equation} \label{eq:denom}
		\frac{2 T_c\mu^{(d-2)/2}}{8} \left[\int_{0}^{\mu/2T_c} \hspace{-2mm}\big(1 - 2 \tfrac{T_c}{\mu} t\big)^{(d-2)/2}  +  \int_{0}^{\infty} \big(1 + 2 \tfrac{T_c}{\mu} t\big)^{(d-2)/2} \right]   \left(\frac{\tanh(t)}{t^3} - \frac{1}{t^2 \cosh^2(t)}\right) \D t\,. 
	\end{equation}
	We now take the ratio of \eqref{eq:numer} and \eqref{eq:denom} and send $T_c/\mu \to 0$. By means of the dominated convergence theorem (note that the integrand in \eqref{eq:denom} behaves as $t^{-3}$ for large $t$) we thus find the limit being given as the ratio of
	\begin{equation*}
		\int_{0}^{\infty} \frac{1}{\cosh^2(t)} \D t \qquad \text{and} \qquad \frac{1}{8}\int_{0}^{\infty} \left(\frac{\tanh(t)}{t^3} - \frac{1}{t^2 \cosh^2(t)}\right) \D t\,.
	\end{equation*}
	While the former is easily evaluated as being equal to one, the latter is given by $\frac{7 \zeta(3)}{8 \pi^2}$ (see, e.g., \cite[3.333.3]{Gradshteyn.Ryzhik.2007}). 
	This proves the claim. 
\end{proof}

With the aid of \eqref{eq:Lipschitzbound} and noting $f_j >0 $, the resulting differences (from adding and subtracting $|\Delta_0(\sqrt{\mu})|^j$) can be estimated as
\begin{equation*}
\left\vert\int_{\R^d} f_j(p)  \big(|\Delta_0(p)|^j -|\Delta_0(\sqrt{\mu})|^j\big)  \D p \right\vert \, \le \, C \, |\Delta_0(\sqrt{\mu})|^j \int_{\R^d} f_j(p) \left( \min\big( \big||p|-\sqrt{\mu}\big|, 1 \big) + \lambda\right) \D p
\end{equation*}
for $j = 2, 4$. These integrals can be treated analogously to \eqref{eq:numer} and \eqref{eq:denom} in Lemma \ref{lem:Cuniv} (for the $\big\vert |p| - \sqrt{\mu}\big\vert$-term, note that $f_j$ essentially concentrates around $|p| \approx \sqrt{\mu}$) and we find them to be smaller than the corresponding leading term $ \int_{\R^d} f_j(p) |\Delta_0(\sqrt{\mu})|^j \D p$ in the limit $\lambda \to 0$ (and hence $T_c \to 0$). Therefore, 
\begin{equation*}
	|\psi_{\rm GL}|  = 
\frac{T_c}{\Delta_0(\sqrt{\mu})} \left( \frac{\int_{\R^d} f_2(p)    \D p}{\int_{\R^d} f_4(p)  \D p} \right)^{1/2} \cdot\big(1 + o_{\lambda \to 0}(1)\big) = C_{\rm univ}	\,  \frac{T_c}{\Delta_0(\sqrt{\mu})} \cdot \big(1 + o_{\lambda \to 0}(1)\big) \,,
\end{equation*}
where we used Lemma \ref{lem:Cuniv} in the last step. As the GL functional \eqref{eq:GLfunctional} is entirely independent of the relative difference to the critical temperature $(T_c-T)/T_c = h^2$, it is clear that all the errors here hold uniform in the parameter $h$. This finishes the proof of Proposition~\ref{prop:GLmin}. \qed

\subsection{Pure angular momentum for \texorpdfstring{$d=2$}{d=2}: Proof of Theorem~\ref{thm:main2}} \label{subsec:proof2d angmom}

\subsubsection{Part (a)} \label{subsubsec:proof2d angmom parta}
The proof of \Cref{thm:main2}~(a) is mostly the same as that of \Cref{thm:main}~(a). 
We sketch the argument here, highlighting the few differences.

\paragraph{The operator \texorpdfstring{$\mcV_\mu$}{Vmu}.}
Using the Birman--Schwinger principle on the operator $K_{T_c} + \lambda V$ (as is done in \cite{frank.hainzl.naboko.seiringer,Hainzl.Seiringer.2008})
we find that, for sufficiently small $\lambda$, the lowest eigenvalue $e_\mu$ of $\mcV_\mu$ (recall \eqref{eq:Vmu}) is an eigenvalue for 
angular momentum $\ell_0$, since this is the angular momentum of the ground state(s) of $K_{T_c} + \lambda V$ by assumption.
Further, since $V$ is radial, the eigenfunctions of $\mcV_\mu$ all have a definite angular momentum.
In particular the first excited state has some angular momentum $\ell_1 \ne \ell_0$:
\begin{equation*}
e_\mu^{(1)} = \inf_{u\perp u_{\pm\ell_0}} \longip{u}{\mcV_\mu}{u} = \longip{u_{\pm\ell_1}}{\mcV_\mu}{u_{\pm\ell_1}},
\end{equation*}
with $u_{\pm\ell}(p) = (2\pi)^{-1/2}e^{\pm i\ell\varphi}$ the eigenfunctions of angular momentum $\ell$.
Here $\varphi$ denotes the angle of $p\in \R^2$ in polar coordinates.
Note that $e_\mu^{(1)}\leq 0$ since $\mcV_\mu$ is a compact operator on an infinite-dimensional space.

\paragraph{A priori spectral information.}
It is proved in \cite[Theorem 2.1]{Deuchert.Geisinger.ea.2018} that there exists a temperature $\tilde T$ such that for temperatures $\tilde T < T < T_c$
the minimizers of the BCS functional are given by 
\begin{equation*}
\hat \alpha_{\pm}(p) = e^{\pm i \ell_0 \varphi} \hat \alpha_0(p)
\end{equation*}
where $\varphi$ denotes the angle of $p\in \R^2$ in polar coordinates, $\hat\alpha_0$ is a radial function,
and $\ell_0$ is the angular momentum given by \Cref{eq:pureang}.
The BCS gap functions are then $\Delta_{\pm}(p) = \Delta_0(p) e^{\pm i \ell_0 \varphi}$, with $\Delta_0$ a radial function.\footnote{This should not be confused with the function $\Delta_0$ used in Section \ref{subsec:GLproof}.}
Further, we have $K_T^{\Delta_0} + \lambda V \geq 0$ for temperatures $T \in (\tilde T, T_c)$ \cite[Proposition 4.3]{Deuchert.Geisinger.ea.2018} and $\ker(K_T^{\Delta_0} + \lambda V) = \mathspan\{\alpha_+, \alpha_-\}$.

\paragraph{The temperature $\tilde T$.}
As discussed in \cite[Remark 2.6]{Deuchert.Geisinger.ea.2018} the temperature $\tilde T$ is given by $\tilde T = T_c(\ell_1)$, 
the critical temperature when restricted to angular momentum $\ell_1$.
Following the argument in \cite{frank.hainzl.naboko.seiringer} (see also \cite[Theorem 1]{hainzl.seiringer.2010}) we find 
\begin{equation*}
\tilde T \leq \begin{cases}
C e^{1/\lambda e_{\mu}^{(1)}} & e_{\mu}^{(1)} < 0, \\
C e^{-c/\lambda^2} & e_\mu^{(1)}=0.
\end{cases}
\end{equation*}
Recalling that $T_c \sim e^{1/\lambda e_\mu}$ and that $e_\mu < e_\mu^{(1)} \leq 0$ then clearly $\tilde T / T_c \leq C e^{-\mathfrak{c}/\lambda}$
for some $\mathfrak{c} > 0$.

\paragraph{Weak a priori bound on \texorpdfstring{$\Delta_\pm$}{Delta}.}
Exactly as in \Cref{eqn.weak.Delta.infty.a.priori} we have $\norm{\Delta_{\pm}}_{L^\infty}\leq C \lambda$.

\paragraph{Birman--Schwinger principle.}
Analogously to \Cref{sec.Birman--Schwinger.(a)} we have by the Birman--Schwinger principle that 
\begin{equation*}
B_{T,\Delta_0} = V^{1/2} \frac{1}{K_T^{\Delta_0}}|V|^{1/2}
\end{equation*}
has $-1$ as its lowest eigenvalue, only the eigenspace is spanned by the two vectors $\phi_\pm = V^{1/2}\alpha_\pm$.
By a completely analogous argument is in \Cref{sec.Birman--Schwinger.(a)} 
we find that 
\begin{equation*}
S_{T,\Delta_0} = \lambda m(T,\Delta_0) \mfF{}_\mu |V|^{1/2} \frac{1}{1 + \lambda V^{1/2} M_{T,\Delta_0}|V|^{1/2}} V^{1/2} \mfF{}_\mu^\dagger 
\end{equation*}
has $-1$ as its lowest eigenvalue with corresponding eigenspace spanned by $\mfF{}_\mu V \alpha_{\pm}$.

\paragraph{A priori bounds on \texorpdfstring{$\Delta_\pm$}{Delta}.}
Analogously to \Cref{lem.Delta.a.priori.bdd} we claim 
\begin{lemma} \label{lem:aprioripureell}
The functions $\Delta_{\pm}$ satisfy the bounds (with slight abuse of notation, recall that $\Delta_0$ is a radial function)
\begin{equation*}
\abs{\Delta_{\pm}(p)} \leq C |\Delta_0(\sqrt\mu)|,
\qquad 
\abs{\Delta_{\pm}(p) - \Delta_{\pm}(q)} \leq C |\Delta_0(\sqrt\mu)| |p-q|\,.
\end{equation*}
\end{lemma}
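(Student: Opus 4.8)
The plan is to transcribe the proof of \Cref{lem.Delta.a.priori.bdd} almost verbatim, with the constant function $u$ (the ground state of $\mcV_\mu$ in the $s$-wave case) replaced by the pair $u_{\pm\ell_0}(p)=(2\pi)^{-1/2}\E^{\pm\I\ell_0\varphi}$ spanning $\mathcal{S}_{\ell_0}$ (recall \eqref{eq:pureang}). First I would record the spectral input collected above: for $\tilde T<T<T_c$ the operator $K_T^{\Delta_0}+\lambda V$ is non-negative with kernel $\mathspan\{\alpha_+,\alpha_-\}$ \cite[Proposition 4.3]{Deuchert.Geisinger.ea.2018}, and by the Birman--Schwinger argument the operator $S_{T,\Delta_0}$ has lowest eigenvalue $-1$ with eigenspace spanned by $\mfF_\mu V\alpha_\pm$. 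Since $V$ is radial and $\Delta_0$ is radial, $K_T^{\Delta_0}$ is a radial Fourier multiplier, so $B_{T,\Delta_0}$, $M_{T,\Delta_0}$, $S_{T,\Delta_0}$ and $(1+\lambda V^{1/2}M_{T,\Delta_0}|V|^{1/2})^{-1}$ all commute with rotations; since $\mfF_\mu$ sends a function of angular momentum $\ell$ to a multiple of $\E^{\I\ell\varphi}$ on $\S^1$ and $\alpha_\pm$ has angular momentum $\pm\ell_0$, the vectors $\mfF_\mu V\alpha_\pm$ are nonzero multiples of $u_{\pm\ell_0}$, so the lowest eigenspace of $S_{T,\Delta_0}$ is exactly $\mathcal{S}_{\ell_0}$.

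Consequently, exactly as in \Cref{lem.Delta.a.priori.bdd}, the vectors $w_\pm:=(1+\lambda V^{1/2}M_{T,\Delta_0}|V|^{1/2})^{-1}V^{1/2}\mfF_\mu^\dagger u_{\pm\ell_0}$ are eigenvectors of $B_{T,\Delta_0}$ of eigenvalue $-1$; by rotation covariance they carry angular momentum $\pm\ell_0$, and because the $\pm\ell_0$-sectors of $\ker(B_{T,\Delta_0}+1)=\mathspan\{\phi_+,\phi_-\}$, $\phi_\pm=V^{1/2}\alpha_\pm$, are one-dimensional when $\ell_0\neq 0$ (the statement reducing to \Cref{lem.Delta.a.priori.bdd} when $\ell_0=0$), it follows that $w_\pm$ is proportional to $\phi_\pm$. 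Using $\Delta_\pm=-2\lambda\mfF V\alpha_\pm=-2\lambda\mfF|V|^{1/2}\phi_\pm$, expanding $\tfrac{1}{1+x}=1-\tfrac{x}{1+x}$, and invoking \Cref{lem.bdd.VMV} (whose proof carries over under the hypotheses of \Cref{thm:main2}), this gives, for scalars $f_\pm(\lambda)$,
\[
  \Delta_\pm(p)=f_\pm(\lambda)\Big(\mfF\big[V\mfF_\mu^\dagger u_{\pm\ell_0}\big](p)+\lambda\,\eta_\lambda^\pm(p)\Big),\qquad \norm{\eta_\lambda^\pm}_{L^\infty}\leq C,
\]
uniformly in small $\lambda$; the leading term is bounded because $\hat V\in L^\infty$ and $u_{\pm\ell_0}\in L^1(\S^1)$.

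It then remains to bound $f_\pm(\lambda)$. Since $\mfF_\mu V\mfF_\mu^\dagger=\mcV_\mu$, one has $\mfF[V\mfF_\mu^\dagger u_{\pm\ell_0}](\sqrt\mu\,p)=(\mcV_\mu u_{\pm\ell_0})(p)=e_\mu u_{\pm\ell_0}(p)$ for $p\in\S^1$, the modulus of which is the fixed positive constant $|e_\mu|(2\pi)^{-1/2}$ ($e_\mu<0$ by \Cref{ass:pureell}). Evaluating the displayed identity at $|p|=\sqrt\mu$, where $|\Delta_\pm(p)|=|\Delta_0(\sqrt\mu)|$ by radiality of $\Delta_0$, yields $|\Delta_0(\sqrt\mu)|\geq c\,|f_\pm(\lambda)|$ for $\lambda$ small, i.e.\ $|f_\pm(\lambda)|\leq C|\Delta_0(\sqrt\mu)|$; substituting back and using the two $L^\infty$-bounds gives $|\Delta_\pm(p)|\leq C|\Delta_0(\sqrt\mu)|$. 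For the Lipschitz bound, write $\Delta_\pm(p)=f_\pm(\lambda)(2\pi)^{-d/2}\int_{\R^d} g_\pm(x)\,\E^{-\I p\cdot x}\ud x$ with $g_\pm:=|V|^{1/2}(1+\lambda V^{1/2}M_{T,\Delta_0}|V|^{1/2})^{-1}V^{1/2}\mfF_\mu^\dagger u_{\pm\ell_0}\in L^1(\R^d)$, so that $|\Delta_\pm(p)-\Delta_\pm(q)|\leq|f_\pm(\lambda)|(2\pi)^{-d/2}|p-q|\,\norm{|\cdot|\,g_\pm}_{L^1}$, and estimate $\norm{|\cdot|\,g_\pm}_{L^1}\leq\norm{|\cdot|^2V}_{L^1}^{1/2}\,\norm{(1+\lambda V^{1/2}M_{T,\Delta_0}|V|^{1/2})^{-1}}\,\norm{V^{1/2}\mfF_\mu^\dagger u_{\pm\ell_0}}_{L^2}\leq C\norm{|\cdot|^2V}_{L^1}^{1/2}$ for small $\lambda$ using \Cref{lem.bdd.VMV}, the hypothesis $|\cdot|^2V\in L^1$, and $u_{\pm\ell_0}\in L^1(\S^1)\cap L^\infty(\S^1)$. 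Combining with $|f_\pm(\lambda)|\leq C|\Delta_0(\sqrt\mu)|$ gives the second bound.

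The \emph{genuinely new} point compared with \Cref{lem.Delta.a.priori.bdd} — and the step I would be most careful about — is the proportionality $w_\pm\propto\phi_\pm$: here $\ker(B_{T,\Delta_0}+1)$ is two-dimensional, so one cannot simply say it is spanned by a single vector as in the $s$-wave case, and instead one must combine rotation covariance of $B_{T,\Delta_0}$ with the a priori fact from \cite[Theorem 2.1]{Deuchert.Geisinger.ea.2018} (recorded above) that $\ker(K_T^{\Delta_0}+\lambda V)$ contains exactly one function of each angular momentum $\pm\ell_0$. All the remaining estimates are line-by-line copies of the $s$-wave argument.
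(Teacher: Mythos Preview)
Your proposal is correct and follows essentially the same route as the paper's proof: both replace the constant function $u$ by $u_{\pm\ell_0}$, use that $S_{T,\Delta_0}$ preserves angular momentum, derive the representation $\Delta_\pm(p)=f_\pm(\lambda)\big(\int_{\S^1}\hat V(p-\sqrt\mu q)u_{\pm\ell_0}(q)\ud\omega(q)+\lambda\eta_{\lambda\pm}(p)\big)$, evaluate on the Fermi surface to bound $|f_\pm(\lambda)|$ via $e_\mu<0$, and finish the Lipschitz estimate exactly as in \Cref{lem.Delta.a.priori.bdd}. Your treatment of the proportionality $w_\pm\propto\phi_\pm$ via rotation covariance and the angular-momentum decomposition of the two-dimensional kernel is in fact more explicit than the paper's, which only records that ``$S_{T,\Delta_0}$ preserves the angular momentum'' and then proceeds ``analogously''.
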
 
\begin{proof}
The proof is analogous to that of \Cref{lem.Delta.a.priori.bdd}.
First we note that $\mcV_\mu= \mfF_\mu V \mfF_\mu^\dagger$ has eigenfunctions of lowest eigenvalue $u_{\pm \ell_0}(p) = (2\pi)^{-1/2}e^{\pm i \ell_0 \varphi}$
and that the operator $S_{T,\Delta_0}$ preserves the angular momentum.
Analogously to the proof of \Cref{lem.Delta.a.priori.bdd} we find 
\begin{equation*}
\Delta_\pm(p) = f_\pm(\lambda) \left(\int_{\S^1} \hat V(p - \sqrt\mu q) u_{\pm\ell_0}(q) \ud \omega(q) + \lambda \eta_{\lambda \pm}(p)\right)
\end{equation*}
with $\norm{\eta_{\lambda\pm}}_{L^\infty}\leq C$ uniformly in $\lambda$.
Evaluating on the Fermi surface $\{p^2 = \mu\}$ we get (recall that $\inf\spec \mcV_\mu=e_\mu$) 
\begin{equation*}
\Delta_{\pm}(\sqrt{\mu}p/|p|) = f_{\pm}(\lambda) \left(e_\mu u_{\pm\ell_0}(p/|p|) + \lambda \eta_{\lambda \pm}(\sqrt{\mu}p/|p|)\right).
\end{equation*}
In particular, we conclude that $\abs{\Delta_0(\sqrt\mu)} = \abs{\Delta_{\pm}(\sqrt{\mu}p/|p|)} > 0$ for $\lambda$ small enough and that $\abs{f_{\pm}(\lambda)}\leq C \abs{\Delta_0(\sqrt\mu)}$.
We conclude the rest of the proof exactly as for \Cref{lem.Delta.a.priori.bdd}.
\end{proof}

The remaining parts of the argument 
(first order analysis of $m$, the exponential vanishing of $\Delta_\pm$, infinite order analysis of $m$, calculation of the integral $m(T,\Delta_0) - m(T_c,0)$
and comparing $\Delta_\pm$ on the Fermi surface with $\Xi$) 
are exactly as in \Cref{sec.first.order.(a),sec.infinite.order.(a),sec.exp.vanish.Delta,sec.calc.int.m-m,sec.compare.Xi.Delta.(a)}
only replacing $\Delta$ and $u$ by $\Delta_\pm$ and $u_{\pm\ell_0}$, respectively. 
This concludes the proof of Theorem~\ref{thm:main2}~(a).

\subsubsection{Part (b)} \label{subsubsec:partb}Again, we highlight only the main differences compared to the proof of Theorem~\ref{thm:main}~(b). 

\paragraph{Ginzburg-Landau functional} Since every function $\hat{\mathfrak{a}}_0$ in kernel of $K_{T_c} + \lambda V$ can be written (in polar coordinates) as
\begin{equation*}
\hat{\mathfrak{a}}_0(p,\varphi) = \hat{\rho}(p) \big[\psi_+ \mathrm{e}^{\mathrm{i} \ell_0 \varphi} + \psi_-  \mathrm{e}^{-\mathrm{i} \ell_0 \varphi}\big]
\end{equation*}
for an appropriate normalized $\hat{\rho} \in L^2((0,\infty); p \ud p)$ and $\psi_{\pm} \in \C$ by Assumption \ref{ass:pureell} (cf.~\eqref{eq:pureang}), the analog of the Ginzburg-Landau functional \eqref{eq:GLfunctional} becomes \cite[Theorems 2.10 and 3.5]{Frank.Lemm.2016}
\begin{equation} \label{eq:GLfunc2}
	\begin{split}
\mathcal{E}_{\rm GL}(\psi_+, \psi_-) = & \left[|\psi_+|^4 + |\psi_-|^4 + 4 |\psi_+|^2 |\psi_-|^2\right] \times \left[\frac{2\pi }{T_c^3}\int_{0}^{\infty} \frac{g_1((p^2 - \mu)/T_c)}{(p^2 - \mu)/T_c} |K_{T_c}(p)|^4 |\hat{\rho}(p)|^4 \, p \, \D p\right]  \\[1mm] 
& \hspace{5mm}- \left[|\psi_+|^2 + |\psi_-|^2\right] \times \left[\frac{\pi }{ T_c}\int_{0}^\infty \frac{1}{\cosh^2\left( (p^2 - \mu) / (2 T_c)\right)} |K_{T_c}(p)|^2 |\hat{\rho}(p)|^2\, p \, \D p \right] \,.
	\end{split}
\end{equation}

\paragraph{Minimizers of the GL functional}
In contrast to \eqref{eq:GLfunctional}, the functional \eqref{eq:GLfunc2} now has \emph{two} (up to a phase unique) minimizers. This follows from observing that $\mathcal{E}_{\rm GL}(\psi_+, \psi_-) = \mathcal{E}_{\rm GL}(\psi_-, \psi_+)$ and that one of the $\psi_{\pm}$ is necessarily zero for any minimizer of \eqref{eq:GLfunc2}. In fact, these minimizers are
\begin{equation*}
(|\psi_{\rm GL}|, 0) \quad \text{and} \quad (0, |\psi_{\rm GL}|) 
\end{equation*}
with $|\psi_{\rm GL}|$ given in \eqref{eq:psiGL} but with $\Delta_0:= - 2 (2\pi)^{-d/2} \lambda \hat{V} \star \hat{\rho}$, where $\hat{\rho}$ is understood as a radial function in $L^2(\R^2)$.\footnote{The fact that $\Delta_0(\sqrt{\mu}) \neq 0$ can be seen in a similar way as in the proof of Lemma \ref{lem:aprioripureell}.} Hence, using the notation from Proposition \ref{prop:GLtheory} (see also \cite[Theorem 2.10]{Frank.Lemm.2016}, which provides a general analog of \eqref{eq:approxmin}--\eqref{eq:approx minimum}, valid also for the concrete functional \eqref{eq:GLfunc2}) and \eqref{eq:Delta exp}, we find that (up to a constant phase) \emph{every} non-zero solution of the BCS gap equation \eqref{eq:gapeq} can be written as 
\begin{equation*}
	\Delta_\pm(p, \varphi) = C_{\rm univ} \, T_c \, h  \, \frac{\Delta_0(p)}{\Delta_0(\sqrt{\mu})} \mathrm{e}^{\pm \mathrm{i} \ell_0 \varphi}\big( 1 + o_{\lambda \to 0}(1) + O\big(h \ee^{c/\lambda}\big)\big)+ \Delta_\xi(p, \varphi)\,. 
\end{equation*}

The rest of the argument (a priori bounds on $\Delta_0(p) \mathrm{e}^{\pm \mathrm{i} \ell_0 \varphi}$ and $\Delta_\xi$, comparison of $|\Delta_\pm|$ on the Fermi surface with $\Xi$) works completely analogously to Section \ref{subsec:GLproof} with similar adjustments as explained in Section \ref{subsubsec:proof2d angmom parta}. This concludes the proof of Theorem \ref{thm:main2}~(b). \qed

\bigskip

\noindent \textit{Acknowledgments.}
We thank Andreas Deuchert, Christian Hainzl, Edwin Langmann, Marius Lemm, Robert Seiringer, and Jan Philip Solovej for helpful discussions, and Edwin Langmann and Robert Seiringer for valuable comments on an earlier version of the manuscript.
J.H.~gratefully acknowledges partial financial support by the ERC Advanced Grant ``RMTBeyond'' No.~101020331.
A.B.L. gratefully acknowledges partial financial support by the Austrian Science Fund (FWF) through project number~I~6427-N (as part of the SFB/TRR~352).

\appendix
\section{Additional proofs} \label{app:auxandadd}

\subsection{Uniqueness of the minimizer: Proof of Proposition \ref{prop.KTDelta.geq.0}} \label{app:Deltaunique}

Finally, we present the proof of Proposition \ref{prop.KTDelta.geq.0}.
\begin{proof}[Proof of Proposition \ref{prop.KTDelta.geq.0}] 
We remark that the argument has already partly been sketched in \cite{hainzl.seiringer.16, Deuchert.Geisinger.ea.2018}. 
The key observation for our proof is, that, if $\hat{V} \le 0$, then
	\begin{equation} \label{eq:keypoint}
		\braket{\hat{\alpha} \vert \hat{V} \star \hat{\alpha}} \ge \braket{\abs{\hat{\alpha}} \vert \hat{V} \star \abs{\hat{\alpha}}}\,. 
	\end{equation} 
	Let $(\hat{\gamma}, \hat{\alpha})$ minimize the BCS functional \eqref{eq:functional}. Then, by means of \eqref{eq:keypoint}, we have $\mathcal{F}_T[(\hat{\gamma}, \hat{\alpha})] \ge \mathcal{F}_T[(\hat{\gamma}, \abs{\hat{\alpha}}) ]$, hence also $(\hat{\gamma}, \abs{\hat{\alpha}})$ is a minimizer. Consequently, the (inverse) Fourier transform of $\abs{\hat{\alpha}}$ is an eigenvector of $K_T^\Delta + V$ with $\Delta= - 2 (2 \pi)^{-d/2}\hat{V} \star \abs{\hat{\alpha}}$ to the eigenvalue zero. Note that, using continuity of $\Delta$ and the BCS gap equation \eqref{eq:gapeq}, we not only have $\abs{\hat{\alpha}} \ge 0$ but also $\abs{\hat{\alpha}} > 0$ everywhere (see \cite[Lemma 2.1]{Hainzl.Seiringer.2008}). By the observation \eqref{eq:keypoint} again, we find that for any ground state $\hat{\alpha}_{\rm GS}$ of $K_T^\Delta + V$ also $\abs{\hat{\alpha}_{\rm GS}}$ is a ground state. But $\abs{\hat{\alpha}_{\rm GS}}$ is non--orthogonal to $\abs{\hat{\alpha}}$, which implies that zero has to be the lowest eigenvalue of $K_T^\Delta + V$, i.e.
	\begin{equation} \label{eq:ge0}
		K_T^\Delta + V \ge 0\,.
	\end{equation}
	By writing out \eqref{eq:keypoint}, we see that the inequality is actually an application of Cauchy-Schwarz and thus becomes strict, unless $\hat{\alpha}(p) = \E^{\I \phi} \abs{\hat{\alpha}(p)}$ for some fixed $\phi \in \R$. Therefore, by repeating the above arguments, we find that the ground state of \eqref{eq:ge0} is non--degenerate and we have proven item (i). 
	
	In order to prove item (ii), let $\Gamma_i \equiv (\gamma_i, \alpha_i)$, $i = 1,2$, be two (non--trivial) minimizers of the BCS functional \eqref{eq:functional} and denote the corresponding gap functions by $\Delta_1$ resp.~$\Delta_2$. We now apply the relative entropy identity (see \cite{Frank.Hainzl.ea.2012} and \cite[Prop.~5.2]{Frank.Lemm.2016}) and a simple trace inequality (see \cite[Lemma~3]{Frank.Hainzl.ea.2012} and \cite[Lemma~5.7]{Frank.Lemm.2016}) to find that 
	\begin{equation*}
		\mathcal{F}_T[\Gamma_1] - \mathcal{F}_T[\Gamma_2] \ge \longip{(\alpha_1 - \alpha_2)}{K_T^{\Delta_1} + V}{(\alpha_1 - \alpha_2)} \ge 0
	\end{equation*}
	(and the same inequality with indices $1$ and $2$ interchanged) by means of \eqref{eq:ge0}. Since $\mathcal{F}_T[\Gamma_1] = \mathcal{F}_T[\Gamma_2]$, this implies $(\alpha_1 - \alpha_2) \in \ker(K_T^{\Delta_1} + V)$ and thus $\alpha_2 = \psi_{21} \alpha_1$ for some $\psi_{21} \in \C\setminus \{0\}$ (recall from (i) that $\ker(K_T^{\Delta_1} + V)$ is one--dimensional). From this we conclude 
	\begin{equation*}
		\big(K_T^{\psi_{21}\Delta_1} + V\big) \alpha_1 = 0\,.
	\end{equation*}
	Now, the pointwise strict monotonicity of $\abs{\psi_{21}}\mapsto K_T^{\psi_{21}\Delta_1}(p)$ together with the fact that one can choose  
	$\abs{\hat{\alpha}_1}$ to be strictly positive, implies that $|\psi_{21}| = 1$ and we have shown uniqueness of minimizers up to a constant phase, which can be chosen in such a way that it ensures strict positivity of $\hat{\alpha}$. 
	Finally, it is shown in \cite[Proposition~2.9]{Deuchert.Geisinger.ea.2018} that if $\alpha$ is not radial, then \eqref{eq:ge0} is violated. Radiality of the corresponding $\gamma$ follows from \eqref{eq:ELeqgamma}. This finishes the proof. 
\end{proof}

\subsection{Proofs of technical lemmas within the proof of Theorem \ref{thm:main}}\label{sec.proofs.technical.lemmas}
This section contains the proofs of Lemmas \ref{lem.bdd.VMV}, \ref{lem.bdd.V.MTDelta.minus.MTc.V}, and \ref{lem:Lipschitz}.

\subsubsection{Proof of \texorpdfstring{\Cref{lem.bdd.VMV}}{Lemma~\ref*{lem.bdd.VMV}}}\label{sec.proof.lem.bdd.VMV}

The argument is slightly different in dimensions $d=1,2,3$. 
The case $d=3$ is similar to \cite{Hainzl.Seiringer.2008,frank.hainzl.naboko.seiringer} 
and the case $d=1,2$ is similar to \cite{Henheik.Lauritsen.ea.2023}.
\paragraph*{The case \texorpdfstring{$d=3$:}{d=3:}}
We write 
\begin{equation}
\label{eqn.decompose.VMV}
  V^{1/2} M_{T,\Delta} |V|^{1/2}
  = V^{1/2} \frac{1}{p^2}|V|^{1/2}
   + V^{1/2} \left(M_{T,\Delta} - \frac{1}{p^2} \chi_{|p| > \sqrt{2\mu}}\right) |V|^{1/2}
   - V^{1/2} \frac{1}{p^2}\chi_{|p| \leq \sqrt{2\mu}} |V|^{1/2}.
\end{equation}
The first term in \Cref{eqn.decompose.VMV} has kernel (proportional to)
\begin{equation}
\label{eqn.kernel.Vp^2V}
  V(x)^{1/2} \frac{1}{|x-y|} |V(y)|^{1/2} \in L^2(\R^3 \times \R^3)
\end{equation}
by the Hardy--Littlewood--Sobolev inequality \cite[Theorem 4.3]{analysis}.
The kernel of the second term in \Cref{eqn.decompose.VMV} is given by 
\begin{equation}\label{eqn.kernel.V(M-p^2)V}
\begin{aligned}
  V(x)^{1/2} |V(y)|^{1/2} 
  \frac{1}{(2\pi)^3} 
  & \Biggl[
  \int_{|p|< \sqrt{2\mu}} \frac{1}{K_T^\Delta(p)} \left( e^{ip(x-y)} - e^{i\sqrt{\mu}p/|p| (x-y)} \right) \ud p
  \\ & \hphantom{\Biggl[} \quad 
  +
  \int_{|p| > \sqrt{2\mu}} \left(\frac{1}{K_T^\Delta(p)} - \frac{1}{p^2}\right) e^{ip(x-y)} \ud p
  \Biggr].
\end{aligned}
\end{equation}
We compute the angular integral first. 
In the first term the integral is 
\begin{align} \label{eq:app1}
4\pi \int_{0}^{\sqrt{2\mu}} \frac{1}{K_T^\Delta(p)} \left[ \frac{\sin |p| |x-y|}{|p| |x-y|} - \frac{\sin \sqrt{\mu} |x-y|}{\sqrt{\mu} |x-y|}\right] |p|^2 \ud |p|.
\end{align}
Here we bound $\abs{ \frac{\sin a }{a} - \frac{\sin b }{b} } \leq C \frac{|a-b|}{a+b}$ for $a,b > 0$.
Thus 
we get the bound 
\begin{equation*}
\abs{\eqref{eq:app1}} \leq 
  C \int_0^{\sqrt{2\mu}} \frac{1}{K_T^\Delta(p)} \frac{\abs{p - \sqrt{\mu}}}{p + \sqrt{\mu}} p^2 \ud p
  \leq C \int_0^{\sqrt{2\mu}} \frac{1}{|p^2-\mu|} \frac{\abs{p - \sqrt{\mu}}}{p + \sqrt{\mu}} p^2 \ud p
  \leq C .  
\end{equation*}
In the second term the integral is (bounded by)
\begin{align*}
  4\pi \int_{\sqrt{2\mu}}^\infty \abs{\frac{1}{K_T^\Delta(p)} - \frac{1}{p^2}} \frac{|\sin |p| |x-y||}{|p| |x-y|} |p|^2 \ud |p|
  \leq \frac{4\pi}{|x-y|}\int_{\sqrt{2\mu}}^\infty \abs{\frac{1}{K_T^\Delta(p)} - \frac{1}{p^2}} p \ud p.
\end{align*}
To bound the remaining integral we bound 
\begin{equation*}
  \abs{\frac{1}{K_T^\Delta(p)} - \frac{1}{p^2}}
  \leq \frac{\abs{\tanh \frac{\sqrt{|p^2-\mu|^2 + \Delta(p)^2}}{2T} - 1}}{\sqrt{|p^2-\mu|^2 + \Delta(p)^2}} 
  +
  \abs{\frac{1}{\sqrt{|p^2-\mu|^2 + \Delta(p)^2}} - \frac{1}{|p^2-\mu|}}
  +
  \abs{\frac{1}{|p^2-\mu|} - \frac{1}{p^2}}.
\end{equation*}
Note first that $\abs{\tanh x - 1} \leq 2 e^{-2x}$. Thus, we have 
\begin{align*}
\int_{\sqrt{2\mu}}^\infty 
\frac{\abs{\tanh \frac{\sqrt{|p^2-\mu|^2 + \Delta(p)^2}}{2T} - 1}}{\sqrt{|p^2-\mu|^2 + \Delta(p)^2}} 
p \ud p
& \leq 2 \int_{\sqrt{2\mu}}^\infty e^{- \sqrt{|p^2-\mu|^2 + \Delta(p)^2}/T} \frac{1}{\sqrt{|p^2-\mu|^2 + \Delta(p)^2}} p \ud p
\\ & 
\leq 2 \int_{\sqrt{2\mu}}^\infty e^{- |p^2-\mu|/T} \frac{p}{|p^2-\mu|}  \ud p
\leq C T .
\end{align*}
Next, we estimate 
\begin{equation} \label{eq:app2}
	\begin{split}
    & \abs{\frac{1}{\sqrt{|p^2-\mu|^2 + \Delta(p)^2}} - \frac{1}{|p^2-\mu|}}
\\ & \quad = \frac{1}{|p^2-\mu|} \frac{\Delta(p)^2}{\sqrt{|p^2-\mu|^2 + \Delta(p)^2}\left(|p^2-\mu| + \sqrt{|p^2-\mu|^2 + \Delta(p)^2}\right)}
\\ & \quad  \leq 
\frac{1}{|p^2-\mu|} \frac{\norm{\Delta}_{L^\infty}^2 }{\sqrt{|p^2-\mu|^2 + \norm{\Delta}_{L^\infty}^2} 
	\left(|p^2-\mu| + \sqrt{|p^2-\mu|^2 + \norm{\Delta}_{L^\infty}^2}\right)}
\leq \frac{\norm{\Delta}_{L^\infty}^2}{2 |p^2-\mu|^3}\,,
	\end{split}
\end{equation}
using pointwise monotonicity in $\Delta(p)$. Thus, changing variables to $u  =p^2-\mu$ we have 
\begin{align*}
\int_{\sqrt{2\mu}}^\infty 
\abs{\frac{1}{\sqrt{|p^2-\mu|^2 + \Delta(p)^2}} - \frac{1}{|p^2-\mu|}}
p \ud p
& \leq \frac{1}{4} \norm{\Delta}_{L^\infty}^2 \int_\mu^\infty \frac{1}{u^3} \ud u \leq C \norm{\Delta}_{L^\infty}^2 . 
\end{align*}
Finally,
\begin{equation*}
  \int_{\sqrt{2\mu}}^\infty \abs{\frac{1}{|p^2-\mu|} - \frac{1}{p^2}} p \ud p \leq C.
\end{equation*}
We conclude that the kernel of the second term in \Cref{eqn.decompose.VMV} is bounded by 
\begin{equation*}
  |V(x)|^{1/2} \left(\frac{1 + T + \norm{\Delta}_{L^\infty}^2}{|x-y|} +1 \right) |V(y)|^{1/2} \in L^2(\R^3\times \R^3).
\end{equation*}
Finally, the last term of \Cref{eqn.decompose.VMV} has kernel 
\begin{equation}\label{eqn.kernel.Vp2chiV}
  4\pi V(x)^{1/2} |V(y)|^{1/2} \int_0^{\sqrt{2\mu}} \frac{\sin p|x-y|}{p|x-y|} \ud p
  \in L^2\left(\R^3\times \R^3\right)
\end{equation}
since the integral is bounded by $\sqrt{2\mu}$.

\paragraph*{The cases \texorpdfstring{$d=1$ and $d=2$:}{d=1 and d=2:}}
The kernel of $M_{T,\Delta}$ is given by 
\begin{equation}
\label{eqn.kernel.MTDelta.d=1.d=2}
  M_{T,\Delta}(x,y)
  = \frac{1}{(2\pi)^d} 
  \left[
  \int_{|p| < \sqrt{2\mu}} \frac{1}{K_T^\Delta} \left(e^{ip(x-y) } - e^{i\sqrt{\mu} p/|p| (x-y)}\right)
  +
  \int_{|p| > \sqrt{2\mu}} \frac{1}{K_T^\Delta} e^{ip(x-y)}
  \right]
\end{equation}
Now, one may bound 
$K_T^\Delta \geq |p^2-\mu|$ uniformly in $T,\Delta$.
Then we may bound $M_{T,\Delta}$ exactly as in \cite[Lemma 3.4]{Henheik.Lauritsen.ea.2023}.
That is, we have the bounds 
\begin{equation*}
  \norm{V^{1/2} M_{T,\Delta} |V|^{1/2}}_{\textnormal{HS}}^2
  \lesssim
  \begin{cases}
\norm{V}_{L^1}^2 + \norm{V}_{L^1} \int_\R |V(x)| \left[1 + \log ( 1 + \sqrt{\mu}|x|)\right]^2 \ud x
  & d=1,
  \\
  \norm{V}_{L^1}^2 + \norm{V}_{L^p}^2
  & d=2
  \end{cases}
\end{equation*}
for any $1 < p \leq 4/3$.
This concludes the proof of \Cref{lem.bdd.VMV}.
\qed

\subsubsection{Proof of \texorpdfstring{\Cref{lem.bdd.V.MTDelta.minus.MTc.V}}{Lemma~\ref*{lem.bdd.V.MTDelta.minus.MTc.V}}}
\label{sec.proof.lem.bdd.V.MTDelta.minus.MTc.V}
To bound the difference, first note that by computing the angular integrals we have 
\begin{equation}
\label{eqn.MTDelta.minus.MTc}
\begin{aligned}
  \left[M_{T,\Delta} - M_{T_c,0}\right](x,y)
    &
    = \frac{|\S^{d-1}|}{(2\pi)^d} 
    \Biggl[
    \int_{0}^{\sqrt{2\mu}} 
      \left(\frac{1}{K_T^\Delta} - \frac{1}{K_{T_c}}\right)
      \left(j_d(p|x-y|) - j_d(\sqrt{\mu}|x-y|)\right)
      p^{d-1}
      \ud p
  \\ & \hphantom{= \frac{|\S^{d-1}|}{(2\pi)^d} \Biggl[}
    +
    \int_{\sqrt{2\mu}}^\infty  
      \left(\frac{1}{K_T^\Delta} - \frac{1}{K_{T_c}}\right)
      j_d(p|x-y|)
      p^{d-1}
      \ud p
    \Biggr]
\end{aligned}
\end{equation}
where
\begin{equation*}
  j_d(x) = \frac{1}{|\S^{d-1}|}\int_{\S^{d-1}} e^{ix \omega} \ud \omega
  =
  \begin{cases}
  \cos x & d=1
  \\
  J_0(|x|)
  & d=2 \\
    \frac{\sin |x|}{|x|} & d=3
  \end{cases}
\end{equation*}
with $J_0$ the zero'th Bessel function.

Bounding \Cref{eqn.MTDelta.minus.MTc} 
is similar in spirit to the proof of \Cref{lem.bdd.VMV} above.
We bound 
\begin{equation*}
  \abs{\frac{1}{K_T^\Delta} - \frac{1}{K_{T_c}}}
  \leq 
  \frac{\abs{\tanh \frac{E_\Delta}{2T} -1}}{E_\Delta}
  +
  \abs{\frac{1}{E_\Delta} - \frac{1}{|p^2-\mu|}}
  +
  \frac{\abs{1 - \tanh \frac{|p^2-\mu|}{2T_c}}}{|p^2-\mu|}.
\end{equation*}
We bound the first term as follows 
\begin{equation*}
  \frac{\abs{\tanh \frac{E_\Delta}{2T} -1}}{E_\Delta}
  \leq 2 e^{-E_\Delta/T} \frac{1}{E_\Delta}
  \leq 2 e^{-|p^2-\mu|/T} \frac{1}{|p^2-\mu|}
  \leq 2 e^{-|p^2-\mu|/T_c} \frac{1}{|p^2-\mu|}.
\end{equation*}
Similarly, 
\begin{equation*}
    \frac{\abs{1 - \tanh \frac{|p^2-\mu|}{2T_c}}}{|p^2-\mu|}
\leq  2 e^{-|p^2-\mu|/T_c} \frac{1}{|p^2-\mu|}.
\end{equation*}
Finally, we estimate, exactly as in \eqref{eq:app2} in the course of proving \Cref{lem.bdd.VMV},
\begin{equation*}
\begin{aligned}
  \abs{\frac{1}{E_\Delta} - \frac{1}{|p^2-\mu|}}
  & \leq 
  \frac{1}{|p^2-\mu|} \frac{\norm{\Delta}_{L^\infty}^2 }{\sqrt{|p^2-\mu|^2 + \norm{\Delta}_{L^\infty}^2} 
  \left(|p^2-\mu| + \sqrt{|p^2-\mu|^2 + \norm{\Delta}_{L^\infty}^2}\right)}
\\ & \leq \frac{\norm{\Delta}_{L^\infty}^2}{2 |p^2-\mu|^3}\,.
\end{aligned}
\end{equation*}
We will use the first bound for the first integral in \Cref{eqn.Delta.bdd.Delta.sqrt.mu} and the second bound for the second integral in \Cref{eqn.Delta.bdd.Delta.sqrt.mu}.
Note further that 
$\frac{1}{\sqrt{x^2+A^2} 
  \left(x + \sqrt{x^2+A^2}\right)}$ is decreasing in $x$
and $|p^2-\mu| \geq \sqrt\mu |p - \sqrt\mu|$.
That is, we have the bound 
\begin{multline}
\label{eqn.bdd.1/K.minus.1/K}
  \abs{\frac{1}{K_T^\Delta} - \frac{1}{K_{T_c}}}
  \leq
  4 e^{-|p^2-\mu|/T_c} \frac{1}{|p^2-\mu|}
  + 
\chi_{|p| > \sqrt{2\mu}}
\frac{\norm{\Delta}_{L^\infty}^2}{2 |p^2-\mu|^3}
  \\
  + 
  \chi_{|p| < \sqrt{2\mu}}
  \frac{1}{\sqrt\mu|p-\sqrt{\mu}|} 
  \frac{\norm{\Delta}_{L^\infty}^2 }{\sqrt{\mu |p-\sqrt{\mu}|^2 + \norm{\Delta}_{L^\infty}^2} 
  \left(\sqrt\mu|p-\sqrt{\mu}| + \sqrt{\mu |p-\sqrt{\mu}|^2 + \norm{\Delta}_{L^\infty}^2}\right)}.
\end{multline}
In the first integral in \Cref{eqn.MTDelta.minus.MTc} we bound $|j_d(a) - j_d(b)| \leq C|a-b|$. 
Then the contribution of the first term of \Cref{eqn.bdd.1/K.minus.1/K} to the first integral in \Cref{eqn.MTDelta.minus.MTc}
is bounded by (changing variables to $s = \sqrt\mu|p-\sqrt\mu|/T_c$)
\begin{align*}
\int_0^{\sqrt{2\mu}} 
e^{-|p^2-\mu|/T_c} \frac{1}{|p^2-\mu|} |p-\sqrt\mu| |x-y| p^{d-1} \ud p
& \leq 
C |x-y|  \int_0^{\sqrt{2\mu}} 
e^{-\sqrt\mu|p-\sqrt\mu|/T_c}  \ud p
\\ 
& \leq 
 T_c  |x-y| \int_0^{\mu/T_c} e^{-s} \ud s
\leq C T_c  |x-y|.
\end{align*}
Next, the contribution of the last term of \Cref{eqn.bdd.1/K.minus.1/K} is bounded by 
(changing variables to $s = \sqrt\mu|p-\sqrt\mu|/\norm{\Delta}_{L^\infty}$)
\begin{align*}
& 
\int_0^{\sqrt{2\mu}} \hspace{-4mm}
\frac{\norm{\Delta}_{L^\infty}^2 }{\sqrt{\mu |p-\sqrt{\mu}|^2 + \norm{\Delta}_{L^\infty}^2} 
  \left(|p-\sqrt{\mu}|\sqrt\mu + \sqrt{\mu |p-\sqrt{\mu}|^2 + \norm{\Delta}_{L^\infty}^2}\right)}
  \frac{|p-\sqrt\mu| |x-y| p^{d-1}}{|p-\sqrt{\mu}|\sqrt\mu}\ud p
\\
& \quad \leq 
C \norm{\Delta}_{L^\infty}  |x-y| \int_0^{\mu/\norm{\Delta}_{L^\infty}} \frac{1}{\sqrt{s^2+1}(s + \sqrt{s^2-1})} \ud s
\leq C \norm{\Delta}_{L^\infty}  |x-y|. 
\end{align*}
Next we estimate the last integral of \Cref{eqn.MTDelta.minus.MTc}.
Here we note that $\abs{j_d(a)}\leq C$.
Then the contributions of the first and second term in \Cref{eqn.bdd.1/K.minus.1/K} to \Cref{eqn.MTDelta.minus.MTc}
is bounded by 
\begin{align*}
\int_{\sqrt{2\mu}}^\infty e^{-|p^2-\mu|/T_c} \frac{1}{|p^2-\mu|} p^{d-1} \ud p
& \leq C \int_{\sqrt{2\mu}}^\infty e^{-|p^2-\mu|/T_c} |p^2-\mu|^{d/2-2} p \ud p
\leq C T_c  e^{-\mu/T_c} 
\leq C T_c 
\\ \text{and} \qquad 
\int_{\sqrt{2\mu}}^\infty \frac{\norm{\Delta}_{L^\infty}^2}{|p^2-\mu|^3} p^{d-1} \ud p
& \leq C \norm{\Delta}_{L^\infty}^{2} 
\,.
\end{align*}
We conclude that (using $\norm{\Delta}_{L^\infty} \leq C T_c$)
\begin{align*}
\norm{V^{1/2} ( M_{T,\Delta} - M_{T_c,0}) |V|^{1/2}}_{\textnormal{HS}}^2
& \leq 
C T_c^2  \iint |V(x)| |V(y)| (\mu |x-y|^2 + 1) \ud x \ud y 
\\ & \leq 
C T_c^2  \left(\mu \norm{|\cdot|^2V}_1 \norm{V}_1 + \norm{V}_1^2\right)
\leq C e^{-c/\lambda} 
\end{align*}
by assumption on $V$. 
This finishes the proof of \Cref{lem.bdd.V.MTDelta.minus.MTc.V}.
\qed

\subsubsection{Proof of \texorpdfstring{\Cref{lem:Lipschitz}}{Lemma~\ref*{lem:Lipschitz}}}\label{sec.proof.lem:Lipschitz}
	The proof is very similar to the ones of \cite[Lemma~4]{Hainzl.Seiringer.2008} and \cite[Lemma~13]{Henheik.Lauritsen.2022}
  and follows from a Birman--Schwinger argument analogously to \Cref{sec.Birman--Schwinger.(a),sec.a.priori.Delta.(a)}.

  First of all, recall from Proposition \ref{prop.KTDelta.geq.0} (ii), that $K_{T_c} + \lambda V$ has $0$ as a (non-degenerate) ground state eigenvalue, which, 
	by the Birman--Schwinger principle, is equivalent to the fact that the Birman-Schwinger operator
$B_{T_c} := \lambda V^{1/2} K_{T_c}^{-1} |V|^{1/2}$
  has $-1$ as its (non-degenerate) ground state eigenvalue. 
  As in \Cref{sec.Birman--Schwinger.(a)}, defining $m(T_c) := m(T_c, 0)$ (recall \eqref{eqn.define.m(T,Delta)}), we decompose $B_{T_c}$ as 
  \begin{equation*}
    B_{T_c} = \lambda m(T_c) V^{1/2} \mathfrak{F}_\mu^\dagger \mathfrak{F}_\mu |V|^{1/2} + \lambda V^{1/2} M_{T_c} |V|^{1/2}\,,
  \end{equation*}
  where $M_{T_c}$ is such that this holds. It has been shown in \cite[Lemma 2]{frank.hainzl.naboko.seiringer} (for $d=3$) and \cite[Lemma~3.4]{Henheik.Lauritsen.ea.2023} (for $d=1,2$), 
  that the Hilbert-Schmidt norm $\Vert V^{1/2} M_{T_c} |V|^{1/2} \Vert_{\rm HS}$ of the second term is uniformly bounded for small $T_c$ (i.e.~small $\lambda$).

  Then, by an argument completely analogous to the one in the proof of Lemma \ref{lem.Delta.a.priori.bdd} in \Cref{sec.a.priori.Delta.(a)}
  we find that 
  $\Delta_0 = f(\lambda) \big[ \hat{\varphi}+ \lambda \eta_\lambda\big]$ with $\hat{\varphi}$ defined in \eqref{eq:hatphi}
  and $\eta_\lambda$ has $\norm{\eta_\lambda}_{L^\infty}\leq C$ uniformly in small $\lambda$ (cf.~\eqref{eq:Delta0expand}).
This concludes the proof of \Cref{lem:Lipschitz}.
\qed

\printbibliography
\end{document}